\newtheorem{lemma}{Lemma}
\newtheorem{defi}{Definition}
\newtheorem{thm}{Theorem}
\newtheorem{cor}{Corollary}
\newtheorem{fact}{Fact}
\begin{document}
\bibliographystyle{ieeetr}
\newcommand{\ring}[1]{\mathbb{Z}_{#1}}
\newcommand{\ringr}[1]{\mathbb{Z}_{p^r}^{#1}}
\newcommand{\mc}[1]{\mathcal{#1}}
\newcommand{\typset}[2]{A_{\epsilon_{#1}}^{n}(#2)}
\newcommand{\gf}[2]{\mathbb{F}_{#1}^{#2}}
\newcommand{\ringrb}{\mathbb{Z}_{p^b}^{r_b}}

\title{\LARGE \bf
Distributed Source Coding using Abelian Group Codes}

\author{Dinesh Krithivasan and S. Sandeep Pradhan\thanks{This work was
    supported by NSF grant (CAREER) CCF-0448115. Part of this work was presented in ISIT 2008 held at Toronto, Canada.}, \\
Department of Electrical Engineering and Computer Science, \\
University of Michigan, Ann Arbor, MI 48109, USA \\
email: {\tt\small dineshk@umich.edu, pradhanv@eecs.umich.edu}}


\maketitle \thispagestyle{empty} \pagestyle{plain}

\begin{abstract}
In this work, we consider a distributed source coding problem with a joint distortion criterion depending on the sources and the reconstruction. This includes as a special case the problem of computing a function of the sources to within some distortion and also the classic Slepian-Wolf problem \cite{slepian-wolf}, Berger-Tung problem \cite{berger-tung}, Wyner-Ziv problem \cite{wyner-ziv}, Yeung-Berger problem \cite{yeung-berger} and the Ahlswede-Korner-Wyner problem \cite{ahlswede-korner,wyner75}. While the prevalent trend in information theory has been to prove achievability results using Shannon's random coding arguments, using structured random codes offer rate gains over unstructured random codes for many problems. Motivated by this, we present a new achievable rate-distortion region (an inner bound to the performance limit) for this problem for discrete memoryless sources based on ``good'' structured random nested codes built over abelian groups. We demonstrate rate gains for this problem over traditional coding schemes using random unstructured codes. For certain sources and distortion functions, the new rate region is strictly bigger than the Berger-Tung rate region, which has been the best known achievable rate region for this problem till  now.  Further, there is no known unstructured random coding scheme that achieves these rate gains. Achievable performance limits for single-user source coding using abelian group codes are also obtained as parts of the proof of the main coding theorem. As a corollary, we also prove that nested linear codes achieve the Shannon rate-distortion bound in the single-user setting. Note that while group codes retain some structure, they are more general than linear codes which can only be built over finite fields which are known to exist only for certain sizes.

\end{abstract}

\section{Introduction} \label{sec:intro}

A large number of problems in multi-user information theory fall under the general setup of distributed source coding. The most general framework for a distributed source coding problem consists of a set of encoders which observe different correlated components of a vector source and communicate their quantized observations to a central decoder through a rate-constrained noiseless communication link. The decoder is interested in reconstructing these observations or some function of them to within some distortion as measured by a fidelity criterion. The goal is to obtain a computable single-letter characterization of the performance limits measured by the rates of transmission and the distortions achieved. Such a formulation finds wide applications in many areas of communications such as sensor networks and distributed computing.

There is a vast body of work that addresses this problem, and solutions have been obtained in a variety of special cases \cite{wyner75,ahlswede-korner,wyner-ziv,berger77,korner-marton,gelfand-pinsker,yamamoto,yamamoto-itoh,flynn-gray,yeung-berger,orlitsky-roche}\footnote{We have restricted our attention to discrete memoryless sources. There has been a lot of research activity in the literature in the case of continuous-alphabet sources. Those works are not included in the reference list for conciseness. Please see the references in our earlier work on Gaussian sources \cite{dinesh-pradhan} for a more complete list.}. All of the existing works use the following canonical encoding strategy. Each encoder has two operations, implemented sequentially,  each of which is a many-to-one mapping: (a) quantization and (b) binning. In quantization, typically, neighboring source sequences are assigned a codeword, whereas in binning, a widely separated set of codewords is assigned a single index which is transmitted over a noise-free channel to the decoder. The decoder looks for the most likely tuple of codewords, one from each source, and then obtains a reconstruction as a function of this tuple of codewords.

In most of these works, existence of good encoders and decoder is shown by using random vector quantization followed by random independent binning of the quantizer codebooks. The best known inner bound to the performance limit that uses this  approach is the Berger-Tung inner bound. It has been shown in the literature that this is optimal in several  cases. The work of Korner and Marton \cite{korner-marton}, however, is an exception and looks at a special case of the problem involving a pair of doubly symmetric binary sources and near lossless reconstruction of the sample-wise logical XOR function of the source sequences. They considered an encoding strategy where the first operation is an identity transformation. For the second operation, they consider random structured binning of the spaces of source sequences and show optimality.  Further, the binning of two spaces is done in a ``correlated'' fashion using a binary linear code.  In the present paper, we build on this work, and present a new achievable rate region for the general distributed source coding  problem and demonstrate an encoding scheme that achieves this rate region by using random coding on structured code ensembles. In this approach, we consider the case where the sources are stationary discrete memoryless and the reconstruction is with respect to a general single-letter fidelity criterion.  The novelty of our approach lies in an unified treatment of the problem that works for any arbitrary function that the decoder is interested in reconstructing. Further, our approach relies on the use of nested group codes for encoding. The binning operation of the encoders are done in a ``correlated'' manner as dictated by these structured codes. This use of ``structured quantization followed by correlated binning'' is in contrast to the more prevalent ``quantization using random codes followed by independent binning'' in distributed source coding. This approach unifies all the known results in distributed source coding such as the Slepian-Wolf problem \cite{slepian-wolf}, Korner-Marton problem \cite{korner-marton}, Wyner-Ahlswede-Korner problem \cite{wyner75,ahlswede-korner}, Wyner-Ziv problem \cite{wyner-ziv}, Yeung-Berger problem \cite{yeung-berger} and Berger-Tung problem \cite{berger-tung}, under a single framework while recovering their respective rate regions. Moreover, this approach performs strictly better than the standard Berger-Tung based approach for certain source distributions. As a corollary, we show that nested linear codes can achieve the Shannon rate-distortion function in the single source point-to-point setting.  A similar correlated binning strategy for reconstructing linear functions of jointly Gaussian sources with mean squared error criterion was presented in \cite{dinesh-pradhan}. The present work develops a similar framework based on group codes.

This rate region is developed using the following two new ideas. First, we use the fact that any abelian group is isomorphic to the direct sum of primary cyclic groups to enable the decomposition of the source into its constituent ``digits'' which are then encoded sequentially.  Second, we show that, although group codes may not approach the Shannon rate-distortion function in a single source point-to-point setting, it is possible to construct non-trivial group codes which contain a code that approaches it. Using these two ideas, we provide an all-group-code solution to the problem and characterize an inner bound to the performance limit using single-letter information quantities. We also demonstrate the superiority of this approach over the conventional coding approach based on unstructured random codes for the case of reconstructing the modulo-$2$ sum of correlated binary sources with Hamming distortion.

Special cases of the general problem of distributed source coding with joint distortion criterion have been studied before. The minimum rate at which a source $X$ must be transmitted for the decoder to enable lossless reconstruction of a bivariate function $F(X,Y)$ with perfect side information $Y$ was determined in \cite{orlitsky-roche}. The case when the two communicators are allowed to exchange two messages was also considered. A two terminal interactive distributed source coding problem where the terminals exchange potentially infinite number of messages with the goal of reconstructing a function losslessly was studied in \cite{ma-ishwar}. A similar problem of function computation from various sensor measurements in a wireless network was studied in \cite{giridhar-kumar} in the context of a packet collision model.

Prior Work on Group Codes: Good codes over groups have been studied extensively in the literature when the order (size) of the group is a prime which enables the group to have a field structure. Such codes over Galois fields have been studied for the purpose of packing and covering (see \cite{coveringcodes,conway-sloane} and the references therein). Two kinds of packing problems have received attention in the literature: a) combinatorial rigid packing  and b) probabilistic soft packing, i.e., achieving the capacity of symmetric channels. Similarly, covering problems have been studied in two ways: a) combinatorial complete covering and (b) probabilistic almost covering, i.e., achieving the rate-distortion function of symmetric sources with Hamming distortion. Some of the salient features of these two approaches have been studied in \cite{loeliger-ensemble}. In the following we give a sample of works in the direction of probabilistic packing and covering. Elias \cite{elias} showed that linear code achieve the capacity of binary symmetric channels. A reformulation of this result can be used to show \cite{korner-marton} that linear codes can be used to losslessly compress any discrete source down to its entropy. Dobrushin \cite{dobrushin63} showed that linear codes achieve the random coding error exponent while Forney and Barg \cite{forney-barg} showed that linear codes also achieve the expurgated error exponent. Further, these results have been shown to be true for almost all linear codes. Gallager \cite{gallager-book} shows that binary linear codes succeeded by a nonlinear mapping can approach the capacity of any discrete memoryless channel. It follows from Goblick's work \cite{goblick,delsarte,cohen} on the covering radius of linear codes that linear codes can be used to achieve the rate distortion bound for binary sources with Hamming distortion. Blinovskii \cite{blinovskii} derived upper and lower bounds on the covering radius of linear codes and also showed that almost all linear codes (satisfying rate constraints) are good source codes for binary sources with Hamming distortion. If the size of the finite field is sufficiently large, it was shown that in \cite{junchen} that linear codes followed by a nonlinear mapping can achieve the rate distortion bound of a discrete memoryless source with arbitrary distortion measure. Wyner \cite{wyner74} derived an algebraic binning approach to provide a simple derivation of the Slepian-Wolf \cite{slepian-wolf} rate region for the case of correlated binary sources. Csiszar \cite{csiszar82} showed the existence of universal linear encoders which attain the best known error exponents for the Slepian-Wolf problem derived earlier using nonlinear codes. In \cite{shamai-verdu-zamir,zamirmulti}, nested linear codes were used for approaching the Wyner-Ziv rate-distortion function for the case of doubly symmetric binary source and side information with Hamming distortion.  Random structured codes have been used in other related multiterminal communication problems \cite{nazer,zamir07,philosof-zamir} to get performance that is superior to that obtained by random unstructured codes. In \cite{muramatsu-miyake}, a coding scheme based on sparse matrices and ML decoding was presented that achieves the known rate regions for the Slepian-Wolf problem, Wyner-Ziv problem and the problem of lossless source coding with partial side information.

Codes over general cyclic groups were first studied by Slepian \cite{slepian-group} in the context of signal sets for the Gaussian channel. Forney \cite{forney91} formalized the concept of geometrically uniform codes and showed that many known classes of good signal space codes were geometrically uniform. Biglieri and Elia \cite{biglieri-elia1} addressed the problem of existence of group codes for the Gaussian channel as defined by Slepian. Forney and Loeliger \cite{forney93,loeliger96} studied the state space representation of group codes and derived trellis representations which were used to build convolutional codes over abelian groups. An efficient algorithm for building such minimal trellises was presented in \cite{vazirani-rajan}. Loeliger \cite{loeliger-signal} extended the concept of the $M$-PSK signal set matched to the $M$-ary cyclic group to the case of matching general signal sets with arbitrary groups. Building codes over abelian groups with good error correcting properties was studied in \cite{berman}. The distance properties of group codes have also been extensively studied. In \cite{forney92,biglieri-elia2,interlando-elia}, bounds were derived on the minimum distance of group codes and it was also shown that codes built over nonabelian groups have asymptotically bad minimum distance behavior. Group codes have also been used to build LDPC codes with good distance properties \cite{tanner}. The information theoretic performance limits of group codes when used as channel codes over symmetric channels was studied in \cite{como-fagnani}. Similar analysis for the case of turbo codes and geometrically uniform constellations was carried out in \cite{garin-fagnani}. In \cite{ahlswede71},  Ahlswede established the achievable capacity using group codes for several classes of channels and showed that in general, group codes do not achieve the capacity of a general discrete memoryless channel. Sharper results were obtained for the group codes capacity and their upper bounds in \cite{ahlswede-gemma1, ahlswede-gemma2}.

The paper is organized as follows. In Section \ref{sec:probdef}, we define the problem formally and present known results for the problem. In Section \ref{sec:groupintro}, we present an overview of the properties of groups in general and cyclic groups in particular that shall be used later on. We motivate our coding scheme in Section \ref{sec:motivation}. In Section \ref{sec:definitions}, we define the various concepts used in the rest of the paper. In Section \ref{sec:codingthm}, we present our coding scheme and present an achievable rate region for the problem defined in Section \ref{sec:probdef}. Section \ref{sec:specialcases} contains the various corollaries of the theorem presented in Section \ref{sec:codingthm}. These include achievable rates for lossless and lossy source coding while using group codes. We also present achievable rates using group codes for the problem of function reconstruction. Most of the proofs are presented in the appendix. In Section \ref{sec:examples}, we demonstrate the application of  our coding theorem to various problems. We conclude the paper with some comments in Section \ref{sec:conclusions}.

A brief overview of the notation used in the paper is given below. Random variables are denoted by capital letters such as $X,Y$ etc. The alphabet over which a discrete random variable $X$ takes values will be indicated by $\mc{X}$. The cardinality of a discrete set $\mc{X}$ is denoted by $|\mc{X}|$. For a random variable $X$ with distribution $p_X(\cdot)$, the set of all $n$-length strongly $\epsilon$-typical sequences are denoted by $A_{\epsilon}^n(X)$ \cite{csiszarbook}. On most occasions, the subscript and superscript are omitted and their values should be clear from the context. For a pair of jointly distributed random variables $X,Y$ with distribution $p_{X,Y}(\cdot,\cdot)$, the set of all $n$-length $y^n$-sequences jointly $\epsilon$-typical with a given $x^n$ sequence is denoted by the set $A_{\epsilon}^{n}(x^n)$.

\section{Problem Definition and Known Results} \label{sec:probdef}

Consider a pair of discrete random variables $(X,Y)$ with joint distribution $p_{XY}(\cdot,\cdot)$. Let the alphabets of the random variables  $X$ and $Y$ be $\mc{X}$ and $\mc{Y}$ respectively. The source  sequence $(X^n,Y^n)$ is independent over time and has the product distribution  $Pr((X^n,Y^n)=(x^n,y^n))=\prod_{i=1}^{n} p_{XY}(x_i,y_i)$. We consider the following distributed source coding problem. The two components of the source are observed by two encoders which do not communicate with each other. Each encoder communicates a compressed version of its input through a noiseless channel to a joint decoder. The decoder is interested in reconstructing the sources with respect to a general fidelity criterion. Let $\hat{\mc{Z}}$ denote the reconstruction alphabet, and the fidelity criterion is characterized by a mapping: $d: \mc{X} \times \mc{Y} \times \hat{\mc{Z}} \rightarrow \Bbb{R}^+$. We restrict our attention to additive distortion measures, i.e., the distortion among three $n$-length sequences $x^n$, $y^n$ and $\hat{z}^n$ is given by
\begin{align}
\hat{d}(x^n,y^n,\hat{z}^n) &\triangleq \frac{1}{n} \sum_{i=1}^{n} d(x_i,y_i,\hat{z}_i).
\end{align}

In this work, we will concentrate on the above distributed source coding problem (with one distortion constraint), and provide an information-theoretic inner bound to the optimal rate-distortion region.

\begin{defi} \label{defi:probdefi1}
Given a discrete source with joint distribution $p_{XY}(x,y)$ and a distortion function $d(\cdot,\cdot,\cdot)$, a transmission system with parameters $(n,\theta_1,\theta_2,\Delta)$ is defined by the set of mappings
\begin{equation}
f_1 \colon \mc{X}^n \rightarrow \{1,\dots,\theta_1\}, \qquad f_2 \colon \mc{Y}^n \rightarrow \{1,\dots,\theta_2\}
\end{equation}
\begin{equation}
g \colon \{1,\dots,\theta_1\} \times \{1,\dots,\theta_2\} \rightarrow \hat{\mc{Z}}^n
\end{equation}
such that the following constraint is satisfied.
\begin{align}
\Bbb{E} (\hat{d}(X^n,Y^n,g(f_1(X^n),f_2(Y^n)))) &\leq \Delta.
\end{align}
\end{defi}

\begin{defi} \label{defi:probdefi2}
We say that a tuple $(R_1,R_2,D)$ is achievable if $\forall \epsilon > 0$, $\exists$ for all sufficiently large $n$ a transmission system with parameters $(n,\theta_1,\theta_2,\Delta)$ such that
\begin{equation}
\frac{1}{n} \log \theta_i \leq R_i + \epsilon \quad \mbox{for }i = 1,2 \quad \Delta \leq D + \epsilon.
\end{equation}
The performance limit is given by the optimal rate-distortion region $\mathcal{RD}$ which is defined as the set of all achievable tuples $(R_1,R_2,D)$.
\end{defi}

We remark that this problem formulation is very general. For example, defining the joint distortion measure $d(X,Y,\hat{Z})$ as $d_1(F(X,Y),\hat{Z})$ enables us to consider the problem of lossy reconstruction of a function of the sources as a special case. Though we only consider a single distortion measure in this paper, it is straightforward to extend the results that we present here for the case of multiple distortion criteria. This implies that the problem of reconstructing the sources subject to two independent distortion criteria (the Berger-Tung problem \cite{berger-tung}) can be subsumed in this formulation with multiple distortion criteria. The Slepian-Wolf \cite{slepian-wolf} problem, the Wyner-Ziv problem \cite{wyner-ziv}, the Yeung-Berger problem \cite{yeung-berger} and the problem of coding with partial side information \cite{wyner75,ahlswede-korner} can also be subsumed by this formulation since they all are special cases of the Berger-Tung problem. The problem of remote distributed source coding  \cite{flynn-gray,viswanathan-berger-old}, where the encoders observe the sources through noisy channels, can also be subsumed in this formulation using the techniques of \cite{dobrushin-tsybakov,witsenhausen}. We shall see that our coding theorem has implications on the tightness of the Berger-Tung inner bound \cite{berger-tung}. The two-user function computation problem of lossy reconstruction of $Z = F(X,Y)$ can also be viewed as a special case of three-user Berger-Tung problem of encoding the correlated sources $(X,Y,Z)$ with three independent distortion criteria, where the rate of the third encoder is set to zero and the distortions of the first two sources are set to their maximum values.  We shall see in Section \ref{subsec:lossyxorex} that for this problem, our rate region indeed yields points outside the Berger-Tung rate region thus demonstrating that the Berger-Tung inner bound is not tight for the case of three or more sources.

An achievable rate region for the problem defined in Definitions \ref{defi:probdefi1} and \ref{defi:probdefi2} can be obtained based on the Berger-Tung coding scheme \cite{berger-tung} as follows. Let $\mc{P}$ denote the family of pair of conditional probabilities $(P_{U|X},P_{V|Y})$ defined on $\mc{X} \times \mc{U}$ and $\mc{Y} \times \mc{V}$, where $U$ and $V$ are finite sets. For any $(P_{U|X},P_{V|Y}) \in \mc{P}$, let the induced joint distribution be $P_{XYUV} = P_{XY} P_{U|X} P_{V|Y}$. $U,V$ play the role of auxiliary random variables. Define $G \colon \mc{U} \times \mc{V} \rightarrow \hat{\mc{Z}}$ as that function of $U,V$ that gives the optimal reconstruction $\hat{Z}$ with respect to the distortion measure $d(\cdot,\cdot,\cdot)$. With these definitions, an achievable rate region for this problem is presented below.

\begin{fact} \label{lemma:BTschemelemma}
For a given source $(X,Y)$ and distortion $d(\cdot,\cdot,\cdot)$ define the region $\mc{RD}_{BT}$ as
\begin{align}
\nonumber \mc{RD}_{BT} \triangleq \bigcup_{(P_{U|X},P_{V|Y}) \in
  \mc{P}} \left\{ R_1 \geq I(X;U|V), \, R_2 \geq I(Y;V|U), \, R_1 +
R_2 \geq I(XY;UV), \phantom{\sum_{i=1}^n \frac{1}{n}} \right. \\
\left. \phantom{\sum_{i=1}^n \frac{1}{n}} D \geq \Bbb{E}d(X,Y,G(U,V)) \right\} \phantom{aaaa}
\end{align}
Then any $(R_1,R_2,D) \in \mc{RD}_{BT}^{*}$ is achievable where $^*$ denotes convex closure\footnote{The cardinalities of $U$ and $V$ can be bounded using Caratheodary theorem \cite{csiszarbook}.}.
\end{fact}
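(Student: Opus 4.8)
\emph{Proof outline.} The plan is to prove this by the classical scheme of random quantization followed by random binning, with joint-typicality decoding. Fix a pair $(P_{U|X},P_{V|Y}) \in \mc{P}$ and the induced distribution $P_{XYUV} = P_{XY}P_{U|X}P_{V|Y}$, so that $U-X-Y-V$ is a Markov chain. Choose auxiliary rates $R_1' > I(X;U)$ and $R_2' > I(Y;V)$. Encoder $1$ draws a codebook $\mc{C}_U$ of $\lceil 2^{nR_1'} \rceil$ sequences i.i.d.\ $\sim \prod_i P_U(u_i)$ and assigns each uniformly at random to one of $\lceil 2^{nR_1}\rceil$ bins; encoder $2$ performs the analogous construction with $\mc{C}_V$, $R_2'$, $R_2$. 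On input $x^n$, encoder $1$ selects some $u^n \in \mc{C}_U$ with $(x^n,u^n)$ jointly $\epsilon$-typical (an error is declared if none exists) and transmits its bin index; encoder $2$ is symmetric. Given bin indices $(b_1,b_2)$, the decoder looks for the unique pair $(\hat u^n,\hat v^n)$ with $\hat u^n$ in bin $b_1$, $\hat v^n$ in bin $b_2$, and $(\hat u^n,\hat v^n)$ jointly $\epsilon$-typical, and then outputs $\hat z^n$ with $\hat z_i = G(\hat u_i,\hat v_i)$.

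The analysis would proceed in three parts. First, for the covering step, since $R_1' > I(X;U)$ and $R_2' > I(Y;V)$ the covering lemma ensures both encoders find a jointly typical codeword except with vanishing probability; conditioned on this, two applications of the Markov lemma (first with $U-X-Y$, then with $V-Y-(X,U)$) give $(X^n,Y^n,U^n,V^n) \in A_\epsilon^n$ with high probability. Second, for decoding, I would split the error into the events that $\hat u^n$ alone is incorrect, $\hat v^n$ alone is incorrect, or both are incorrect, and bound each by a union bound over the competing codewords in the relevant bins. The numbers of competitors are $\approx 2^{n(R_1'-R_1)}$, $\approx 2^{n(R_2'-R_2)}$, and $\approx 2^{n(R_1'-R_1+R_2'-R_2)}$, and the probability that an independently generated codeword (or pair) is jointly typical with the fixed correct one(s) is $\approx 2^{-nI(U;V)}$; hence these events vanish provided $R_1'-R_1 < I(U;V)$, $R_2'-R_2 < I(U;V)$, and $R_1'-R_1+R_2'-R_2 < I(U;V)$. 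Letting $R_1' \downarrow I(X;U)$ and $R_2' \downarrow I(Y;V)$ and using the Markov-chain identities $I(X;U)-I(U;V)=I(X;U|V)$, $I(Y;V)-I(U;V)=I(Y;V|U)$, and $I(X;U)+I(Y;V)-I(U;V)=I(XY;UV)$ turns these into exactly the three rate constraints in the statement. Third, on the joint success event the typical-average lemma gives $\hat d(X^n,Y^n,\hat z^n) \le \mathbb{E}\,d(X,Y,G(U,V)) + \epsilon'$, while on the failure event (of vanishing probability) the distortion is at most $d_{\max} \triangleq \max_{x,y,\hat z} d(x,y,\hat z) < \infty$; so the expected distortion, averaged over the random ensemble, is at most $D + \epsilon$ for $n$ large. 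Averaging then yields a deterministic $(n,\theta_1,\theta_2,\Delta)$ system satisfying Definition \ref{defi:probdefi2}. Taking the union over all $(P_{U|X},P_{V|Y}) \in \mc{P}$ is immediate; convexity follows from standard time-sharing; and the cardinality bounds on $\mc{U},\mc{V}$ follow from a Carath\'eodory argument applied to the constraint functionals.

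The main obstacle is the decoding-error analysis of the \emph{correlated} binning step. Unlike in the Slepian--Wolf problem, the two objects being binned, $U^n$ and $V^n$, are not observed directly but are themselves random outputs of the covering step, and their selections are coupled through the correlation between $X^n$ and $Y^n$. Making the competitor-counting rigorous requires care: one must show that, conditioned on the covering step succeeding, the selected pair $(U^n,V^n)$ may be treated --- for the purpose of bounding the chance that a codeword from another bin is jointly typical with it --- as a typical pair drawn from $P_{UV}$, and that the competing codewords are conditionally independent of it. This is exactly where the Markov lemma and the careful case split above are needed; by contrast, the covering and typical-average lemmas and the information-theoretic identities that rewrite the rate bounds into the symmetric form of the statement are routine.
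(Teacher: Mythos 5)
Your outline is correct and follows exactly the classical Berger--Tung argument (random covering at rates $R_i' > I(\cdot;\cdot)$, random binning, joint-typicality decoding with the three-way error split, and the Markov-chain identities to obtain the symmetric rate bounds), which is precisely the approach the paper invokes; the paper's own proof is simply the one-line citation ``Follows from the analysis of the Berger-Tung problem \cite{berger-tung} in a straightforward way.'' You also correctly flag the genuinely delicate point --- treating the selected $(U^n,V^n)$ as if drawn from $P_{UV}$ and the competing codewords as conditionally independent --- which is where the Markov lemma does its work in the cited analysis.
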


\begin{proof}[\textbf{Proof}:]
Follows from the analysis of the Berger-Tung problem \cite{berger-tung} in a straightforward way.
\end{proof}

\section{Groups - An Introduction} \label{sec:groupintro}

In this section, we present an overview of some properties of groups that are used later. We refer the reader to \cite{dummit-foote} for more details. It is assumed that the reader has some basic familiarity with the concept of groups. We shall deal exclusively with abelian groups and hence the additive notation will be used for the group operation. The group operation of the group $G$ is denoted by ${+}_G$. Similarly, the identity element of group $G$ is denoted by $e_G$. The additive inverse of $a \in G$ is denoted by $-a$. The subscripts are omitted when the group in question is clear from the context. A subset $H$ of a group $G$ is called a subgroup if $H$ is a group by itself under the same group operation $+_G$. This is denoted by $H < G$. The direct sum of two groups $G_1$ and $G_2$ is denoted by $G_1 \oplus G_2$. The direct sum of a group $G$ with itself $n$ times is denoted by $G^n$.

An important tool in studying the structure of groups is the concept of group homomorphisms.
\begin{defi}
Let $G,H$ be groups. A function $\phi \colon G \rightarrow H$ is called a homomorphism if for any $a,b \in G$
\begin{equation}
\phi(a +_G b) = \phi(a) +_H \phi(b).
\end{equation}
A bijective homomorphism is called an isomorphism. If $G$ and $H$ are isomorphic, it is denoted as $G \cong H$.
\end{defi}
A homomorphism $\phi(\cdot)$ has the following properties: $\phi(e_G) = e_H$ and $\phi(-a) = -\phi(a)$. The kernel $\ker(\phi)$ of a homomorphism is defined as $ \ker(\phi) \triangleq \{x \in G \colon \phi(x) = e_H\}$. An important property of homomorphisms is that they preserve the subgroup structure. Let $\phi \colon G \rightarrow H$ be a homomorphism. Let $A<G$ and $B<H$. Then $\phi^{-1}(B) < G$ and $\phi(A) < B$. In particular, taking $B = \{e_H\}$, we get that $\ker(\phi) < G$.

One can define a congruence result analogous to number theory using subgroups of a group. Let $H < G$. Consider the set $Ha = \{h+a: h \in H \}$. The members of this set form an equivalence class called the right coset of $H$ in $G$ with $a$ as the coset leader. The left coset of $H$ in $G$ is similarly defined. Since we deal exclusively with abelian groups, we shall not distinguish cosets as being left or right. All cosets are of the same size as $H$ and two different cosets are either distinct or identical. Thus, the set of all distinct cosets of $H$ in $G$ form a partition of $G$. These properties shall be used in our coding scheme.

It is known that a finite cyclic group of order $n$ is isomorphic to the group $\ring{n}$ which is the set of integers $\{0, \dots, n-1\}$ with the group operation as addition modulo-$n$. A cyclic group whose order is the power of a prime is called a primary cyclic group. The following fact demonstrates the role of primary cyclic groups as the building blocks of all finite abelian groups.

\begin{fact} \label{fact:Gdecompfact}
Let $G$ be a finite abelian group of order $n > 1$ and let the unique factorization of $n$ into distinct prime powers be $n = \prod_{i=1}^k p_i^{e_i}$. Then,
\begin{equation} \label{eq:Gdecompeq1}
G \cong A_1 \oplus A_2 \dots \oplus A_k \quad \mbox{where } |A_i| = p_i^{e_i}
\end{equation}
Further, for each $A_i, 1 \leq i \leq k$ with $|A_i| = p_i^{e_i}$, we have
\begin{equation} \label{eq:Gdecompeq2}
A_i \cong \ring{p_i^{h_1}} \oplus \ring{p_i^{h_2}} \dots \oplus \ring{p_i^{h_t}}
\end{equation}
where $h_1 \geq h_2 \dots \geq h_t$ and $\sum_{j=1}^{t} h_j = e_i$. This decomposition of $A_i$ into direct sum of primary cyclic groups is called the invariant factor decomposition of $A_i$. Putting equations (\ref{eq:Gdecompeq1}) and (\ref{eq:Gdecompeq2}) together, we get a decomposition of an arbitrary abelian group $G$ into a direct sum of possibly repeated primary cyclic groups. Further, this decomposition of $G$ is unique,i.e., if $G \cong B_1 \oplus B_2 \dots B_m$ with $|B_i| = p_i^{e_i}$ for all $i$, then $B_i \cong A_i$ and $B_i$ and $A_i$ have the same invariant factors.
\end{fact}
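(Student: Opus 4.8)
The plan is to establish the three assertions of the fact in order: the primary decomposition \eqref{eq:Gdecompeq1}, the decomposition of each primary component into primary cyclic groups \eqref{eq:Gdecompeq2}, and finally the uniqueness. This is the fundamental theorem of finite abelian groups, so the route is classical; I sketch the argument I would carry out.

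For \eqref{eq:Gdecompeq1}, I would define, for each prime $p_i$ dividing $n$, the set $A_i \triangleq \{x \in G \colon p_i^{e_i} x = e_G\}$. Since $G$ is abelian this is immediately a subgroup. To see $G = A_1 \oplus \cdots \oplus A_k$, fix $x \in G$; its order divides $n$, and writing $m_i = n/p_i^{e_i}$, the integers $m_1,\dots,m_k$ have gcd $1$ (each $p_i$ divides every $m_j$ with $j \neq i$ but not $m_i$), so Bézout's identity gives $c_1,\dots,c_k \in \mathbb{Z}$ with $\sum_i c_i m_i = 1$. Then $x = \sum_i (c_i m_i) x$, and each summand lies in $A_i$ because $p_i^{e_i}(c_i m_i x) = c_i\, n\, x = e_G$. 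This gives $G = A_1 + \cdots + A_k$. Directness follows since any element of $A_i \cap \big(\sum_{j\neq i} A_j\big)$ is annihilated both by a power of $p_i$ and by the coprime integer $\prod_{j\neq i} p_j^{e_j}$, hence equals $e_G$; an order count via Lagrange's theorem then forces $|A_i| = p_i^{e_i}$.

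For \eqref{eq:Gdecompeq2} I would induct on $|A_i|$ for a fixed $p$-group $A = A_i$. The crux is the lemma: if $a \in A$ has maximal order $p^{h_1}$ among all elements of $A$, then $\langle a \rangle$ is a direct summand, i.e. $A = \langle a \rangle \oplus A'$ for some subgroup $A'$. Granting this, the inductive hypothesis applied to $A'$ (strictly smaller order, all elements of order at most $p^{h_1}$ since $A' \le A$) yields $A' \cong \mathbb{Z}_{p^{h_2}} \oplus \cdots \oplus \mathbb{Z}_{p^{h_t}}$ with $h_1 \geq h_2 \geq \cdots \geq h_t$, and the exponents sum to $e_i$ by multiplicativity of orders. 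To prove the lemma I would take $\bar b \in A/\langle a\rangle$ of minimal order $p^s > 1$ (if the quotient is trivial, done), lift to $b \in A$, note $p b \in \langle a\rangle$, and use maximality of the order of $a$ to show the relevant coefficient is divisible by $p$, so that $b$ can be adjusted to $b' \notin \langle a\rangle$ with $p b' = e_G$; then $\langle a\rangle \cap \langle b'\rangle = \{e_G\}$, and passing to $A/\langle b'\rangle$ and invoking induction produces the complement $A'$. Combining \eqref{eq:Gdecompeq1} and \eqref{eq:Gdecompeq2} then gives the decomposition of $G$ into (possibly repeated) primary cyclic groups.

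For uniqueness, I would invoke numerical invariants intrinsic to the group. Given a $p$-group $A \cong \bigoplus_j \mathbb{Z}_{p^{h_j}}$, the subgroup $\{x \colon p^t x = e_G\}$ has cardinality $\prod_j p^{\min(t,h_j)}$, and equivalently $\dim_{\mathbb{F}_p}\big(p^{t}A / p^{t+1}A\big)$ equals the number of indices $j$ with $h_j > t$; since these quantities depend only on $A$, the multiset $\{h_j\}$ is determined by $A$. The primary components $A_i$ are themselves intrinsic — they are the $p_i$-torsion subgroups of $G$ — so any two decompositions have matching components and matching invariant factors, which is the stated uniqueness. The one genuinely delicate step is the direct-summand lemma: the choice of the lift $b$ and the verification that its order can be reduced without leaving $\langle a\rangle$ is exactly where the maximality of the order of $a$ is used essentially; everything else is bookkeeping with element orders and the Chinese remainder theorem.
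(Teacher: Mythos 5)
The paper does not prove this fact itself; it simply cites Dummit--Foote (Section 5.2, Theorem 5), and in that text the result is obtained as a corollary of the structure theorem for finitely generated modules over a principal ideal domain, proved in a later chapter via Smith normal form. You instead take the elementary, purely group-theoretic route: the primary decomposition via a B\'ezout identity on the coprime cofactors $n/p_i^{e_i}$, the ``cyclic subgroup generated by an element of maximal order is a direct summand'' lemma proved by lifting an order-$p$ element of the quotient and adjusting it by a multiple of the generator, and uniqueness read off from the intrinsic quantities $\dim_{\mathbb{F}_p}\bigl(p^{t}A/p^{t+1}A\bigr)$ together with the intrinsic characterization of the primary components as torsion subgroups. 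Both routes are classical and correct; yours is self-contained within group theory and well suited to a sketch of this kind, whereas the PID-module argument that the cited reference uses generalizes immediately to finitely generated modules over any PID. Your outline is accurate, and the one genuinely delicate step -- which you correctly identify -- is in the direct-summand lemma: maximality of the order of $a$ is used precisely to conclude $p \mid m$ when $pb = ma$, so that $b' = b - (m/p)a$ has order exactly $p$ and lies outside $\langle a\rangle$, after which passing to $A/\langle b'\rangle$ and inducting produces the complement.
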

\begin{proof}[\textbf{Proof}:]
See \cite{dummit-foote}, Section $5.2$, Theorem $5$.
\end{proof}

For example, Fact \ref{fact:Gdecompfact} implies that any abelian group of order $8$ is isomorphic to either $\ring{8}$ or $\ring{4} \oplus \ring{2}$ or to $\ring{2} \oplus \ring{2} \oplus \ring{2}$ where $\oplus$ denotes the direct sum of groups. Thus, we first consider the coding theorems only for the primary cyclic groups $\ringr{}$. Results obtained for such groups are then extended to hold for arbitrary abelian groups through this decomposition. Suppose $G$ has a decomposition $G \cong \ring{p_1^{e_1}} \oplus \dots \oplus \ring{p_r^{e_r}}$ where $p_1 \geq \dots \geq p_r$ are primes. A random variable $X$ taking values in $G$ can be thought of as a vector valued random variable $X = (X_1,\dots,X_r)$ with $X_i$ taking values in the cyclic group $\ring{p_i^{e_i}}, 1 \leq i \leq r$. $X_i$ are called the digits of $X$.

We now present some properties of primary cyclic groups that we shall use in our proofs. The group $\ring{m}$ is a commutative ring with the addition operation being addition modulo-$m$ and the multiplication operation being multiplication modulo-$m$. This multiplicative structure is also exploited in the proofs. The group operation in $\ring{m}^n$ is denoted by $u_1^n + u_2^n$. Addition of $u_1^n$ with itself $k$ times is denoted by $k u_1^n$. The multiplication operation between elements $x$ and $y$ of the underlying ring $\ring{m}$ is denoted by $xy$. We shall say that $x \in \ring{m}$ is invertible if there exists $y \in \ring{m}$ such that $xy = 1$ where $1$ is the multiplicative identity of $\ring{m}$. The multiplicative inverse of $x \in \ring{m}$, if it exists, is denoted by $x^{-1}$. The additive inverse of $u_1^n \in \ring{m}^n$ which always exists is denoted by $-u_1^n$. The group operation in the group $\ring{m}$ is often explicitly denoted by $\oplus_m$.

We shall build our codebooks as kernels of homomorphisms from $\ringr{n}$ to $\ringr{k}$. Justification for restricting the domain of our homomorphisms to $\ringr{n}$ comes from the decomposition result of Fact \ref{fact:Gdecompfact}. The reason for restricting the image of the homomorphisms to $\ringr{k}$ shall be made clear later on (see the proof of Lemma \ref{lemma:uniflem}). We need the following lemma on the structure of homomorphisms from $\ringr{n}$ to $\ringr{k}$.
\begin{fact}
Let $\mbox{Hom}(\ringr{n},\ringr{k})$ be the set of all homomorphisms from the group $\ringr{n}$ to $\ringr{k}$ and $M(k,n,\ringr{})$ be the set of all $k \times n$ matrices whose elements take values from the group $\ringr{}$. Then, there exists a bijection between $\mbox{Hom}(\ringr{n},\ringr{k})$ and $M(k,n,\ringr{})$ given by the invertible mapping $f \colon \mbox{Hom}(\ringr{n},\ringr{k}) \rightarrow M(k,n,\ringr{})$ defined as $f(\phi) = \Phi$ such that $\phi(x^n) = \Phi \cdot x^n$ for all $x^n \in \ringr{n}$. Here, the multiplication and addition operations involved in the matrix multiplication are carried out modulo-$p^r$.
\end{fact}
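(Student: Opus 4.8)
The plan is to exhibit an explicit two-sided inverse of $f$ and verify the two compositions; equivalently, to check that $f$ is well defined, injective and surjective. Throughout I would work with the standard generating set $e_1,\dots,e_n$ of $\ringr{n}$, where $e_i$ has a $1$ in coordinate $i$ and $0$ elsewhere, so that every $x^n=(x_1,\dots,x_n)\in\ringr{n}$ can be written $x^n=\sum_{i=1}^n x_i e_i$, each coefficient $x_i\in\ringr{}$ acting by iterated addition of $e_i$.

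The first step is to show that every homomorphism arises from a matrix. Given $\phi\in\mbox{Hom}(\ringr{n},\ringr{k})$, let $\Phi$ be the $k\times n$ matrix whose $i$-th column is $\phi(e_i)\in\ringr{k}$. Using only $\phi(a+b)=\phi(a)+\phi(b)$, I would prove by induction that $\phi(c\,a)=c\,\phi(a)$ for every nonnegative integer $c$, and observe that this is consistent modulo $p^r$ since $p^r e_i=e$ in $\ringr{n}$ forces $p^r\phi(e_i)=e$, so the value $x_i\,\phi(e_i)$ does not depend on the representative chosen for $x_i$. Then $\phi(x^n)=\sum_{i=1}^n x_i\,\phi(e_i)=\Phi\cdot x^n$ for all $x^n$, all arithmetic being modulo $p^r$, which establishes that a matrix representing $\phi$ exists. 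Uniqueness follows by evaluating $\Phi\cdot x^n=\Phi'\cdot x^n$ at $x^n=e_i$, which forces column $i$ of $\Phi$ and $\Phi'$ to agree for every $i$. Hence $f$ is well defined, and it is injective because $f(\phi)=f(\psi)$ gives $\phi(x^n)=\psi(x^n)$ for all $x^n$.

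The second step is surjectivity. Given $\Phi\in M(k,n,\ringr{})$, define $\phi_\Phi(x^n)=\Phi\cdot x^n$; distributivity of matrix-vector multiplication over coordinatewise addition modulo $p^r$, namely $\Phi\cdot(x^n+y^n)=\Phi\cdot x^n+\Phi\cdot y^n$, shows $\phi_\Phi$ is a homomorphism, and by the uniqueness from the first step $f(\phi_\Phi)=\Phi$. Thus $\Phi\mapsto\phi_\Phi$ is a two-sided inverse of $f$, so $f$ is a bijection.

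The argument is essentially bookkeeping and I do not expect a substantial obstacle; the one point that deserves a moment of care is the claim in the first step that an additive map automatically respects multiplication by ring elements of $\ringr{}$. This holds precisely because the $\ringr{}$-scalar multiplication on $\ringr{n}$ and $\ringr{k}$ is nothing but iterated group addition, so the homomorphism property already encodes it, and no extra hypothesis is needed.
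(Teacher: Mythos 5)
Your proof is correct and complete; the paper itself supplies no argument for this fact, referring the reader to Kurosh, Section VI, and your argument is the standard one (evaluate on the generating set $e_1,\dots,e_n$, observe that a group homomorphism of $\ringr{}$-modules automatically respects scalar action since that action is iterated addition, and check consistency modulo $p^r$ via $p^r e_i = 0 \Rightarrow p^r\phi(e_i)=0$). Nothing to flag.
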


\begin{proof}[\textbf{Proof}:]
See \cite{kurosh}, Section VI.
\end{proof}

\section{Motivation of the Coding Scheme} \label{sec:motivation}

In this section, we present a sketch of the ideas involved in our coding scheme by demonstrating them for the simple case when the sources are binary. The emphasis in this section is on providing an overview of the main ideas and the exposition is kept informal. Formal definitions and theorems follow in subsequent sections. We first review the linear coding strategy of \cite{korner-marton} to reconstruct losslessly the modulo-$2$ sum of $Z = X \oplus_2 Y$ of the binary sources $X$ and $Y$. We then demonstrate that the Slepian-Wolf problem can be solved by a similar coding strategy. We generalize this coding strategy for the case when the fidelity criterion is such that the decoder needs to losslessly reconstruct a function $F(X,Y)$ of the sources. This shall motivate the problem of building ``good'' channel codes over abelian groups. We then turn our attention to the lossy version of the problem where the sources $X$ and $Y$ are quantized to $U$ and $V$ respectively first. For this purpose, we need to build ``good'' source codes over abelian groups. Then, encoding is done in such a way that the decoder can reconstruct $G(U,V)$ which is the optimal reconstruction of the sources with respect to the fidelity criterion $d(\cdot,\cdot,\cdot)$ given $U,V$. This shall necessitate the need for ``good'' nested group codes where the coarse code is a good channel code and the fine code is a good source code. These concepts shall be made precise later on in Sections \ref{sec:definitions} and \ref{sec:codingthm}.

\subsection{Lossless Reconstruction of the Modulo-2 Sum of the Sources} \label{subsec:kmintro}
This problem was studied in \cite{korner-marton} where an ingenious coding scheme involving linear codes was presented. This coding scheme can be understood as follows. It is well known \cite{wyner74} that linear codes can be used to losslessly compress a source down to its entropy. Formally, for any binary memoryless source $Z$ with distribution $p_Z(z)$ and any $\epsilon > 0$, there exists a $k \times n$ binary matrix $A$ with $\frac{k}{n} \leq H(Z) + \epsilon$ and a function $\psi$ such that
\begin{equation} \label{eq:Amatdef}
P(\psi(Az^n) \neq z^n) < \epsilon
\end{equation}
for all sufficiently large $n$. Let $Z = X \oplus_2 Y$ be the modulo-2 sum of the binary sources $X$ and $Y$. Let the matrix $A$ satisfy equation (\ref{eq:Amatdef}). The encoders of $X$ and $Y$ transmit $s_1 = Ax^n$ and $s_2 = Ay^n$ respectively at rates $(H(Z),H(Z))$. The decoder, upon receiving $s_1$ and $s_2$, computes $\psi(s_1 \oplus_2 s_2) = \psi(Ax^n \oplus_2 Ay^n) = \psi(Az^n)$. Since the $A$ matrix was chosen in accordance with equation (\ref{eq:Amatdef}), the decoder output equals $z^n$ with high probability. Thus, the rate pair $(H(Z),H(Z))$ is achievable. If the source statistics is such that $H(Z) > H(X)$, then clearly it is better to compress $X$ at a rate $H(X)$. Thus, the Korner-Marton coding scheme achieves the rate pair $(R_1,R_2)$ with $R_1 \geq \min \{H(X),H(Z)\}$ and $R_2 \geq \min\{H(Y),H(Z)\}$. This coding strategy shall be referred to as the Korner-Marton coding scheme from here on.

The crucial part played by linear codes in this coding scheme is noteworthy. Had there been a centralized encoder with access to $x^n$ and $y^n$, the coding scheme would be to compute $z^n = x^n \oplus_2 y^n$ first and then compress it using any method known to achieve the entropy bound. Because the encoding is linear, it enables the decoder to use the \emph{distributive} nature of the linear code over the modulo-2 operation to compute $s_1 \oplus_2 s_2 = Az^n$. Thus, from the decoder's perspective, there is no distinction between this distributed coding scheme and a centralized scheme involving a linear code. Also, in contrast to the usual norm in information theory, there is no other known coding scheme that approaches the performance of this linear coding scheme.

More generally, in the case of a prime $q$, a sum rate of $2H(X \oplus_q Y)$ can be achieved \cite{han-kobayashi} for the reconstruction of the sum of the two $q$-ary sources $Z = X \oplus_q Y$ in any prime field $\ring{q}{}$. Abstractly, the Korner-Marton scheme can be thought of as a structured coding scheme with codes built over groups that enable the decoder to reconstruct the group operation losslessly. It turns out that extending the scheme would involve building ``good'' channel codes over arbitrary abelian groups. It is known (see Fact \ref{fact:Gdecompfact}) that primary cyclic groups $\ring{p^r}$ are the building blocks of all abelian groups and hence it suffices to build ``good'' channel codes over the cyclic groups $\ring{p^r}$.

\subsection{Lossless Reconstruction of the Sources}
\label{subsec:swintro} The classical result of Slepian and Wolf \cite{slepian-wolf} states that it is possible to reconstruct the sources $X$ and $Y$ noiselessly at the decoder with a sum rate of $R_1 + R_2 = H(X,Y)$. As was shown in \cite{csiszar82}, the Slepian-Wolf bound is achievable using linear codes. Here, we present an interpretation of this linear coding scheme and connect it to the one in the previous subsection. We begin by making the observation that reconstructing the function $Z = (X,Y)$ for binary sources can be thought of as reconstructing a linear function in the field $\ring{2} \oplus \ring{2}$. This equivalence is demonstrated below. Let the elements of $\ring{2} \oplus \ring{2}$ be $\{00,01,10,11\}$. Denote the addition operation of $\ring{2} \oplus \ring{2}$ by $\oplus_K$.

Define the mappings
\begin{align}
\tilde{X} &= \left\{ \begin{array}{cc} 00 & \mbox{if } X = 0 \\ 01 & \mbox{if }X = 1 \end{array} \right. \\
\tilde{Y} &= \left\{ \begin{array}{cc} 00 & \mbox{if } Y = 0 \\ 10 & \mbox{if }Y = 1 \end{array} \right.
\end{align}

Clearly, reconstructing $(X,Y)$ losslessly is equivalent to reconstructing the function $\tilde{Z} = \tilde{X} \oplus_K \tilde{Y}$ losslessly. The next observation is that elements in $\ring{2} \oplus \ring{2}$ can be represented as two dimensional vectors whose components are in $\ring{2}$. Further, addition in $\ring{2} \oplus \ring{2}$ is simply vector addition with the components of the vector added in $\ring{2}$. Let the first and second bits of $\tilde{X}$ be denoted by $\tilde{X}_1$ and $\tilde{X}_2$ respectively. The same notation holds for $\tilde{Y}$ and $\tilde{Z}$ as well. Then, we have the decomposition of the vector function $\tilde{Z}$ as $\tilde{Z}_i = \tilde{X}_i \oplus_2 \tilde{Y}_i$ for $i = 1,2$.

Encoding the vector function $\tilde{Z}$ directly using the Korner-Marton coding scheme would entail a sum rate of $R_1 + R_2 = \min\{H(X,Y),H(X)\} + \min\{H(X,Y),H(Y)\} = H(X) + H(Y)$ which is more than the sum rate dictated by the Slepian-Wolf bound. Instead, we encode the scalar components of the function $\tilde{Z}$ sequentially using the Korner-Marton scheme. Suppose the first digit plane $\tilde{Z}_1$ is encoded first. Assuming that it gets decoded correctly at the decoder, it is available as side information for the encoding of the second digit plane $\tilde{Z}_2$. Clearly, the Korner-Marton scheme can be used to encode the first digit plane $\tilde{Z}_1$. The rate pair $(R_{11},R_{21})$ achieved by the scheme is given by
\begin{align}
R_{11} &\geq \min \{ H(\tilde{Z}_1), H(\tilde{X}_1) \} = H(\tilde{X}_1) = 0 \\
R_{21} &\geq \min \{ H(\tilde{Z}_1), H(\tilde{Y}_1) \} = H(\tilde{Z}_1)
\end{align}

It is straightforward to extend the Korner-Marton coding scheme to the case where decoder has available to it some side information. Since $\tilde{Z}_1$ is available as side information at the decoder, the rates needed to encode the second digit plane $\tilde{Z}_2$ are
\begin{align}
R_{12} &\geq \min \{ H(\tilde{Z}_2 \mid \tilde{Z}_1), H(\tilde{X}_2
\mid \tilde{Z}_1) \} = H(\tilde{Z}_2 \mid \tilde{Z}_1) \\
R_{22} &\geq \min \{ H(\tilde{Z}_2 \mid \tilde{Z}_1), H(\tilde{Y}_2 \mid \tilde{Z}_1) \} = H(\tilde{Y}_2 \mid \tilde{Z}_1) = 0
\end{align}

Thus, the overall rate pair needed to reconstruct the sources losslessly is
\begin{align}
R_1 &= R_{11} + R_{12} \geq  H(\tilde{Z}_2 \mid \tilde{Z}_1) = H(\tilde{X}_2 \mid \tilde{Y}_1) \\
R_2 &= R_{21} + R_{22} \geq H(\tilde{Z}_1) = H(\tilde{Y}_1).
\end{align}
The sum rate for this scheme is $R_1 + R_2 = H(\tilde{X}_2,\tilde{Y}_1) = H(X,Y)$ thus equaling the Slepian-Wolf bound.

\subsection{Lossless Reconstruction of an Arbitrary Function $F(X,Y)$ }
\label{subsec:arbitfuncintro}

While there are more straightforward ways of achieving the Slepian-Wolf bound than the method outlined in Section \ref{subsec:swintro}, our encoding scheme has the advantage of putting the Korner-Marton coding scheme and the Slepian-Wolf coding scheme under the same framework. The ideas used in these two examples can be abstracted and generalized for the problem when the decoder needs to losslessly reconstruct some function $F(X,Y)$ in order to satisfy the fidelity criterion.

Let us assume that the cardinality of $X$ and $Y$ are respectively $\alpha$ and $\beta$. The steps involved in such an encoding scheme can be described as follows. We first represent the function as equivalent to the group operation in some abelian group $A$. This is referred to as ``embedding'' the function in $A$. This abelian group is then decomposed into its constituent cyclic groups and the embedded function is sequentially encoded using the Korner-Marton scheme outlined in Section \ref{subsec:kmintro}. Encoding is done keeping in mind that, to decode a digit, the decoder has as available side information all previously decoded digits.

It suffices to restrict attention to abelian groups $A$ such that $|\mc{Z}| \leq |A| \leq \alpha \beta$. Clearly, if the function $F_1(X,Y) \triangleq (X,Y)$ can be embedded in a certain abelian group, then any function $F(X,Y)$ can be reconstructed in that abelian group. This is because the decoder can proceed by reconstructing the sources $(X,Y)$ and then computing the function $F(X,Y)$. It can be shown (see Appendix \ref{sec:Tnonempty}) that the function $F_1(X,Y) \triangleq (X,Y)$ can be reconstructed in the group $\ring{\alpha} \oplus \ring{\beta}$ which is of size $\alpha \beta$. Clearly, $|A| \geq |\mc{Z}|$ is a necessary condition for the reconstruction of $Z = F(X,Y)$.

\subsection{Lossy Reconstruction} \label{subsec:lossyreconintro}

We now turn our attention to the case when the decoder wishes to obtain a reconstruction $\hat{Z}$ with respect to a fidelity criterion. The coding strategy is as follows: Quantize the sources $X$ and $Y$ to auxiliary variables $U$ and $V$. Given the quantized sources $U$ and $V$, let $G(U,V)$ be the optimal reconstruction with respect to the distortion measure $d(\cdot,\cdot,\cdot)$. Reconstruct the function $G(U,V)$ losslessly using the coding scheme outlined in Section \ref{subsec:arbitfuncintro}.

We shall use nested group codes to effect this quantization. Nested group codes arise naturally in the area of distributed source coding and require that the fine code be a ``good'' source code and the coarse code be a ``good'' channel code for appropriate notions of goodness. We have already seen that to effect lossless compression, the channel code operates at the digit level. It follows then that we must use a series of nested group codes, one for each digit, over appropriate cyclic groups. For instance, if the first digit of $G(U,V)$ is over the cyclic group $\ring{p_1^{e_1}}{}$, then we need nested group codes over $\ring{p_1^{e_1}}{}$ that encode the sources $X$ and $Y$ to $\tilde{U}_1$ and $\tilde{V}_1$ respectively. The quantization operation is also carried out sequentially, i.e., the digits $\tilde{U}_2$ and $\tilde{V}_2$ are encoded given the knowledge that either $\tilde{Z}_1$ or $(\tilde{U}_1,\tilde{V}_1)$ is available at the decoder and so on. The existence of ``good'' nested group codes over arbitrary cyclic groups is shown later.

The steps involved in the overall coding scheme can be detailed as follows:
\begin{itemize}
\item Let $U,V$ be discrete random variables over the alphabet $\mc{U},\mc{V}$ respectively. Further suppose that $|\mc{U}| = \alpha, |\mc{V}| =
    \beta$. Choose the joint density $P_{X,Y,U,V} = P_{X,Y}P_{U|X}P_{V|Y}$ satisfying the Markov chain $U-X-Y-V$.
\item Let $G(U,V)$ be the optimal reconstruction function with respect to $d(\cdot,\cdot,\cdot)$ given $U,V$. \item Embed the function $G(U,V)$ in an abelian group $A$,  $|\mc{G}| \leq |A| \leq \alpha \beta$. \item Decompose $G(U,V)$ into its constituent digit planes. Fix the order in which the digit planes are to be sequentially encoded. \item Suppose the $b^{\mbox{th}}$ digit plane is the cyclic group $\ring{p_b^{e_b}}{}$. Quantize the sources $(X^n,Y^n)$ into digits
    $(\tilde{U}_b, \tilde{V}_b)$ using the digits already available at the decoder as side information. The details of the quantization procedure
    are detailed later.
\item Encode $\tilde{Z}_b = \tilde{U}_b \oplus_{p_b^{e_b}} \tilde{V}_b$ using group codes.
\end{itemize}

\section{Definitions} \label{sec:definitions}

When a random variable $X$ takes value over the group $\ringr{}$, we need to ensure that it doesn't just take values in some proper subgroup of $\ringr{}$. This leads us to the concept of a non-redundant distribution over a group.

\begin{defi} \label{defi:nrdist}
A random variable $X$ with $\mc{X} = \ringr{}$ or its distribution $P_X$ is said to be non-redundant if $P_X(x) > 0$ for at least one symbol $x \in \ringr{} \backslash p \ringr{}$.
\end{defi}
It follows from this definition that $x^n \in \typset{}{X}$ contains at least one $x \in \ringr{} \backslash p \ringr{}$ if $X$ is non-redundant. Such sequences are called non-redundant sequences. A redundant random variable taking values over $\ringr{}$ can be made non-redundant by a suitable relabeling of the symbols. Also, note that a redundant random variable over $\ringr{}$ is non-redundant when viewed as taking values over $\ring{p^{r-i}}$ for some $0 < i \leq r$. Our coding scheme involves good nested group codes for source and channel coding and the notion of embedding the optimal reconstruction function in a suitable abelian group. These concepts are made precise in the following series of definitions.

\begin{defi} \label{defi:embeddingdefi}
A bivariate function $G \colon \mc{U} \times \mc{V} \rightarrow \mc{G}$ is said to be embeddable in an abelian group $A$ with respect to the distribution $p_{UV}(u,v)$ on $\mc{U} \times \mc{V}$ if there exists injective functions $S_U^{(A)} \colon \mc{U} \rightarrow A, S_V^{(A)} \colon \mc{V} \rightarrow A$ and a surjective function $S_G^{(A)} \colon A \rightarrow \mc{G}$ such that
\begin{equation}
S_G^{(A)} (S_U^{(A)}(u) +_A S_V^{(A)}(v)) = G(u,v) \quad \forall (u,v) \in \mc{U} \times \mc{V} \mbox{ with } p_{UV}(u,v) > 0
\end{equation}
If $G(U,V)$ is indeed embeddable in the abelian group $A$, it is denoted as $G(U,V) \subset A$ with respect to the distribution $p_{UV}(u,v)$. Define the mapped random variables $\bar{U} = S_U^{(A)}(U)$ and $\bar{V} = S_V^{(A)}(V)$. Their dependence on $A$ is suppressed and the group in question will be clear from the context.
\end{defi}

Suppose the function $G(U,V) \subset A$ with respect to $p_{UV}$. We encode the function $G(U,V)$ sequentially by treating the sources as vector valued over the cyclic groups whose direct sum is isomorphic to $A$. This alternative representation of the sources is made precise in the following definition.
\begin{defi} \label{defi:decompdefi}
Suppose the function $G(U,V) \subset A$ with respect to $p_{UV}$. Let $A$ be isomorphic to $\oplus_{i=1}^k \ring{p_i^{e_i}}$ where $p_1 \leq \dots \leq p_k$ are primes and $e_i$ are positive integers. Then, it follows from Fact \ref{fact:Gdecompfact} that there exists a bijection $S_A \colon A \rightarrow \ring{p_1^{e_1}} \times \dots \ring{p_k^{e_k}}$. Let $\tilde{U} = S_A(\bar{U}), \tilde{V} = S_A(\bar{V})$. Let $\tilde{U} = ( \tilde{U}_1, \dots, \tilde{U}_k)$ be the vector representation of $\tilde{U}$. The random variables $\tilde{U}_i$ are called the digits of $\tilde{U}$. A similar decomposition holds for $\tilde{V}$. Define $\tilde{Z} = (\tilde{Z}_1, \dots, \tilde{Z}_k)$ where $\tilde{Z}_i \triangleq \tilde{U}_i \oplus_{p_i^{e_i}} \tilde{V}_i$. It follows that $S_A^{-1}(\tilde{Z}) = \bar{U} +_A \bar{V}$.
\end{defi}

Our encoding operation proceeds thus: we reconstruct the function $G(U,V)$ by first embedding it in some abelian group $A$ and then reconstructing $\bar{U} +_A \bar{V}$ which we accomplish sequentially by reconstructing $\tilde{U}_i \oplus_{p_i^{e_i}} \tilde{V}_i$ one digit at a time. While reconstructing the $i$th digit, the decoder has as side information the previously reconstructed $(i-1)$ digits. This digit decomposition approach requires that we build codes over the primary cyclic groups $\ringr{}$ which are ``good'' for various coding purposes. We define the concepts of group codes and what it means for group codes to be ``good'' in the following series of definitions.

\begin{defi} \label{defi:groupcodesdefi}
Let $A$ be a finite abelian group. A group code $\mc{C}$ of blocklength $n$ over the group $A$ is a subset of $A^n$ which is closed under the group addition operation, i.e., $\mc{C} \subset A^n$ is such that if $c_1^n, c_2^n \in \mc{C}$, then so does $c_1^n +_{A^n} c_2^n$.
\end{defi}
Recall that the kernel $\ker(\phi)$ of a homomorphism $\phi \colon A^n \rightarrow A^k$ is a subgroup of $A^n$. We use this fact to build group codes. As mentioned earlier, we build codes over the primary cyclic group $\ringr{}$. In this case, every group code $\mc{C} \subset \ringr{n}$ has associated with it a $k \times n$ matrix $H$ with entries in $\ringr{}$ which completely defines the group code as
\begin{equation}
\mc{C} \triangleq \{ x^n \in \ringr{n} \colon Hx^n = 0^k \}.
\end{equation}
Here, the multiplication and addition are carried out modulo-$p^r$. $H$ is called the parity-check matrix of the code $\mc{C}$. We employ nested group codes in our coding scheme. In distributed source coding problems, we often need one of the components of a nested code to be a good source code while the other one to be a good channel code. We shall now define nested group codes and the notions of ``goodness'' used to classify a group code as a good source or channel code.

\begin{defi} \label{defi:nestedgroupcodesdefi}
A nested group code $(\mc{C}_1, \mc{C}_2)$ is a pair of group codes such that every codeword in the codebook $\mc{C}_2$ is also a codeword in $\mc{C}_1$, i.e., $\mc{C}_2 < \mc{C}_1$. Their associated parity check matrices are the $k_1 \times n$ matrix $H_1$ and the $k_2 \times n$ matrix $H_2$. They are related to each other as $H_1 = J \cdot H_2$ for some $k_1 \times k_2$ matrix $J$. One way to enforce this relation between $H_1$ and $H_2$ would be to let
\begin{equation}
H_2 = \left[ \begin{array}{c} H_1 \\ \Delta H \end{array} \right]
\end{equation}
where $\Delta H$ is a $(k_2-k_1) \times n$ matrix over $\ringr{}$.
\end{defi}
The code $\mc{C}_1$ is called the fine group code while $\mc{C}_2$ is called the coarse group code. When nested group codes are used in distributed source coding, typically the coset leaders of $\mc{C}_2$ in $\mc{C}_1$ are employed as codewords. In such a case, the rate of the nested group code would be $n^{-1}(k_2 - k_1) \log p^r$ bits.

We define the notion of ``goodness'' associated with a group code below. To be precise, these notions are defined for a family of group codes indexed by the blocklength $n$. However, for the sake of notational convenience, this indexing is not made explicit.

\begin{defi} \label{defi:goodsrccodedefi}
Let $P_{XU}$ be a distribution over $\mc{X} \times \mc{U}$ such that the marginal $P_U$ is a non-redundant distribution over $\ringr{}$ for some prime power $p^r$. For a given group code $\mc{C}$ over $\mc{U}$ and a given $\epsilon > 0$, let the set $A_{\epsilon}(\mc{C})$ be defined as
\begin{equation} \label{eq:Asetdefi}
A_{\epsilon}(\mc{C}) \triangleq \{ x^n \colon \exists u^n \in \mc{C} \mbox{ such that } (x^n,u^n) \in A_{\epsilon}^{(n)}(X,U) \}.
\end{equation}
The group code $\mc{C}$ over $\mc{U}$ is called a good source code for the triple $(\mc{X},\mc{U}, P_{XU})$ if we have $\forall \epsilon > 0$,
\begin{equation} \label{eq:goodlinsrceq}
P_X^n(A_{\epsilon}(\mc{C})) \geq 1-\epsilon
\end{equation}
for all sufficiently large $n$.
\end{defi}

Note that, a group code which is a good source code in this sense may not be a good source code in the usual Shannon sense. Rather, such a group code contains a subset which is a good source code in the Shannon sense for the source $P_X$ with forward test channel $P_{U|X}$.

\begin{defi} \label{defi:goodchcodedefi}
Let $P_{ZS}$ be a distribution over $\mc{Z} \times \mc{S}$ such that the marginal $P_Z$ is a non-redundant distribution over $\ringr{}$ for some prime power $p^r$. For a given group code $\mc{C}$ over $\mc{Z}$ and a given $\epsilon > 0$, define the set $B_{\epsilon}(\mc{C})$ as follows:
\begin{equation} \label{eq:Bsetdefi}
B_{\epsilon}(\mc{C}) \triangleq \{ (z^n,s^n) \colon \exists \tilde{z}^n \mbox{ such that } (\tilde{z}^n,s^n) \in A_{\epsilon}^{(n)}(Z,S) \mbox{ and } H\tilde{z}^n = Hz^n \}.
\end{equation}
Here, $H$ is the $k(n) \times n$ parity check matrix associated with the group code $\mc{C}$. The group code $\mc{C}$ is called a good channel code for the triple $(\mc{Z},\mc{S},P_{ZS})$ if we have $\forall \epsilon > 0$,
\begin{equation} \label{eq:goodlinchanneleq}
P^n_{ZS}(B_{\epsilon}(\mc{C})) \leq \epsilon
\end{equation}
for all sufficiently large $n$. Associated with such a good group channel code would be a decoding function $\psi:\ring{p^r}^k \times \mc{S}^n \rightarrow \ring{p^r}^n$ such that
\begin{equation}
P(\psi(Hz^n,s^n) = z^n) \geq 1 - \epsilon.
\end{equation}
\end{defi}

Note that, as before, a group code which is a good channel code in this sense may not a good channel code in the usual Shannon sense. Rather, every coset of such a group code contains a subset which is a good channel code in the Shannon sense for the channel $P_{S|Z}$ with input distribution $P_Z$. This interpretation is valid only when $S$ is a non-trivial random variable.

\begin{lemma} \label{lemma:goodchcodelemma}
For any triple $(\mc{Z},\mc{S},P_{ZS})$ of two finite sets and a distribution, with $|\mc{Z}| = p^r$ a prime power and $P_Z$ non-redundant, there exists a sequence of group codes $\mc{C}$ that is a good channel code for the triple $(\mc{Z},\mc{S},P_{ZS})$ such that the dimensions of their associated $k(n) \times n$ parity check matrices satisfy
\begin{equation} \label{eq:goodlinchannellimiteq}
\lim_{n \rightarrow \infty} \frac{k(n)}{n} \log p^r = \max_{0 \leq i <
  r} \left( \frac{r}{r-i} \right) (H(Z|S) - H([Z]_i | S))
\end{equation}
where $[Z]_i$ is a random variable taking values over the set of all distinct cosets of $p^i \ringr{}$ in $\ringr{}$. For example, if $\mc{Z} = \ring{8}$, then $[Z]_2$ is a $4$-ary random variable with symbol probabilities $(p_Z(0) + p_Z(4)), (p_Z(1) + p_Z(5)), (p_Z(2) + p_Z(6))$ and $(p_Z(3) + p_Z(7))$.
\end{lemma}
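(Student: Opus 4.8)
The plan is to construct the good channel code as the kernel of a random homomorphism $\phi\colon\ringr{n}\to\ringr{k}$, i.e. as $\mc{C}=\{z^n\colon Hz^n=0^k\}$ with $H$ a random $k\times n$ matrix with i.i.d. uniform entries over $\ringr{}$, and to choose $k=k(n)$ as small as possible while still guaranteeing $P^n_{ZS}(B_\epsilon(\mc{C}))\to0$. The event in $B_\epsilon(\mc{C})$ is that a ``confusable'' sequence exists: some $\tilde z^n\neq z^n$ with $(\tilde z^n,s^n)$ jointly typical and $H\tilde z^n=Hz^n$, equivalently $H(\tilde z^n-z^n)=0^k$, i.e. the nonzero difference $w^n=\tilde z^n-z^n$ lies in $\mc{C}$. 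So the first step is the standard union bound: for fixed $(z^n,s^n)\in A_\epsilon^{(n)}(Z,S)$,
\begin{equation}
P\big((z^n,s^n)\in B_\epsilon(\mc{C})\big)\;\le\;\sum_{\tilde z^n\,:\,(\tilde z^n,s^n)\in A_\epsilon^{(n)}(Z,S),\ \tilde z^n\neq z^n}P\big(H(\tilde z^n-z^n)=0^k\big),
\end{equation}
and the number of such $\tilde z^n$ is at most $2^{n(H(Z|S)+\delta)}$ by the standard typical-set count. The crux is then to bound $P(Hw^n=0^k)$ for a fixed nonzero $w^n\in\ringr{n}$.

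The main obstacle — and the reason the bound is not simply $(p^r)^{-k}$ — is that $\ringr{}$ is not a field: for a nonzero $w^n$, $P(Hw^n=0^k)$ depends on the ``order'' of $w^n$ in the module $\ringr{n}$. If every coordinate of $w^n$ is divisible by $p^{r-1}$ (so $w^n$ lies in $p^{r-1}\ringr{n}$, the minimal nonzero subgroup), then a single row $h$ satisfies $h\cdot w^n=0$ with probability $1/p$ rather than $1/p^r$, giving $P(Hw^n=0^k)=p^{-k}$ instead of $p^{-rk}$. More generally, writing $i(w^n)$ for the largest $i<r$ such that $w^n\in p^i\ringr{n}$, one gets $P(Hw^n=0^k)=p^{-(r-i)k}$. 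The plan is therefore to partition the nonzero $w^n$ according to $i=i(w^n)$, count the typical confusable sequences in each class, and require $k$ to be large enough that every class contributes a vanishing total probability. A sequence $\tilde z^n$ with $\tilde z^n-z^n\in p^i\ringr{n}$ agrees with $z^n$ modulo $p^i$, i.e. $[\tilde z^n]_i=[z^n]_i$ in the coset notation of the statement; by the conditional typicality / Markov-lemma type counting, the number of such jointly typical $\tilde z^n$ is at most $2^{n(H(Z|S)-H([Z]_i|S)+\delta)}$. Hence the $i$th class contributes roughly $2^{n(H(Z|S)-H([Z]_i|S)+\delta)}\,p^{-(r-i)k}$, and this $\to0$ precisely when
\begin{equation}
\frac{k}{n}(r-i)\log p\;>\;H(Z|S)-H([Z]_i|S).
\end{equation}
Taking the maximum over $0\le i<r$ and recalling $\log p^r=r\log p$ gives exactly the rate expression in \eqref{eq:goodlinchannellimiteq}; choosing $k(n)$ so that $\tfrac{k(n)}{n}\log p^r$ approaches that maximum from above (with a vanishing slack to absorb $\delta$ and $\epsilon$) makes $\mathbb{E}\,P^n_{ZS}(B_\epsilon(\mc{C}))\to0$, so a good code exists in the family.

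Finally I would average over $(Z^n,S^n)\sim P^n_{ZS}$: since $P^n_{ZS}(A_\epsilon^{(n)}(Z,S))\ge1-\epsilon$ for large $n$, combining this with the union bound above yields $\mathbb{E}_H\,P^n_{ZS}(B_\epsilon(\mc{C}))\le\epsilon$ for large $n$, so there is a deterministic choice of $H$ (hence of $\mc{C}$) with $P^n_{ZS}(B_\epsilon(\mc{C}))\le\epsilon$; a standard diagonalization over a sequence $\epsilon\to0$ produces the claimed sequence of codes. The decoding function $\psi$ is obtained in the usual way: given $(Hz^n,s^n)$, output the unique (with high probability) $\tilde z^n$ with $H\tilde z^n=Hz^n$ and $(\tilde z^n,s^n)$ jointly typical, and set $\psi$ arbitrarily when no such sequence or more than one exists; the failure event is contained in $B_\epsilon(\mc{C})$ together with the atypicality event, so $P(\psi(HZ^n,S^n)=Z^n)\ge1-2\epsilon$. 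The one technical point to handle carefully is the counting bound on jointly typical sequences agreeing modulo $p^i$ — this is where the non-redundancy of $P_Z$ is used, to ensure $w^n$ can actually range over all of $p^i\ringr{n}$ and that the coset random variable $[Z]_i$ behaves as claimed — but this is a routine application of properties of strong typicality, so I expect the digit-by-digit order accounting in the union bound to be the only genuinely delicate step.
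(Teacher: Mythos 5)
Your proposal is correct and mirrors the paper's high-level plan exactly: take $\mc{C}$ to be the kernel of a uniformly random homomorphism $\phi\colon\ringr{n}\to\ringr{k}$, bound $\Bbb{E}_H\, P^n_{ZS}(B_\epsilon(\mc{C}))$ by a union bound over confusable $\tilde z^n$, stratify the nonzero differences $w^n=\tilde z^n-z^n$ according to the largest $i$ with $w^n\in p^i\ringr{n}$, use $P(Hw^n=0^k)=p^{-(r-i)k}$ (this is the content of the paper's Lemma~\ref{lemma:uniflem}), bound the number of confusables at level $i$ by $2^{n(H(Z|S)-H([Z]_i|S)+\delta)}$, and read off the rate by intersecting the $r$ constraints and letting $\delta\to0$; the averaging over $(Z^n,S^n)$, the extraction of a deterministic good $H$, and the definition of $\psi$ as the unique typical preimage all match the paper verbatim.

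The one place where you take a genuinely different route is the counting sub-lemma, and there your argument is cleaner. You observe that $\tilde z^n-z^n\in p^i\ringr{n}$ is exactly the statement $[\tilde z^n]_i=[z^n]_i$ componentwise, so any jointly typical $\tilde z^n$ with this property lies in the conditional typical set $A_\epsilon^{(n)}\bigl(Z\mid s^n,[z^n]_i^n\bigr)$; since $[Z]_i$ is a deterministic function of $Z$, this set has at most $2^{n(H(Z\mid S,[Z]_i)+O(\epsilon))}=2^{n(H(Z|S)-H([Z]_i|S)+O(\epsilon))}$ elements, which is precisely the bound needed. The paper's Lemma~\ref{lemma:sizelem} arrives at the same number by a considerably longer route: it parametrizes confusables by an auxiliary channel $W\in p^i\ringr{}$ with $P_{(Z+W),S}=P_{ZS}$, sets up the constrained maximization of $H(W\mid Z,S)$ over $P_{W|ZS}$, writes out a Lagrangian, verifies the KKT conditions for an explicit candidate distribution, and evaluates the optimum in closed form. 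Both give $H(Z|S)-H([Z]_i|S)$, but your chain-rule observation (that $H([Z]_i\mid Z,S)=0$ so $H(Z\mid S,[Z]_i)=H(Z|S)-H([Z]_i|S)$) replaces about a page of convex-optimization algebra with one line, at the modest cost of having to invoke, as you correctly flag, the standard fact that applying a deterministic per-letter map to one component of a jointly strongly typical tuple preserves joint strong typicality (with a slightly degraded $\epsilon$). One thing your proposal leaves as a sketch that the paper proves carefully is the probability $P(Hw^n=0^k)=p^{-(r-i)k}$: you argue the $i=r-1$ case and assert the general one, whereas Lemma~\ref{lemma:uniflem} proves it row by row using the existence of an invertible coordinate in $p^{-i}w^n$. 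That gap is small and easily filled, so I would not count it against you.
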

\begin{proof}[\textbf{Proof}:]
See Appendix \ref{sec:goodchproof}.
\end{proof}

Note that $[Z]_0$ is a constant and $[Z]_r = Z$. When building codes over groups, each proper subgroup of the group contributes a term to the maximization in equation (\ref{eq:goodlinchannellimiteq}). Since the smaller the right hand side of equation (\ref{eq:goodlinchannellimiteq}), the better the channel code is, we incur a penalty by building codes over groups with large number of subgroups.

\begin{lemma} \label{lemma:goodsrccodelemma}
For any triple $(\mc{X},\mc{U},P_{XU})$ of two finite sets and a distribution, with $|\mc{U}| = p^r$ a prime power and $P_U$ non-redundant, there exists a sequence of group codes $\mc{C}$ that is a good source code for the triple $(\mc{X},\mc{U},P_{XU})$ such that the dimensions of their associated $k(n) \times n$ parity check matrices satisfy
\begin{equation} \label{eq:goodlinsrclimiteq}
\lim_{n \rightarrow \infty} \frac{k(n)}{n} \log p^r = \min (H(U|X), r|H(U|X) - \log p^{r-1}|^{+})
\end{equation}
where $|x|^{+} = \max(x,0)$.
\end{lemma}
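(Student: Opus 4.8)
The plan is to prove existence by a second‑moment (random coding) argument over the natural ensemble of group codes realized as kernels of random homomorphisms $\ringr{n}\to\ringr{k}$.

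\emph{Ensemble and reduction.} Assume first $H(U|X)>\log p^{r-1}$, so that $T:=\min\!\big(H(U|X),\,r|H(U|X)-\log p^{r-1}|^{+}\big)>0$; the case $H(U|X)\le\log p^{r-1}$ (where $T=0$) is disposed of by taking $\mc C=\ringr{n}$ (empty parity‑check matrix), which covers every typical $x^n$. Fix $R<T$ and integers $k=k(n)$ with $\tfrac1n k\log p^r\to R$. Let $\Phi$ be a uniformly random $k\times n$ matrix over $\ringr{}$ --- equivalently, via the bijection $\mathrm{Hom}(\ringr{n},\ringr{k})\leftrightarrow M(k,n,\ringr{})$, a uniformly random homomorphism --- and set $\mc C=\ker\Phi=\{u^n:\Phi u^n=0^k\}$, a group code with parity‑check matrix $\Phi$. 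It suffices to show $\mathbb E_\Phi\,P_X^n\big(A_\epsilon(\mc C)^{c}\big)\to0$ for every $\epsilon>0$: this produces a good source code of rate $\approx R$, and letting $R\uparrow T$ while diagonalizing over $\epsilon$ yields a sequence of good source codes with $\tfrac1n k(n)\log p^r\to T$. Since the strongly typical set has $P_X^n$‑probability $>1-\tfrac\epsilon2$ for $n$ large, and for typical $x^n$ one has $x^n\notin A_\epsilon(\mc C)$ iff $N(x^n):=|\{u^n\in\mc C:(x^n,u^n)\in A_\epsilon^{(n)}(X,U)\}|=0$, it is enough to show $P_\Phi(N(x^n)=0)\to0$ \emph{uniformly over strongly typical $x^n$}, which we get from the Paley--Zygmund inequality $P_\Phi(N=0)\le \mathbb E N^2/(\mathbb E N)^2-1$.

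\emph{First moment.} Non‑redundancy of $P_U$ makes every $u^n$ with $(x^n,u^n)$ strongly typical a non‑redundant sequence (it has roughly $nP_U(u^{\ast})>0$ coordinates equal to some unit $u^{\ast}\in\ringr{}\setminus p\ringr{}$). For such $u^n$ the map $\phi\mapsto\phi\cdot u^n$ is a surjective homomorphism $\ringr{n}\to\ringr{}$, so $\Phi u^n$ is uniform on $\ringr{k}$ and $P_\Phi(u^n\in\mc C)=p^{-rk}$. Hence $\mathbb E N(x^n)=|A_\epsilon^{(n)}(x^n)|\,p^{-rk}=2^{\,n(H(U|X)-R)+o(n)}\to\infty$, as $R<H(U|X)$.

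\emph{Second moment.} Write $\mathbb E N^2=\sum_{u^n,\tilde u^n}P_\Phi(\Phi u^n=0,\ \Phi\tilde u^n=0)$ over ordered pairs in $A_\epsilon^{(n)}(x^n)$. For a pair $(u^n,\tilde u^n)$, $\phi\mapsto(\phi u^n,\phi\tilde u^n)$ is a $\ringr{}$‑module map $\ringr{n}\to(\ringr{})^2$; as $u^n$ is non‑redundant its image contains some $(1,c_0)$, and one checks the image equals $\ringr{}(1,c_0)\oplus\big(\{0\}\times\langle\tilde u_i-c_0u_i:1\le i\le n\rangle\big)$, hence has order $p^{2r-m}$ for a unique $m\in\{0,\dots,r\}$, giving $P_\Phi(\Phi u^n=0,\Phi\tilde u^n=0)=p^{-(2r-m)k}$. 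Letting $\mc N_m(x^n)$ be the number of pairs with this $m$, the goal is
\[
\mathbb E N^2\ \le\ \sum_{m=0}^{r}\mc N_m(x^n)\,p^{-(2r-m)k}\ =\ \big(1+o(1)\big)(\mathbb E N)^2 .
\]
For $m=0$, $\mc N_0\le|A_\epsilon^{(n)}(x^n)|^2$, contributing at most $(\mathbb E N)^2$. For $m=r$, $\langle\tilde u_i-c_0u_i\rangle=\{0\}$ forces $\tilde u^n=c_0u^n$ with $c_0\in\ringr{}^{\ast}$, so $\mc N_r\le|\ringr{}^{\ast}|\,|A_\epsilon^{(n)}(x^n)|$ and the contribution is $O(|\ringr{}^{\ast}|\,\mathbb E N)=o\big((\mathbb E N)^2\big)$ since $\mathbb E N\to\infty$. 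For $1\le m\le r-1$, membership in bucket $m$ forces $\tilde u_i\equiv c_0u_i\pmod{p^{m}}$ for all $i$, so after choosing $u^n$ and the residue $c_0\bmod p^{m}$ (at most $p^{m}$ choices) the reduction of $\tilde u^n$ modulo $p^{m}$ is fixed and $\tilde u^n$ lies in a coset of $p^{m}\ringr{n}$ of size $(p^{\,r-m})^{n}$; thus $\mc N_m(x^n)\le p^{m}\,|A_\epsilon^{(n)}(x^n)|\,(p^{\,r-m})^{n}$. Dividing the $m$‑th contribution by $(\mathbb E N)^2=2^{2nH(U|X)}p^{-2rk}$ and using $k\log p= nR/r+o(n)$ gives, up to sub‑exponential factors, $2^{\,n(\frac{m}{r}R-H(U|X)+(r-m)\log p)}$, which $\to0$ exactly when $R<\tfrac{r}{m}\big(H(U|X)-\log p^{\,r-m}\big)$. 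The minimum over $m\in\{1,\dots,r-1\}$ of the right side is attained at $m=1$ and equals $r\big(H(U|X)-\log p^{\,r-1}\big)$; together with the $m=r$ requirement $R<H(U|X)$, all these conditions hold iff $R<T$. As every estimate depends on $x^n$ only through its type, the bound is uniform over strongly typical $x^n$; hence $\sup_{x^n\,\mathrm{typ}}P_\Phi(N(x^n)=0)\to0$ and $\mathbb E_\Phi\,P_X^n(A_\epsilon(\mc C)^{c})<\epsilon$ for large $n$, which is what was needed.

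I expect the second‑moment step to be the main obstacle, and within it the bucketing‑and‑counting claim for $1\le m\le r-1$: describing the submodule of $(\ringr{})^2$ generated by the rows $(u_i,\tilde u_i)$, showing its order is $p^{2r-m}$ and that this pins $\tilde u^n$ to $u^n$ modulo $p^{m}$, and then counting the resulting ``degenerate'' pairs. The $m\ge1$ terms exist only because $\ringr{}$ is a ring and not a field, and the deliberately coarse count $(p^{\,r-m})^{n}$ for the size of a coset of $p^{m}\ringr{n}$ is what produces the clean quantity $\log p^{\,r-1}$ in the statement (a sharper count would enlarge the achievable region, but this bound suffices for the corollaries). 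A secondary point requiring care is that all of the estimates above be functions of the type of $x^n$ only, so that the covering‑failure probability is controlled uniformly over the typical set, along with the trivial handling of the boundary case $T=0$.
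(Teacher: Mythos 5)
Your proposal is correct, and it takes a genuinely different technical route from the paper. The paper's Appendix B builds the same random ensemble (kernels of uniform homomorphisms), proves the same two structural facts you use --- that $P(\phi u_1^n=\phi u_2^n=0^k)$ depends only on the ideal $D(u_1^n,u_2^n)=p^m\ringr{}$ generated by the $2\times 2$ minors (the paper's Lemma on joint hit probability), and an explicit count of the $u_2^n$ falling in each bucket $m$ (the paper's Lemma on $|D_m(u_1^n)|$) --- but then invokes \emph{Suen's correlation inequality} with a dependency graph whose degree is bounded via those lemmas, yielding a doubly-exponential tail $P(\Theta(x^n)=0)\le\exp\{-2^{cn}\}$. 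You instead run a plain second-moment argument: bucket ordered pairs by $m$, bound $\mc N_m$ by the coarse coset count $p^{m}\cdot p^{(r-m)n}$ (which, as you note, is what produces the $\log p^{r-1}$ term), and conclude $\mathbb E N^2/(\mathbb E N)^2\to 1$, giving $P(N(x^n)=0)\to 0$ uniformly over typical $x^n$. This is weaker than Suen (no double-exponential rate) but is exactly as much as the proof needs, since the end goal is only that $\mathbb E_\Phi P_X^n(A_\epsilon(\mc C)^c)\to0$, and it avoids the dependency-graph bookkeeping entirely; your derivation of the three rate constraints (from $m=r$, from $m=0$, and from $1\le m\le r-1$ minimized at $m=1$) reproduces exactly the paper's condition $R<\min(H(U|X),\,r(H(U|X)-\log p^{r-1}))$. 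Two cosmetic remarks: the inequality $P(N=0)\le\mathbb E N^2/(\mathbb E N)^2-1$ is the Chebyshev/second-moment bound rather than Paley--Zygmund (Paley--Zygmund gives the slightly tighter $1-(\mathbb E N)^2/\mathbb E N^2$; either suffices); and your coset count $p^m\cdot p^{(r-m)n}$ is indeed an over-count relative to the paper's exact $p^r(p^{(r-m)(n-1)}-p^{(r-m-1)(n-1)})$, but the two agree to leading exponential order, so nothing is lost. The handling of the boundary case $T=0$ (take $\mc C=\ringr{n}$, $k=0$) and the diagonalization over $(\epsilon,R)$ to obtain a single sequence of codes are both handled correctly.
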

\begin{proof}[\textbf{Proof}:]
See Appendix \ref{sec:goodsrcproof}.
\end{proof}

Putting $r=1$ in equations (\ref{eq:goodlinchannellimiteq}) and (\ref{eq:goodlinsrclimiteq}), we get the performance obtainable while using linear codes built over Galois fields.

\begin{lemma} \label{lemma:goodnestedcodelemma}
Let $X,Y,S,U,V$ be five random variables where $U$ and $V$ take value over the group $\ringr{}$ for some prime power $p^r$. Let $Z = U \oplus_{p^r} V$. Let $U \rightarrow X \rightarrow Y \rightarrow V$ form a Markov chain, and let $S \rightarrow (X,Y) \rightarrow (U,V)$ form a Markov chain. From the Markov chains, it follows that $H(U|X) \leq H(Z|S), H(V|Y) \leq H(Z|S)$. Without loss of generality, let $H(U|X) \leq H(V|Y) \leq H(Z|S)$. Then, there exists a pair of nested group codes $(\mc{C}_{11}, \mc{C}_{2})$ and $(\mc{C}_{12},\mc{C}_2)$ such that
\begin{itemize}
\item $\mc{C}_{11}$ is a good group source code for the triple $(\mc{X},\mc{U},P_{XU})$ with
\begin{equation} \label{eq:nestedeq1}
\lim_{n \rightarrow \infty} \frac{k_{11}(n)}{n} \log p^r = \min(H(U|X), r|H(U|X) - \log p^{r-1}|^{+})
\end{equation}
\item $\mc{C}_{12}$ is a good group source code for the triple $(\mc{Y},\mc{V},P_{YV})$ with
\begin{equation} \label{eq:nestedeq2}
\lim_{n \rightarrow \infty} \frac{k_{12}(n)}{n} \log p^r = \min(H(V|Y), r|H(V|Y) - \log p^{r-1}|^{+})
\end{equation}
\item $\mc{C}_2$ is a good group channel code for the triple $(\mc{Z},\mc{S},P_{ZS})$ with
\begin{equation} \label{eq:nestedeq3}
\lim_{n \rightarrow \infty} \frac{k_2(n)}{n} \log p^r = \max_{0 \leq i < r} \left( \frac{r}{r-i} \right) (H(Z|S) - H([Z]_i|S))
\end{equation}
\end{itemize}
\end{lemma}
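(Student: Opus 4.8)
\quad The plan is to generate $\mc{C}_2$, $\mc{C}_{11}$ and $\mc{C}_{12}$ from a \emph{single} random parity-check matrix, so that the nesting relations $\mc{C}_2<\mc{C}_{11}$ and $\mc{C}_2<\mc{C}_{12}$ hold by construction, and then to invoke Lemmas~\ref{lemma:goodchcodelemma} and~\ref{lemma:goodsrccodelemma} on the three matrices this construction induces. First I would, for each blocklength $n$, fix integers $k_{11}(n)\le k_{12}(n)\le k_2(n)$ whose normalized limits $\frac{1}{n}k(n)\log p^r$ are the right-hand sides of~(\ref{eq:nestedeq1}),~(\ref{eq:nestedeq2}) and~(\ref{eq:nestedeq3}), respectively. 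This ordering is consistent with the hypotheses: the function $a\mapsto\min\!\big(a,\,r|a-\log p^{r-1}|^{+}\big)$ is nondecreasing on $[0,\log p^r]$, so $H(U|X)\le H(V|Y)$ gives $k_{11}(n)\le k_{12}(n)$ for large $n$; and since at $i=0$ the corresponding term of the maximum in~(\ref{eq:nestedeq3}) equals $H(Z|S)-H([Z]_0|S)=H(Z|S)$ (because $[Z]_0$ is constant), the channel-code rate is at least $H(Z|S)$, which dominates both source-code rates since $\min(H(U|X),\cdot)\le H(U|X)\le H(Z|S)$ and likewise for $V$ --- exactly the inequalities assumed in the statement. (If two of the limiting rates coincide, an arbitrarily small perturbation of the integer sequences restores the strict ordering without changing the limits.)

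Next I would draw $H_2$ as a $k_2(n)\times n$ matrix from the ensemble used in the proof of Lemma~\ref{lemma:goodchcodelemma} (entries i.i.d.\ uniform on $\ringr{}$), let $H_{12}$ be the first $k_{12}(n)$ rows of $H_2$ and $H_{11}$ the first $k_{11}(n)$ rows of $H_2$, and let $\mc{C}_2,\mc{C}_{12},\mc{C}_{11}$ be their kernels in $\ringr{n}$. By construction $H_2$ is $H_{12}$ with a further block of rows appended and $H_{12}$ is $H_{11}$ with a further block appended, which is exactly the form in Definition~\ref{defi:nestedgroupcodesdefi}; hence $\mc{C}_2<\mc{C}_{12}<\mc{C}_{11}$, so both $(\mc{C}_{11},\mc{C}_2)$ and $(\mc{C}_{12},\mc{C}_2)$ are nested group codes. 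Because any sub-collection of rows of a matrix with i.i.d.\ uniform entries again has i.i.d.\ uniform entries, the marginal law of $H_2$ is the channel-code ensemble for $(\mc{Z},\mc{S},P_{ZS})$ while those of $H_{11}$ and $H_{12}$ are the source-code ensembles for $(\mc{X},\mc{U},P_{XU})$ and $(\mc{Y},\mc{V},P_{YV})$. Hence the event ``$\mc{C}_2$ is not a good channel code at rate~(\ref{eq:nestedeq3})'' has probability $\to0$ by Lemma~\ref{lemma:goodchcodelemma}, and the events ``$\mc{C}_{11}$ (resp.\ $\mc{C}_{12}$) is not a good source code at rate~(\ref{eq:nestedeq1}) (resp.~(\ref{eq:nestedeq2}))'' have probability $\to0$ by Lemma~\ref{lemma:goodsrccodelemma}. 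A union bound then gives probability $\to1$ that all three hold at once, so a deterministic triple with the claimed properties exists.

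The hard part will be verifying that Lemmas~\ref{lemma:goodchcodelemma} and~\ref{lemma:goodsrccodelemma} are genuinely proved over the \emph{same} base ensemble of parity-check matrices, so that a leading row-sub-matrix of the channel-code matrix is an admissible input to the source-code argument. If either proof introduces an auxiliary dither (replacing a kernel $\{x^n:Hx^n=0\}$ by a random coset $\{x^n:Hx^n=b^n\}$) or imposes extra internal structure on $H$, one must check that the dithers can be coupled --- with the coarse dither set to $0$ so that $\mc{C}_2,\mc{C}_{11},\mc{C}_{12}$ stay genuine subgroups and the nesting survives --- and that the subgroup-induced rate terms (the $\max_{0\le i<r}$ in~(\ref{eq:nestedeq3}) and the $r|\cdot-\log p^{r-1}|^{+}$ term in~(\ref{eq:nestedeq1})--(\ref{eq:nestedeq2}), both reflecting codewords landing in the proper subgroups $p^i\ringr{}$) remain controlled when the matrices are coupled rather than independent. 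That coupling does no harm to the argument itself, since the union bound uses only the marginal laws of $H_2$, $H_{11}$, $H_{12}$ and never their joint independence; the real work is the ensemble-compatibility check, together with the routine preliminary of passing to non-redundant $P_U$, $P_V$, $P_Z$ as discussed after Definition~\ref{defi:nrdist}.
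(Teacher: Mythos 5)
Your proposal is correct and essentially reproduces the paper's proof: the paper likewise constructs the three parity-check matrices by stacking row blocks with i.i.d.\ uniform entries (generate $H_{11}$, append $\Delta H_1$ to get $H_{12}$, append $\Delta H_2$ to get $H_2$), invokes Lemmas~\ref{lemma:goodchcodelemma} and~\ref{lemma:goodsrccodelemma} on the marginal laws of the three matrices, and finishes with a union bound. The ensemble-compatibility issue you flag does in fact hold here: both Appendices~\ref{sec:goodchproof} and~\ref{sec:goodsrcproof} work with the undithered ensemble of kernels of uniformly random homomorphisms (equivalently, uniformly random parity-check matrices with no random coset shift), so a leading row sub-matrix of $H_2$ has exactly the law the source-code lemma requires, and no coupling of dithers is needed.
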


\begin{proof}[\textbf{Proof}:]
See Appendix \ref{sec:goodsrcchproof}
\end{proof}

Note that while choosing the codebooks $\mc{C}_{11}, \mc{C}_{12}$ and $\mc{C}_2$, the perturbation parameters $\epsilon$ in Definitions \ref{defi:goodsrccodedefi} and \ref{defi:goodchcodedefi} need to be chosen appropriately relative to each other so that the $n$-length sequences $(X^n,Y^n,S^n,U^n,V^n,Z^n)$ are jointly typical with high probability. Due to the Markov chains $U \rightarrow X \rightarrow Y \rightarrow V$ and $S \rightarrow (X,Y) \rightarrow (U,V)$, it follows from Markov lemma \cite{berger-lecture} that if $(X^n,Y^n,S^n)$ is generated according to $P_{XYS}$ and if $U^n$ is generated jointly typical with $X^n$ and $V^n$ is generated jointly typical with $Y^n$, then $(X^n,Y^n,S^n,U^n,V^n,Z^n)$ is jointly strongly typical (for an appropriate choice of $\epsilon$) with high probability.

\section{The Coding Theorem} \label{sec:codingthm}

We are given discrete random variables $X$ and $Y$ which are jointly distributed according to $P_{XY}$. Let $\mc{P}$ denote the family of pair of conditional probabilities $(P_{U|X},P_{V|Y})$ defined on $\mc{X} \times \mc{U}$ and $\mc{Y} \times \mc{V}$, where $\mc{U}$ and $\mc{V}$ are finite sets, $|\mc{U}| = \alpha, |\mc{V}| = \beta$. For any $(P_{U|X},P_{V|Y}) \in \mc{P}$, let the induced joint distribution be $P_{XYUV} = P_{XY} P_{U|X} P_{V|Y}$. $U,V$ play the role of auxiliary random variables. Define $G \colon \mc{U} \times \mc{V} \rightarrow \hat{\mc{Z}}$ as that function of $U,V$ that gives the optimal reconstruction $\hat{Z}$ with respect to the distortion measure $d(\cdot,\cdot,\cdot)$. Let $\mc{G}$ denote the image of $G(U,V)$. Let $\mc{T} = \{ A \colon A \mbox{ is abelian}, |\mc{G}| \leq |A| \leq \alpha \beta, \, G(U,V) \subset A \mbox{ with respect to } P_{UV}\}$. It is shown in Appendix \ref{sec:Tnonempty} that the set $\mc{T}$ is non-empty, i.e., there always exists an abelian group $A \in \mc{T}$ in which any function $G(U,V)$ can be embedded. For any $A \in \mc{T}$,  let $A$ be isomorphic to $\oplus_{i=1}^k \ring{p_i^{e_i}}$. Let $\tilde{U} = S_A(S_U^{(A)}(U))$ and $\tilde{V} = S_A(S_V^{(A)}(V))$ where the mappings are as defined in Definitions \ref{defi:embeddingdefi} and \ref{defi:decompdefi}. Define $\tilde{Z} = (\tilde{Z}_1,\dots,\tilde{Z}_k)$ where $\tilde{Z}_i = \tilde{U}_i \oplus \tilde{V}_i$ and the addition is done in the group to which the digits $\tilde{U}_i,\tilde{V}_i$ belong. Assume without loss of generality  that the digits $\tilde{U}_i,\tilde{V}_i,\tilde{Z}_i, 1 \leq i \leq k$ are all non-redundant. If they are not, they can be made so by suitable relabeling of the symbols. Recall the definition of $[Z]_i$ from Lemma \ref{lemma:goodchcodelemma}. The encoding operation of the $X$ and $Y$ encoders proceed in $k$ steps with each step producing one digit of $\tilde{U}$ and $\tilde{V}$ respectively. Let $\pi_A \colon \{1, \dots, k \} \rightarrow \{1, \dots, k\}$ be a permutation. The permutation $\pi_A$ can be thought of as determining the order in which the digits get encoded and decoded. Let the set $\Pi_A(b), 1 \leq b \leq k$ be defined as $\Pi_A(b) = \{ l \colon \pi_A(l) < b \}$. The set $\Pi_A(b)$ contains the indices of all the digits that get encoded before the $b$th stage. At the $b$th stage, let the digits $\tilde{U}_{\pi_A(b)}, \tilde{V}_{\pi_A(b)}$ take values over the group $\ring{p_b}^{r_b}$. With these definitions, an achievable rate region for the problem is presented below.

\begin{thm} \label{thm:mainthm}
For a given source $(X,Y)$, define the region $\mc{RD}_{in}$ as
\begin{align}
\mc{RD}_{in} \triangleq \bigcup_{\stackrel{(P_{U|X}, P_{V|Y}) \in
    \mc{P}}{A \in \mc{T}, \pi_A}} \left\{ (R_1,R_2,D) \colon R_1 \geq
\sum_{b=1}^{k} \min \left(R_{1b}^{(1)}, R_{1b}^{(2)} \right), R_2 \geq \sum_{b=1}^{k} \min \left(R_{2b}^{(1)}, R_{2b}^{(2)} \right)
\right. \\
\left. \phantom{R_1 \geq \sum_{b=1}^{k} \min \left(R_{1b}^{(1)},
    R_{1b}^{(2)} \right)}  D \geq \Bbb{E} d(X,Y, G(U,V)) \right\}
\end{align}
where
\begin{align}
\nonumber R_{1b}^{(1)} &> \left[ \max_{0 \leq i < r_b} \left( \frac{r_b}{r_b-i} \right) \left(H(\tilde{Z}_{\pi_A(b)} \mid \tilde{Z}_{\Pi_A(b)}) - H([\tilde{Z}_{\pi_A(b)}]_i | \tilde{Z}_{\Pi_A(b)}) \right) \right] \\ &- \left[ \min \left( H(\tilde{U}_{\pi_A(b)} \mid X, \tilde{U}_{\Pi_A(b)}), r_b(|H(\tilde{U}_{\pi_A(b)} \mid X, \tilde{U}_{\Pi_A(b)}) - \log p_b^{r_b-1}|^{+}) \right) \right]
\end{align}
and
\begin{align}
\nonumber R_{1b}^{(2)} &> \left[ \max_{0 \leq i < r_b} \left( \frac{r_b}{r_b-i} \right) \left(H(\tilde{U}_{\pi_A(b)} \mid \tilde{Z}_{\Pi_A(b)}) - H([\tilde{U}_{\pi_A(b)}]_i \mid \tilde{Z}_{\Pi_A(b)}) \right) \right] \\ &- \left[ \min \left( H(\tilde{U}_{\pi_A(b)} \mid X, \tilde{U}_{\Pi_A(b)}), r_b(|H(\tilde{U}_{\pi_A(b)} \mid X, \tilde{U}_{\Pi_A(b)}) - \log p_b^{r_b-1}|^{+}) \right) \right]
\end{align}
The quantities $R_{2b}^{(1)}$ and $R_{2b}^{(2)}$ are similarly defined with $(X,U)$ replaced by $(Y,V)$. Then any $(R_1,R_2,D) \in \mc{RD}_{in}^{*}$ is achievable where $^*$ denotes convex closure.
\end{thm}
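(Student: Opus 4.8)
The plan is to prove achievability of each corner point of $\mc{RD}_{in}$ by exhibiting, for a fixed $(P_{U|X},P_{V|Y})\in\mc P$, a group $A\in\mc T$ (the set $\mc T$ is non-empty by Appendix~\ref{sec:Tnonempty}), and a permutation $\pi_A$, a transmission system in the sense of Definition~\ref{defi:probdefi1} whose rates approach $\big(\sum_b\min(R_{1b}^{(1)},R_{1b}^{(2)}),\ \sum_b\min(R_{2b}^{(1)},R_{2b}^{(2)})\big)$ and whose distortion approaches $\Bbb E d(X,Y,G(U,V))$; the region $\mc{RD}_{in}^{*}$ then follows by the union over these parameters and the convex closure (time-sharing). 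Fix such a triple and, by relabelling digits, assume $\pi_A$ is the identity, so that digit $b$ is processed at stage $b$ with the previously recovered digits $(\tilde Z_1,\dots,\tilde Z_{b-1})$ as decoder side information; write $A\cong\bigoplus_{i=1}^{k}\ring{p_i^{e_i}}$ and form $\tilde U,\tilde V,\tilde Z$ and their digits as in Definitions~\ref{defi:embeddingdefi}--\ref{defi:decompdefi}, assuming after relabelling that every digit is non-redundant. Because $\tilde U,\tilde V$ are deterministic functions of $U,V$ and $U\to X\to Y\to V$ is a Markov chain, one verifies that at every stage $b$ the variables $(X,\tilde U_1,\dots,\tilde U_{b-1})$, $(Y,\tilde V_1,\dots,\tilde V_{b-1})$, $(\tilde Z_1,\dots,\tilde Z_{b-1})$, $\tilde U_b$, $\tilde V_b$, $\tilde Z_b$ --- playing the roles of $X,Y,S,U,V,Z$ --- satisfy the two Markov conditions demanded by Lemma~\ref{lemma:goodnestedcodelemma}: the chain $\tilde U_b\to(X,\tilde U_1,\dots,\tilde U_{b-1})\to(Y,\tilde V_1,\dots,\tilde V_{b-1})\to\tilde V_b$ holds, and $(\tilde Z_1,\dots,\tilde Z_{b-1})$ is a function of the conditioning variables.

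At stage $b$ the scheme uses nested group codes over $\ring{p_b}^{r_b}$. The fine code is a good source code (Lemma~\ref{lemma:goodsrccodelemma}) for quantizing $\tilde U_b$ given $(X,\tilde U_1,\dots,\tilde U_{b-1})$, respectively $\tilde V_b$ given $(Y,\tilde V_1,\dots,\tilde V_{b-1})$; it sits inside a coarse code that is a good channel code in one of two ways. In \emph{mode}~1 the two encoders share a single coarse code good for recovering the digit-sum $\tilde Z_b$ from $(\tilde Z_1,\dots,\tilde Z_{b-1})$ (Lemmas~\ref{lemma:goodchcodelemma} and~\ref{lemma:goodnestedcodelemma}); in \emph{mode}~2 the $X$-encoder uses a coarse code good for recovering $\tilde U_b$ itself from $(\tilde Z_1,\dots,\tilde Z_{b-1})$, and symmetrically for $Y$ (Lemmas~\ref{lemma:goodchcodelemma}--\ref{lemma:goodsrccodelemma} and an argument parallel to Lemma~\ref{lemma:goodnestedcodelemma}). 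Each encoder quantizes sequentially, choosing at stage $b$ a fine-code word jointly typical with what it observes, and transmits the bin index of that word (the coset of the coarse code containing it); since the coarse code contains the fine code, this index has per-stage rate approaching $R_{1b}^{(1)}$ or $R_{1b}^{(2)}$ according to the chosen mode, and the two encoders choose modes independently. The key point is that every one of the four mode combinations still lets the decoder recover $\tilde Z_b^{n}$: when both encoders are in mode~1 it adds the two received cosets and uses distributivity of the homomorphism (exactly as in the Korner-Marton scheme) to obtain the syndrome of $\tilde Z_b^{n}$; when an encoder is in mode~2 the decoder first recovers that encoder's digit and then, using its knowledge of the parity-check matrices, reduces to the same syndrome; in all cases it decodes $\tilde Z_b^{n}$ by the good-channel-code property and passes $\tilde Z_b^{n}$ forward as side information. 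Hence the achievable per-stage rates of the two encoders are $\min(R_{1b}^{(1)},R_{1b}^{(2)})$ and $\min(R_{2b}^{(1)},R_{2b}^{(2)})$ \emph{simultaneously}, and summing over $b$ bounds $R_1,R_2$ up to an $\epsilon$. After all $k$ stages the decoder outputs $\hat Z^{n}=S_G^{(A)}\!\big(S_A^{-1}(\tilde Z^{n})\big)$, which by the embedding identity of Definition~\ref{defi:embeddingdefi} equals $G(U^{n},V^{n})$ symbol by symbol.

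It remains to control the error probability and the distortion. Conditioned on stages $1,\dots,b-1$ having decoded correctly --- an event of probability at least $1-(b-1)\epsilon'$ --- the decoder's side information at stage $b$ is the true $(\tilde Z_1,\dots,\tilde Z_{b-1})^{n}$, the observed sequence $(X^{n},\tilde U_1^{n},\dots,\tilde U_{b-1}^{n})$ is jointly typical with high probability so that the stage-$b$ quantization succeeds by the good-source-code property, and the good-channel-code property then bounds the stage-$b$ decoding error by $\epsilon'$. A union bound over the $k$ stages, $k$ being fixed independently of $n$, shows $\tilde Z^{n}$ is recovered correctly with probability tending to one. Throughout, the perturbation parameters $\epsilon$ of Definitions~\ref{defi:goodsrccodedefi}--\ref{defi:goodchcodedefi} are chosen consistently across stages so that, by the chain rule together with the Markov lemma \cite{berger-lecture} applied as in the remark following Lemma~\ref{lemma:goodnestedcodelemma}, the whole tuple $(X^{n},Y^{n},U^{n},V^{n},\tilde Z^{n})$ is jointly strongly typical with high probability; on that event $\hat d(X^{n},Y^{n},\hat Z^{n})=\hat d(X^{n},Y^{n},G(U^{n},V^{n}))\le\Bbb E d(X,Y,G(U,V))+\epsilon$, and on its complement $\hat d$ is bounded by $\max_{x,y,\hat z}d(x,y,\hat z)$, which contributes $o(1)$ to the expectation. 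Letting $n\to\infty$, then taking the union over $(P_{U|X},P_{V|Y})$, $A$, and $\pi_A$, and finally the convex closure yields every point of $\mc{RD}_{in}^{*}$.

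I expect the main obstacle to be the bookkeeping of this sequential, digit-by-digit analysis: propagating joint typicality through the $k$ quantization-and-decoding stages so that a decoding error at one stage does not cascade, and choosing the various perturbation parameters so that the single-stage guarantees of Lemmas~\ref{lemma:goodchcodelemma}--\ref{lemma:goodnestedcodelemma} compose into one statement about the whole scheme. By comparison, verifying the per-stage Markov conditions, checking that all four mode combinations allow the decoder to reach $\tilde Z_b^{n}$, and assembling the nested-code existence statements are comparatively routine once this framework is in place.
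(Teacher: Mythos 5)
Your proposal is correct and follows essentially the same route as the paper: fix $(P_{U|X},P_{V|Y})$, $A\in\mc{T}$, and $\pi_A$; decompose into digit planes; at each stage use nested group codes (a good source code inside a good channel code, via Lemmas~\ref{lemma:goodsrccodelemma}--\ref{lemma:goodnestedcodelemma}); send coset/syndrome indices; and exploit the homomorphism to add syndromes at the decoder in Korner--Marton fashion, producing the rates $R_{1b}^{(1)},R_{1b}^{(2)}$ for the two modes. One thing you do more carefully than the paper: you explicitly verify that the two encoders may pick their modes \emph{independently} at each stage, by observing that when an encoder is in mode~2 the decoder can reconstruct that encoder's digit first and then synthesize the missing syndrome $H_{2b}\tilde U_{\pi_A(b)}^n$ (resp.\ $H_{2b}\tilde V_{\pi_A(b)}^n$) itself before running the mode-1 channel decoder; the paper simply writes ``combining equations (\ref{eq:R1b1}) and (\ref{eq:R1b2})'' without spelling this out, yet this is exactly what makes the coordinate-wise $\min$'s in the theorem statement achievable simultaneously rather than only the two matched-mode corner points. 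Your handling of the per-stage Markov conditions, the union bound across the $k$ stages, and the distortion estimate on the atypical event are all consistent with the paper's (terser) treatment.
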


\begin{proof}[\textbf{Proof}:]
Since the encoders don't communicate with each other, we impose the Markov chain $V-Y-X-U$ on the joint distribution $P_{XYUV}$. The family $\mc{P}$ contains all distributions that satisfy this Markov chain. Fix such a joint distribution. Fix $A \in \mc{T}$ and the permutation
 $\pi_A \colon \{1,\dots,k\} \rightarrow \{1,\dots,k\}$. The
encoding proceeds in $k$ stages with the $b$th stage encoding the digits $\tilde{U}_{\pi_A(b)}, \tilde{V}_{\pi_A(b)}$ in order to produce the digit $\tilde{Z}_{\pi_A(b)}$. For this, the decoder has side information $\tilde{Z}_{\Pi_A(b)}$.

Let $\tilde{U}_{\pi_A(b)}, \tilde{V}_{\pi_A(b)}$ take values over the group $\ringrb{}$. The encoders have two encoding options available at the $b$th stage. They can either encode the digits $\tilde{U}_{\pi_A(b)}$ and $\tilde{V}_{\pi_A(b)}$ directly or encode in such a way that the decoder is able to reconstruct $\tilde{Z}_{\pi_A(b)}$ directly. We present a coding scheme to achieve the latter first.

We shall use a pair of nested group codes $(\mc{C}_{11b}, \mc{C}_{2b})$ and $(\mc{C}_{12b}, \mc{C}_{2b})$ to encode $\tilde{Z}_{\pi_A(b)}$. Let the corresponding parity check matrices of these codes be $H_{11b}, H_{12b}$ and $H_{2b}$ respectively. Let the dimensionality of these matrices be $k_{11b} \times n$, $k_{12b} \times n$ and $k_{2b} \times n$ respectively. These codebooks are all over the group $\ringrb{}$. We need $C_{11b}$ to be a good source code for the triple $(\mc{X} \times \tilde{\mc{U}}_{\Pi_A(b)}, \tilde{\mc{U}}_{\pi_A(b)}, P_{X \tilde{U}_{\Pi_A(b)} \tilde{U}_{\pi_A(b)}})$, $\mc{C}_{12b}$ to be a good source code for the triple $(\mc{Y} \times \tilde{\mc{V}}_{\Pi_A(b)}, \tilde{\mc{V}}_{\pi_A(b)}, P_{Y \tilde{V}_{\Pi_A(b)} \tilde{V}_{\pi_A(b)}})$ and $\mc{C}_{2b}$ to be a good channel code for the triple $(\tilde{\mc{Z}}_{\pi_A(b)}, \tilde{\mc{Z}}_{\Pi_A(b)}, P_{\tilde{Z}_{\pi_A(b)} \tilde{Z}_{\Pi_A(b)}})$.

The encoding scheme used by the $X$-encoder to encode the $b$th digit, $1 \leq b \leq k$ is detailed below. The $X$-encoder looks for a typical sequence $\tilde{U}_{\pi_A(b)}^n \in \mc{C}_{11b}$ such that it is jointly typical with the source sequence $X^n$ and the previous encoder output digits $\tilde{U}_{\Pi_A(b)}^n$. If it finds at least one such sequence, it chooses one of these sequences and transmits the syndrome $Sx_b \triangleq H_{2b} \tilde{U}^n_{\pi_A(b)}$ to the decoder. If it finds no such sequence, it declares an encoding error. The operation of the $Y$-encoder is similar.

Let $\psi_b(\cdot,\cdot)$ be the decoder corresponding to the good channel code $\mc{C}_{2b}$. The decoder action is described by the following series of equations. The decoder receives the syndromes $Sx_b$ and $Sy_b$.
\begin{align}
\nonumber \hat{\tilde{Z}}_{\pi_A(b)} &= \psi_b \left(Sx_b \oplus_{{p_b}^{r_b}} Sy_b, \tilde{Z}^n_{\Pi_A(b)} \right) \\
\nonumber &= \psi_b \left(H_{2b} \tilde{U}^n_{\pi_A(b)} \oplus_{{p_b}^{r_b}} H_{2b} \tilde{V}^n_{\pi_A(b)}, \tilde{Z}^n_{\Pi_A(b)}\right)\\
\nonumber &= \psi_b \left(H_{2b} \left(\tilde{U}^n_{\pi_A(b)} \oplus_{{p_b}^{r_b}} \tilde{V}^n_{\pi_A(b)} \right), \tilde{Z}^n_{\Pi_A(b)} \right) \\
\nonumber &= \psi_b \left(H_{2b} \tilde{Z}^n_{\pi_A(b)}, \tilde{Z}^n_{\Pi_A(b)} \right) \\
&\stackrel{(a)}= \tilde{Z}^n_{\pi_A(b)} \quad \mbox{with high probability}
\end{align}
where (a) follows from the fact that $\mc{C}_{2b}$ is a good channel code for the triple $(\mc{\tilde{Z}}_{\pi_A(b)}, \tilde{\mc{Z}}_{\Pi_A(b)}, P_{\tilde{Z}_{\pi_A(b)} \tilde{Z}_{\Pi_A(b)}})$.

The rate expended by the $X$-encoder at the $b$th stage can be calculated as follows. Since $\mc{C}_{11b}$ is a good source code for the triple $(\mc{X} \times \tilde{\mc{U}}_{\Pi_A(b)}, \tilde{\mc{U}}_{\pi_A(b)}, P_{X \tilde{U}_{\Pi_A(b)} \tilde{U}_{\pi_A(b)}})$, we have from equation (\ref{eq:goodlinsrclimiteq}) that the dimensions of the parity check matrix $H_{11b}$ satisfy
\begin{equation}
\frac{k_{11b}}{n} \log p_b^{r_b} \leq \min \left( H(\tilde{U}_{\pi_A(b)} \mid X, \tilde{U}_{\Pi_A(b)}), r_b(|H(\tilde{U}_{\pi_A(b)} \mid X, \tilde{U}_{\Pi_A(b)}) - \log p_b^{r_b-1}|^{+}) \right) - \epsilon_1
\end{equation}
Since $\mc{C}_{2b}$ is a good channel code for the triple $(\tilde{\mc{Z}}_{\pi_A(b)}, \tilde{\mc{Z}}_{\Pi_A(b)}, P_{\tilde{Z}_{\pi_A(b)} \tilde{Z}_{\Pi_A(b)}})$, the dimensions of the parity check matrix $H_{2b}$ satisfy
\begin{equation}
\frac{k_{2b}}{n} \log p_b^{r_b} \geq \max_{0 \leq i < r_b} \left( \frac{r_b}{r_b-i} \right) \left(H(\tilde{Z}_{\pi_A(b)} \mid \tilde{Z}_{\Pi_A(b)}) - H([\tilde{Z}_{\pi_A(b)}]_i | \tilde{Z}_{\Pi_A(b)}) \right) + \epsilon_2
\end{equation}
The rate of the nested group code in bits would be $R_1 = n^{-1} (k_{2b} - k_{11b}) \log p_b^{r_b}$. Therefore,
\begin{align} \label{eq:R1b1}
\nonumber R_{1b}^{(1)} &\geq \left[ \max_{0 \leq i < r_b} \left( \frac{r_b}{r_b-i} \right) \left(H(\tilde{Z}_{\pi_A(b)} \mid \tilde{Z}_{\Pi_A(b)}) - H([\tilde{Z}_{\pi_A(b)}]_i | \tilde{Z}_{\Pi_A(b)}) \right) \right] \\ &- \left[ \min \left( H(\tilde{U}_{\pi_A(b)} \mid X, \tilde{U}_{\Pi_A(b)}), r_b(|H(\tilde{U}_{\pi_A(b)} \mid X, \tilde{U}_{\Pi_A(b)}) - \log p_b^{r_b-1}|^{+}) \right) \right] + \epsilon_1 + \epsilon_2
\end{align}

The other option that the encoders have is to directly encode the digits $\tilde{U}_{\pi_A(b)}$ and  $\tilde{V}_{\pi_A(b)}$. This can also be accomplished using nested group codes as follows. The $X$ encoder uses the nested group code $(\mc{C}_{11b}, \mc{C}_{21b})$ such that the fine group code $\mc{C}_{11b}$ is a good source code for the triple $(\mc{X} \times \tilde{\mc{U}}_{\Pi_A(b)}, \tilde{\mc{U}}_{\pi_A(b)}, P_{X \tilde{U}_{\Pi_A(b)} \tilde{U}_{\pi_A(b)}})$ and $\mc{C}_{21b}$ is a good channel code for the triple $(\tilde{\mc{U}}_{\pi_A(b)}, \tilde{\mc{Z}}_{\Pi_A(b)}, P_{\tilde{U}_{\pi_A(b)} \tilde{Z}_{\Pi_A(b)}})$. The Y encoder uses the nested group code $(\mc{C}_{12b}, \mc{C}_{22b})$ such that the fine group code $\mc{C}_{12b}$ is a good source code for the triple $(\mc{Y} \times \tilde{\mc{V}}_{\Pi_A(b)}, \tilde{\mc{V}}_{\pi_A(b)}, P_{Y \tilde{V}_{\Pi_A(b)} \tilde{V}_{\pi_A(b)}})$ and $\mc{C}_{22b}$ is a good channel code for the triple $(\tilde{\mc{V}}_{\pi_A(b)}, \tilde{\mc{Z}}_{\Pi_A(b)}, P_{\tilde{V}_{\pi_A(b)} \tilde{Z}_{\Pi_A(b)}})$. The encoding operation is similar to that described earlier and it is easy to verify its correctness.

The rate of this nested group code in bits would be $R_1 = n^{-1} (k_{2b} - k_{11b}) \log p_b^{r_b}$. Therefore,
\begin{align} \label{eq:R1b2}
\nonumber R_{1b}^{(2)} &\geq \left[ \max_{0 \leq i < r_b} \left( \frac{r_b}{r_b-i} \right) \left(H(\tilde{U}_{\pi_A(b)} \mid \tilde{Z}_{\Pi_A(b)}) - H([\tilde{U}_{\pi_A(b)}]_i \mid \tilde{Z}_{\Pi_A(b)}) \right) \right] \\ &- \left[ \min \left( H(\tilde{U}_{\pi_A(b)} \mid X, \tilde{U}_{\Pi_A(b)}), r_b(|H(\tilde{U}_{\pi_A(b)} \mid X, \tilde{U}_{\Pi_A(b)}) - \log p_b^{r_b-1}|^{+}) \right) \right] + \epsilon_1 + \epsilon_2
\end{align}
Combining equations (\ref{eq:R1b1}) and (\ref{eq:R1b2}), we have proved Theorem \ref{thm:mainthm}.
\end{proof}

\noindent \textbf{Remark 1}: The design of the channel code used in the above derivation assumes that the side information available to the decoder at the $b$th stage is $\tilde{Z}_{\Pi_A(b)}$. However, it is possible that at some stage $1 \leq i \leq k$, the encoding was done in such a way that the decoder could decode $(\tilde{U}_{\pi_A(i)}, \tilde{V}_{\pi_A(i)})$ and not just $\tilde{Z}_{\pi_A(b)}$. Taking such considerations into account while designing the channel code for the $b$th stage would lead to a possible improvement of the rate region in Theorem \ref{thm:mainthm}.

\noindent \textbf{Remark 2}: In the above derivation, if the encoders choose to encode the sources $\tilde{U}_{\pi_A(b)}, \tilde{V}_{\pi_A(b)}$ directly instead of encoding the function $\tilde{Z}_{\pi_A(b)}$, further rate gains are possible when one encoder encodes its source conditional on the other source in addition to the side information already available at the decoder. Such improvements are omitted for the sake of clarity of the expressions constituting the definition of the achievable rate region.

\noindent \textbf{Remark 3}: The above coding theorem can be extended to the case of multiple distortion constraints in a straightforward fashion.

\section{Special cases} \label{sec:specialcases}

In this section, we consider the various special cases of the rate region presented in Theorem \ref{thm:mainthm}.

\subsection{Lossless Source Coding using Group Codes} \label{subsec:losslesssrccor}
We start by demonstrating the achievable rates using codes over groups for the problem of lossless source coding. A good group channel code $\mc{C}$ for the triple $(\mc{X}, 0, P_X)$ as defined in Definition \ref{defi:goodchcodedefi} can be used to achieve lossless source coding of the source $X$. The source encoder outputs $Hx^n$ where $H$ is the $k \times n$ parity check matrix of $\mc{C}$. The decoder uses the associated decoding function $\psi(\cdot,\cdot)$ to recover $\psi(Hx^n,0) = x^n$ with high probability. From equation (\ref{eq:goodlinchannellimiteq}), it follows that the dimensions of the parity check matrix satisfy
\begin{equation}
\frac{k}{n} \log p^r \geq \max_{0 \leq i < r} \left( \frac{r}{r-i} \right) (H(X) - H([X]_i))
\end{equation}
Recognizing the term in the left as the rate of the coding scheme, we get the following corollary to Theorem \ref{thm:mainthm}.
\begin{cor} \label{cor:losslesscodingcor}
Suppose $X$ is a non redundant random variable over the group $\ringr{}$ and the decoder wants to reconstruct $X$ losslessly. Then, there exists a group based coding scheme that achieves the rate
\begin{equation} \label{eq:groupentbound}
R \geq \max_{0 \leq i < r} \left( \frac{r}{r-i} \right) (H(X) - H([X]_i))
\end{equation}
\end{cor}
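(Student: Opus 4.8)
The plan is to derive Corollary \ref{cor:losslesscodingcor} directly as a specialization of Lemma \ref{lemma:goodchcodelemma} (the existence of good group channel codes), by taking the side information to be a trivial (constant) random variable. Concretely, I would apply Lemma \ref{lemma:goodchcodelemma} to the triple $(\mc{X}, \mc{S}, P_{XS})$ where $\mc{S}$ is a one-element set, so $S$ is deterministic; then $H(X|S) = H(X)$ and $H([X]_i|S) = H([X]_i)$ for every $i$, and the right-hand side of equation (\ref{eq:goodlinchannellimiteq}) becomes exactly $\max_{0 \leq i < r} \left( \frac{r}{r-i} \right)(H(X) - H([X]_i))$. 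The hypothesis of the lemma requires $|\mc{X}| = p^r$ a prime power and $P_X$ non-redundant, which is precisely what the corollary assumes.

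The second step is to translate the ``good channel code'' property of Definition \ref{defi:goodchcodedefi} into an operational lossless source code. Lemma \ref{lemma:goodchcodelemma} guarantees a sequence of group codes $\mc{C}$ with parity-check matrices $H$ of dimension $k(n) \times n$ satisfying the rate limit, together with a decoding map $\psi \colon \ring{p^r}^{k} \times \mc{S}^n \rightarrow \ring{p^r}^n$ with $P(\psi(Hx^n, s^n) = x^n) \geq 1 - \epsilon$; since $s^n$ is a constant, this is just a map $\psi(\cdot, 0)$ with $P(\psi(Hx^n, 0) = x^n) \geq 1 - \epsilon$. I would then define the encoder $f(x^n) = Hx^n$ (which takes at most $p^{rk(n)}$ values) and the decoder $g = \psi(\cdot, 0)$; this is a lossless coding system at rate $\frac{k(n)}{n}\log p^r$ with vanishing error probability. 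Passing to the limit via equation (\ref{eq:goodlinchannellimiteq}) shows that any rate $R$ exceeding the right-hand side of (\ref{eq:groupentbound}) is achievable, which is the claim.

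There is essentially no hard part here: the corollary is a bookkeeping consequence of the already-proved Lemma \ref{lemma:goodchcodelemma}, and the only thing to check carefully is that setting $S$ constant is legitimate (the lemma is stated for arbitrary finite $\mc{S}$, so a singleton is allowed) and that the entropy terms collapse as claimed. One minor subtlety worth a sentence is the non-redundancy requirement: if the given $X$ were redundant over $\ringr{}$, one would first relabel symbols so that $X$ is non-redundant when viewed over $\ring{p^{r'}}$ for the appropriate $r' \leq r$, as noted in the discussion following Definition \ref{defi:nrdist}, and then apply the lemma with $r'$ in place of $r$. I would mention this only in passing, since the corollary as stated already assumes non-redundancy.
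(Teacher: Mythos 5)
Your proposal is correct and matches the paper's own proof essentially step for step: the paper also specializes Lemma \ref{lemma:goodchcodelemma} to the triple $(\mc{X}, 0, P_X)$ with trivial side information, uses the encoder $f(x^n) = Hx^n$ and decoder $\psi(\cdot,0)$, and reads off the rate from equation (\ref{eq:goodlinchannellimiteq}). Your closing remark about the non-redundancy hypothesis is a reasonable aside but not needed since the corollary assumes it directly.
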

Putting $r=1$ in equation (\ref{eq:groupentbound}) reduces it to the well known result that linear codes over prime fields can compress a source down to its entropy. Thus, this achievable rate region using group codes can be strictly greater than Shannon entropy. A sufficient condition for the existence of group codes that attain the entropy bound is that
\begin{equation}
H([X]_i) \geq \frac{i}{r} H(X) \quad \mbox{for } 0 < i < r
\end{equation}

\subsection{Lossy Source Coding using Group Codes} \label{subsec:lossysrccor}
We next consider the case of lossy point to point source coding using codes built over the group $\ringr{}$. Consider a memoryless source $X$ with distribution $P_X$. The decoder attempts to reconstruct $U$ that is within distortion $D$ of $X$ as specified by some additive distortion measure $d \colon \mc{X} \times \mc{U} \rightarrow \Bbb{R}^{+}$. Suppose $U$ takes its values from the group $\ringr{}$. A good group source code $\mc{C}$ for the triple $(\mc{X},\mc{U},P_{XU})$ as defined in Definition \ref{defi:goodsrccodedefi} can be used to achieve lossy coding of the source $X$ provided the joint distribution $P_{XU}$ is such that $\Bbb{E}(d(X,U)) \leq D$ and $U$ is non-redundant. The source encoder outputs $u^n \in \mc{C}$ that is jointly typical with the source sequence $x^n$. An encoding error is declared if no such $u^n$ is found. The decoder uses $u^n$ as its reconstruction of the source $x^n$. From equation (\ref{eq:goodlinsrclimiteq}), it follows that the dimensions  of the parity check matrix associated with $\mc{C}$ satisfy
\begin{equation}
\frac{k}{n} \log p^r \leq \min (H(U|X), r|H(U|X) - \log p^{r-1}|^{+}).
\end{equation}
The rate of this encoding scheme is $R = \left( 1 - \frac{k}{n} \right) \log p^r$. Thus, we get the following corollary to Theorem \ref{thm:mainthm}.
\begin{cor} \label{cor:lossycodingcor}
Let $X$ be a discrete memoryless source and $\mc{U}$ be the reconstruction alphabet. Suppose $\mc{U} = \ringr{}$ and the decoder wants to reconstruct the source to within distortion $D$ as measured by the fidelity criterion $d(\cdot,\cdot)$. Without loss of generality, assume that $U$ is non-redundant. Then, there exists a group based coding scheme that achieves the rate
\begin{equation} \label{eq:groupRDbound1}
R \geq \min_{\stackrel{P_{U|X}}{\Bbb{E}d(X,U) \leq D}} \log p^r - \left( \min \left(H(U|X), r |H(U|X) - \log p^{r-1}|^{+} \right) \right).
\end{equation}
\end{cor}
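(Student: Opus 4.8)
The plan is to use a single good group source code supplied by Lemma \ref{lemma:goodsrccodelemma} as a quantization codebook and to run the standard one–shot covering argument, now over the group $\ringr{}$. Nearly all of the real work is already packaged inside Lemma \ref{lemma:goodsrccodelemma}, so this argument is mostly bookkeeping; the same statement can alternatively be read off from Theorem \ref{thm:mainthm} by taking $Y$ and $V$ trivial, omitting the binning step and setting $G(U,V)=U$.

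First I would fix a conditional distribution $P_{U|X}$ on $\mc{X}\times\ringr{}$ that attains (or comes within $\epsilon$ of) the minimum in $(\ref{eq:groupRDbound1})$, so that $\mathbb{E}\,d(X,U)\le D$; by relabeling symbols we may assume $U$ is non–redundant over $\ringr{}$, since if $P_U$ is supported in $p\ringr{}\cong\ring{p^{r-1}}$ we may simply replace $\ringr{}$ by this subgroup and iterate. Applying Lemma \ref{lemma:goodsrccodelemma} to the triple $(\mc{X},\ringr{},P_{XU})$ yields a sequence of group codes $\mc{C}$ over $\ringr{}$, with $k(n)\times n$ parity–check matrices $H$, such that $P_X^n(A_{\epsilon}(\mc{C}))\ge 1-\epsilon$ for all large $n$ and $\lim_{n\to\infty}\frac{k(n)}{n}\log p^r=\min\big(H(U|X),\,r|H(U|X)-\log p^{r-1}|^{+}\big)$. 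We may take $H$ surjective onto $\ringr{k(n)}$ (the matrices constructed in Appendix \ref{sec:goodsrcproof} have this property), so that $|\mc{C}|=|\ker H|=p^{r(n-k(n))}$.

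Next I would describe the scheme and bound its rate and distortion. Given $x^n$, the encoder searches for some $u^n\in\mc{C}$ with $(x^n,u^n)\in A_{\epsilon}^{(n)}(X,U)$; by the definition $(\ref{eq:Asetdefi})$ of $A_{\epsilon}(\mc{C})$ together with the goodness condition $(\ref{eq:goodlinsrceq})$, such a $u^n$ exists except on a set of $x^n$ of probability at most $\epsilon$, on which an encoding error is declared and a fixed sequence is output as reconstruction. When no error occurs the encoder transmits the index of the chosen $u^n$ within $\mc{C}$, at rate $\frac{1}{n}\log|\mc{C}|=\big(1-\frac{k(n)}{n}\big)\log p^r$, and the decoder outputs this $u^n$; joint typicality and the usual bound on empirical averages then give $\hat d(x^n,u^n)\le\mathbb{E}\,d(X,U)+\epsilon\,d_{\max}\le D+\epsilon\,d_{\max}$, where $d_{\max}\triangleq\max_{x,u}d(x,u)<\infty$. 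Bounding the per–letter distortion by $d_{\max}$ on the error event, the overall expected distortion is at most $D+2\epsilon\,d_{\max}$. Letting $n\to\infty$, the rate tends to $\log p^r-\min\big(H(U|X),\,r|H(U|X)-\log p^{r-1}|^{+}\big)$ while the distortion stays below $D+2\epsilon\,d_{\max}$; letting $\epsilon\to 0$ and then optimizing over the admissible test channels $P_{U|X}$ gives $(\ref{eq:groupRDbound1})$.

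At the level of the corollary there is no substantial obstacle: the genuinely nontrivial step — producing a group code over $\ringr{}$ whose translates almost cover the typical set while $k(n)/n$ is as large as $(\ref{eq:goodlinsrclimiteq})$ allows — is exactly Lemma \ref{lemma:goodsrccodelemma}, proved in Appendix \ref{sec:goodsrcproof}. The only items needing any care are (i) the claim that the selected codeword can be communicated at rate $\big(1-k(n)/n\big)\log p^r$, i.e.\ that the good source code may be taken with $|\mc{C}|=p^{r(n-k(n))}$ without altering the limit in $(\ref{eq:goodlinsrclimiteq})$, and (ii) folding the encoding–failure event into the distortion estimate, which is routine because $d$ is bounded.
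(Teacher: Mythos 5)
Your proposal is correct and follows essentially the same route as the paper: take the good group source code guaranteed by Lemma \ref{lemma:goodsrccodelemma} for a test channel meeting the distortion constraint, encode by joint-typicality search within $\mc{C}$, read off the rate as $(1-k(n)/n)\log p^r$, and optimize over $P_{U|X}$. Your additional care about the surjectivity of $H$ and the contribution of the encoding-failure event to the expected distortion is sound bookkeeping that the paper leaves implicit, but it does not change the argument.
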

If $U$ takes values in a general abelian group of order $n$ that is not necessarily a primary cyclic group, then a decomposition based approach similar to the one used in the proof of Theorem \ref{thm:mainthm} can be used. Suppose $n = \prod_{i=1}^{k} p_i^{e_i}$ is the prime factorization of $n$. Then, the group in which $U$ takes values is isomorphic to $\oplus_{i=1}^k \ring{p_i^{e_i}}$ where $p_i$ are not necessarily distinct primes. The random variable $U$ can be decomposed into its constituent digits $(U_1,\dots,U_k)$ which can then be encoded sequentially. The achievable rate can be obtained in a straightforward way. A simplification occurs when $\mc{U}$ is treated as a subset of the group $\oplus_{i=1}^k \ring{p_i}$ where $p_i$ are prime. In this case, the following rate-distortion bound is obtained.
\begin{cor} \label{cor:linsrccodecor}
Let $X$ be a discrete memoryless source and $\mc{U}$ be the reconstruction alphabet. Let $p_1,\dots,p_k$ be primes such that $\prod_{i=1}^k p_i \geq |\mc{U}|$. Suppose the decoder wants to reconstruct the source to within distortion $D$ as measured by the fidelity criterion $d(\cdot,\cdot)$. Then, there exists a group based coding scheme that achieves the rate
\begin{equation} \label{eq:groupRDbound2}
R \geq \min_{\stackrel{P_{U|X}}{\Bbb{E}d(X,U) \leq D}} \left( \sum_{i=1}^k \log p_i \right) - H(U|X)
\end{equation}
\end{cor}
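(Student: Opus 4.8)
The plan is to obtain this corollary as the point-to-point specialization of the digit-decomposition technique used to prove Theorem \ref{thm:mainthm}, dropping the binning (channel-code) component entirely since the single decoder already has every piece of side information it could want. First I would fix a test channel $P_{U|X}$ with $\Bbb{E}d(X,U)\le D$ and choose an injection $S\colon\mc{U}\to G$ where $G=\ring{p_1}\oplus\dots\oplus\ring{p_k}$; such an $S$ exists because $|G|=\prod_i p_i\ge|\mc{U}|$. Writing $\bar U=S(U)$ and applying the isomorphism of Fact \ref{fact:Gdecompfact}, I decompose $\bar U=(U_1,\dots,U_k)$ with $U_b$ valued in $\ring{p_b}$, and relabel so that each digit is non-redundant. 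A digit that is identically zero simply contributes a rate-$\log p_b$ term that is trivially achievable with the degenerate code $\mc{C}_b=\ring{p_b}^n$, so this loose case causes no difficulty and I will not dwell on it.

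Next I would encode the digits sequentially for $b=1,\dots,k$, exactly as in Corollary \ref{cor:lossycodingcor} but chained. For stage $b$, invoke Lemma \ref{lemma:goodsrccodelemma} with $r=1$ to produce a group code $\mc{C}_b\subset\ring{p_b}^n$ that is a good source code for the triple $\bigl(\mc{X}\times\prod_{j<b}\ring{p_j},\ring{p_b},P_{XU_1\dots U_b}\bigr)$; the key simplification is that for $r=1$ the right side of (\ref{eq:goodlinsrclimiteq}) collapses to $H(U_b\mid X,U_1,\dots,U_{b-1})$ because the penalty term $r|H(U_b\mid\cdot)-\log p_b^{r-1}|^{+}$ becomes $|H(U_b\mid\cdot)|^{+}=H(U_b\mid\cdot)$. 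The encoder at stage $b$ looks for $u_b^n\in\mc{C}_b$ jointly typical with $(x^n,u_1^n,\dots,u_{b-1}^n)$ and transmits its index among the at most $p_b^{n-k_b(n)}$ codewords of $\mc{C}_b$, at rate $\log p_b-n^{-1}k_b(n)\log p_b\to\log p_b-H(U_b\mid X,U_1,\dots,U_{b-1})$. As in the remark following Lemma \ref{lemma:goodnestedcodelemma}, I would nest the perturbation parameters $\epsilon$ of the successive good-source-code definitions suitably and use the Markov/typicality lemma so that, with high probability, all $k$ quantizations succeed and $(X^n,U_1^n,\dots,U_k^n)$ is jointly typical; a union bound over the $k$ stages controls the total failure probability.

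The decoder then recovers every $u_b^n$ exactly — the indices are sent over noiseless links, so unlike in Theorem \ref{thm:mainthm} there is no channel-decoding error — reassembles $\bar u^n\in G^n$, and applies $S^{-1}$ coordinatewise, assigning an arbitrary symbol of $\mc{U}$ on the vanishing fraction of coordinates where $\bar u_i\notin S(\mc{U})$. Since $(x^n,\bar u^n)$ is jointly typical with respect to $P_{X\bar U}$ and $d$ is bounded on the finite alphabet, the resulting per-letter distortion is at most $\Bbb{E}d(X,U)+\epsilon'\le D+\epsilon'$. Adding up the per-digit rates and using the chain rule together with the injectivity of $U\mapsto\bar U=(U_1,\dots,U_k)$ — which gives $\sum_{b}H(U_b\mid X,U_1,\dots,U_{b-1})=H(U_1,\dots,U_k\mid X)=H(U\mid X)$ — the overall rate tends to $\sum_{i=1}^{k}\log p_i-H(U\mid X)$. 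Minimizing over admissible $P_{U|X}$ and taking the convex closure yields the stated region.

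The routine ingredients are the $r=1$ evaluation of Lemma \ref{lemma:goodsrccodelemma}, the noiseless transmission of quantization indices, and the chain-rule accounting. The step that needs genuine care is the interaction between the successive quantizations: I would carry out the point-to-point analogue of the $\epsilon$-bookkeeping from the proof of Theorem \ref{thm:mainthm}, verifying that quantizing $X^n$ into $U_b^n$ at stage $b$ does not spoil the joint typicality of the digits already produced and that the good-source-code guarantee of Lemma \ref{lemma:goodsrccodelemma} indeed applies with the enlarged observation alphabet $\mc{X}\times\prod_{j<b}\ring{p_j}$. That bookkeeping, though still routine, is where essentially all of the work lies; the remainder is a direct transcription of machinery already established in the paper.
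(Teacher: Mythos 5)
Your proposal is correct and follows the route the paper itself sketches (but does not spell out) in the paragraph immediately preceding the corollary: embed $\mc{U}$ as a subset of $\oplus_{i=1}^k \ring{p_i}$, decompose $\bar{U}$ into its digits, encode them sequentially using the good group source codes of Lemma \ref{lemma:goodsrccodelemma} with $r=1$ (for which the penalty term in (\ref{eq:goodlinsrclimiteq}) collapses to $H(U_b\mid\cdot)$), transmit each codeword index directly as in Corollary \ref{cor:lossycodingcor}, and use the chain rule with injectivity of the embedding to sum the per-digit rates to $\sum_i\log p_i - H(U|X)$. The bookkeeping you flag (degenerate digits, $\epsilon$-nesting across stages, and the vanishing fraction of coordinates outside $S(\mc{U})$) is handled appropriately; nothing essential is missing.
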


Corollary \ref{cor:linsrccodecor} can be viewed as providing an achievable rate-distortion pair for lossy source coding using linear codes built over Galois fields. Note that it is possible to construct codebooks with rate $R = H(U) - H(U|X)$ by choosing codewords independently and uniformly from the set $\typset{}{U}$. By imposing the group structure on the codebook, we incur a rate loss of $( \sum_{i=1}^k \log p_i - H(U))$ bits per sample. This rate loss is strictly positive unless the random variable $U$ is uniformly distributed over $\oplus_{i=1}^k \ring{p_i}$.

\subsection{Nested Linear Codes}
We specialize the rate region of Theorem \ref{thm:mainthm} to the case when the nested group codes are built over cyclic groups of prime order, i.e., over Galois fields of prime order. In this case, group codes over $\ringr{}$ reduce to the well known linear codes over prime fields. It was already shown in Sections \ref{subsec:losslesssrccor} and \ref{subsec:lossysrccor} that Lemmas \ref{lemma:goodchcodelemma} and \ref{lemma:goodsrccodelemma} imply that linear codes achieve the entropy bound and incur a rate loss while used in lossy source coding. In this section, we demonstrate the implications of Theorem \ref{thm:mainthm} when specialized to the case of nested linear codes, i.e., when $r$ is set to $1$.

\subsubsection{Shannon Rate-Distortion Function}
We remark that Theorem \ref{thm:mainthm} shows the existence of nested linear codes that can be used to approach the rate-distortion bound in the single-user setting for arbitrary discrete sources and arbitrary distortion measures.

\begin{cor} \label{cor:RDcor}
Let $X$ be a discrete memoryless source with distribution $P_X$ and let $\mc{\hat{X}}$ be the reconstruction alphabet. Let the fidelity criterion be given by $d \colon \mc{X} \times \mc{\hat{X}} \rightarrow \Bbb{R}^{+}$. Then, there exists a nested linear code $(\mc{C}_1,\mc{C}_2)$ that achieves the rate-distortion bound
\begin{equation}
R(D) = \min_{\stackrel{P_{\hat{X}|X}}{\Bbb{E}d(X,\hat{X}) \leq D}} I(X;\hat{X})
\end{equation}
\end{cor}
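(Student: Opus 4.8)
The plan is to derive Corollary \ref{cor:RDcor} as the single-user specialization of Theorem \ref{thm:mainthm}. The first step is to make the second encoder inert: take $Y$ (hence the auxiliary variable $V$) to be a constant and the side-information variable $S$ to be trivial, so that $R_2=0$, the Markov chain $V-Y-X-U$ is vacuous, and $\mc{P}$ reduces to the family of test channels $P_{U|X}$. Set $\mc{U}=\hat{\mc{X}}$ (so $\hat{\mc{Z}}=\hat{\mc{X}}$), put $U=\hat{X}$ with $P_{U|X}=P_{\hat{X}|X}$, and let $G$ be the optimal reconstruction map; since $G=\mathrm{id}$ is one admissible choice, $\mathbb{E}d(X,G(U))\le \mathbb{E}d(X,\hat{X})$, so every test channel meeting the distortion constraint remains admissible. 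The key move is the choice of embedding group: pick $A\in\mc{T}$ to be a direct sum of \emph{prime-order} cyclic groups — concretely $A=\ring{q}$ for a prime $q\ge|\hat{\mc{X}}|$, with $\hat{\mc{X}}$ sitting inside $\ring{q}$ as a subset, $S_U^{(A)}$ the inclusion and $S_G^{(A)}$ a map that is the identity on $\hat{\mc{X}}$ (alternatively $A=\bigoplus_{i=1}^m\ring{2}$, to exhibit genuinely binary nested codes). Such an $A$ lies in $\mc{T}$ by the (trivially $V$-free) version of the argument in Appendix \ref{sec:Tnonempty}, and because $V$ is a constant its image $\bar{V}$ can be taken to be the identity of $A$, so every digit obeys $\tilde{Z}_b=\tilde{U}_b\oplus\tilde{V}_b=\tilde{U}_b$. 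Since every constituent group has $r_b=1$, all the group codes appearing in Theorem \ref{thm:mainthm} are now ordinary linear codes over prime fields, i.e.\ nested linear codes.

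The second step is to run the scheme of Theorem \ref{thm:mainthm} with these choices. For each digit $b$, taken in a fixed order $\pi_A$, use a nested linear code $(\mc{C}_{1b},\mc{C}_{2b})$ in which the fine code $\mc{C}_{1b}$ is a good source code for $(\mc{X}\times\tilde{\mc{U}}_{\Pi_A(b)},\tilde{\mc{U}}_{\pi_A(b)},\cdot)$ and the coarse code $\mc{C}_{2b}$ is a good channel code for $(\tilde{\mc{U}}_{\pi_A(b)},\tilde{\mc{U}}_{\Pi_A(b)},\cdot)$, the already-decoded digits serving as decoder side information. The encoder quantizes $(X^n,\tilde{u}^n_{\Pi_A(b)})$ to some $\tilde{u}^n_{\pi_A(b)}\in\mc{C}_{1b}$ — which exists with high probability by the good-source-code property, the running sequence staying jointly typical by the Markov lemma — and transmits the index of its coset in $\mc{C}_{1b}/\mc{C}_{2b}$; the decoder recovers that codeword from the index and the side information by the good-channel-code property. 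Concatenating the decoded digits and applying $S_G^{(A)}$ componentwise produces $\hat{X}^n$ with $\mathbb{E}\hat{d}(X^n,\hat{X}^n)\le D+\epsilon$. Putting $r_b=1$ in Lemmas \ref{lemma:goodsrccodelemma} and \ref{lemma:goodchcodelemma}, and noting that for $r_b=1$ the only surviving term in (\ref{eq:goodlinchannellimiteq}) is $i=0$ where $[\cdot]_0$ is constant, gives $n^{-1}k_{1b}\log p_b\to H(\tilde{U}_{\pi_A(b)}\mid X,\tilde{U}_{\Pi_A(b)})$ and $n^{-1}k_{2b}\log p_b\to H(\tilde{U}_{\pi_A(b)}\mid\tilde{U}_{\Pi_A(b)})$, so the $b$-th digit costs rate $n^{-1}(k_{2b}-k_{1b})\log p_b\to I(X;\tilde{U}_{\pi_A(b)}\mid\tilde{U}_{\Pi_A(b)})$ (the two options $R_{1b}^{(1)},R_{1b}^{(2)}$ of Theorem \ref{thm:mainthm} coinciding here because $\tilde{Z}_b=\tilde{U}_b$).

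The third step is the rate bookkeeping, which telescopes: the total rate is $\sum_b I(X;\tilde{U}_{\pi_A(b)}\mid\tilde{U}_{\Pi_A(b)})=I(X;\tilde{U}_1,\dots,\tilde{U}_k)=I(X;\bar{U})=I(X;U)=I(X;\hat{X})$, using the chain rule and the fact that $\bar{U}$ and $U=\hat{X}$ are related by a composition of injective and bijective maps. Optimizing over all test channels $P_{\hat{X}|X}$ with $\mathbb{E}d(X,\hat{X})\le D$ then yields the achievable rate $\min I(X;\hat{X})=R(D)$; the convex closure in Theorem \ref{thm:mainthm} contributes nothing since $R(D)$ is convex and non-increasing. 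The remaining points — relabelling so that each digit $\tilde{U}_b$ is non-redundant (or dismissing the degenerate rate-zero case where $U$ is a constant) and the usual passage from expected to probabilistic distortion — are handled exactly as in the proof of Theorem \ref{thm:mainthm}.

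I expect the substance of the argument to be conceptual rather than a technical hurdle. A single linear code over a prime field is \emph{not} a Shannon-good source code: Corollaries \ref{cor:lossycodingcor}--\ref{cor:linsrccodecor} exhibit a strictly positive rate loss $\big(\sum_i\log p_i - H(U)\big)$. What makes the nested construction collapse to the Shannon bound is that the coarse code is forced to have the \emph{larger} rate parameter $H(\tilde{U}_{\pi_A(b)}\mid\tilde{U}_{\Pi_A(b)})$ — rather than $H(\tilde{U}_{\pi_A(b)}\mid X,\tilde{U}_{\Pi_A(b)})$ — precisely so that the decoder can invert the bin map on the typical set, and the \emph{difference} of the fine and coarse parameters is exactly a conditional mutual information with no residual loss. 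Establishing this cancellation rigorously needs the digit-by-digit sequential encoding, the simultaneous existence of a nested pair whose fine half is a good source code and whose coarse half is a good channel code (the single-source analogue of Lemma \ref{lemma:goodnestedcodelemma}), and the Markov-lemma joint-typicality bookkeeping — all of which are already carried out inside the proof of Theorem \ref{thm:mainthm}, so the corollary reduces to substituting $r_b=1$ and simplifying.
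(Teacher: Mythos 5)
Your proposal is correct and follows essentially the same route as the paper: pick a prime $q$ with $\hat{\mc{X}}\subset\ring{q}$ (with $\hat{X}$ non-redundant), take the side information $S$ to be degenerate, and build a nested linear code over $\ring{q}$ whose fine code is a good source code for $(\mc{X},\hat{\mc{X}},P_{X\hat{X}})$ and whose coarse code is a good channel code for $(\hat{\mc{X}},\{0\},P_{\hat{X}S})$, so that by Lemmas~\ref{lemma:goodsrccodelemma} and~\ref{lemma:goodchcodelemma} with $r=1$ the rate is $n^{-1}(k_2-k_1)\log q\to H(\hat X)-H(\hat X\mid X)=I(X;\hat X)$. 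The only cosmetic difference is that the paper jumps directly to the two lemmas with $A=\ring{q}$ (a single digit, so no chain-rule telescoping is needed), whereas you carry along the full digit-decomposition machinery of Theorem~\ref{thm:mainthm} and then observe that it collapses; both are fine, and your closing discussion of why the rate loss cancels in the nested construction is an accurate reading of the mechanism.
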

\begin{proof}[\textbf{Proof}:]
Let the optimal forward test channel that achieves the bound be given by $P_{\hat{X}|X}$. Suppose $q$ is a prime such that $\mc{\hat{X}} \subset \ring{q}$ and $\hat{X}$ is non-redundant. The rate bound, given by $I(X;\hat{X})$ can be approached using a nested linear code $(\mc{C}_1,\mc{C}_2)$ built over the group $\ring{q}$. Here $\mc{C}_1$ is a good source code for the triple $(\mc{X},\mc{\hat{X}}, P_{X, \hat{X}})$ and $\mc{C}_2$ is a good channel code for the triple $(\mc{\hat{X}},\mc{S}, P_{\hat{X}S})$ where $\mc{S} = \{0\}$ and $S$ is a degenerate random variable with $P_S(0) = 1$. It follows from Lemmas \ref{lemma:goodsrccodelemma} and \ref{lemma:goodchcodelemma} that the dimensions of the parity check matrices associated with $\mc{C}_1$ and $\mc{C}_2$ satisfy
\begin{align}
\lim_{n \rightarrow \infty} \frac{k_1(n)}{n} \log q &= H(\hat{X} | X) \\
\lim_{n \rightarrow \infty} \frac{k_2(n)}{n} \log q &= H(\hat{X})
\end{align}
Thus, the rate achieved by this scheme is given by $n^{-1} (k_2(n)-k_1(n)) \log q = I(X;\hat{X})$.
\end{proof}
This can be intuitively interpreted as follows. For a code to approach the optimal rate-distortion function, the ``Voronoi'' region (under an appropriate encoding rule)  of most of the codewords should have a certain shape (say, shape A), and a high-probability set of codewords  should  be bounded in a region that has a certain shape (say, shape B). We choose $\mc{C}_1$ such that the ``Voronoi'' region (under the joint typicality encoding operation with respect to $p_{\hat{X},X}$) of each codeword has shape A. $\mc{C}_2$ is chosen such that its ``Voronoi'' region has shape B. Hence the set of ``coset leaders'' of $\mc{C}_1$ in $\mc{C}_2$ forms a code that can approach the optimal rate-distortion function. This reminds us of a similar phenomenon first observed in the case of Gaussian sources with mean squared error criterion in \cite{eyuboglu93}, where the performance of a quantizer is measured by so-called granular gain and boundary gain. Granular gain measures how closely the Voronoi regions of the codewords approach a sphere, and boundary gain measures how closely  the boundary region approaches a sphere.

\subsubsection{Berger-Tung Rate Region}
We now show that Theorem \ref{thm:mainthm} implies that nested linear codes built over prime fields can achieve the rate region of the Berger-Tung based coding scheme presented in Lemma \ref{lemma:BTschemelemma}.

\begin{cor} \label{cor:BTcor}
Suppose we have a pair of correlated discrete sources $(X,Y)$ and the decoder is interested in reconstructing $\hat{Z}$ to within distortion $D$ as measured by a fidelity criterion $d \colon \mc{X} \times \mc{Y} \times \mc{\hat{Z}} \rightarrow \Bbb{R}^{+}$. For this problem, an achievable rate region using nested linear codes is given by
\begin{align} \label{eq:BTrateregion}
\nonumber \mc{RD}_{BT} &= \bigcup_{(P_{U|X},P_{V|Y}) \in   \mc{P}} \left\{ (R_1,R_2) \colon R_1 \geq I(X;U|Y), \right. \\
&\left. R_2 \geq I(Y;V|X), R_1 + R_2 \geq I(X;U)+I(Y;V)-I(U;V) \right\}
\end{align}
where $\mc{P}$ is the family of all joint distributions $P_{XYUV}$ that satisfy the Markov chain $U-X-Y-V$ such that the distortion criterion $\Bbb{E}d(X,Y,\hat{Z}(U,V)) \leq D$ is met. Here $\hat{Z}(U,V)$ is the optimal reconstruction of $\hat{Z}$ with respect to the distortion criterion given $U$ and $V$.
\end{cor}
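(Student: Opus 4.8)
The plan is to obtain the region (\ref{eq:BTrateregion}) as the specialization of Theorem~\ref{thm:mainthm} to prime-order cyclic groups (take $r_b=1$ for every digit, so that all the group codes involved are ordinary linear codes over prime fields) together with a particular choice of embedding group in which the images of $U$ and $V$ occupy disjoint blocks of digit planes. The two extreme points of the Berger--Tung pentagon will arise from the two orders in which these two blocks can be decoded, and everything else will be supplied by the convex-closure operation already present in Theorem~\ref{thm:mainthm}.

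First I would simplify the rate expressions of Theorem~\ref{thm:mainthm} at $r_b=1$. Then $\log p_b^{\,r_b-1}=\log 1=0$, so the inner minimum reduces to $H(\tilde U_{\pi_A(b)}\mid X,\tilde U_{\Pi_A(b)})$, and the outer maximum over $0\le i<r_b$ ranges only over $i=0$, where the weight $r_b/(r_b-i)$ equals $1$ and $[\,\cdot\,]_0$ is a constant random variable so that $H([\,\cdot\,]_0\mid\cdot)=0$. Hence $R_{1b}^{(1)}=H(\tilde Z_{\pi_A(b)}\mid\tilde Z_{\Pi_A(b)})-H(\tilde U_{\pi_A(b)}\mid X,\tilde U_{\Pi_A(b)})$ and $R_{1b}^{(2)}=H(\tilde U_{\pi_A(b)}\mid\tilde Z_{\Pi_A(b)})-H(\tilde U_{\pi_A(b)}\mid X,\tilde U_{\Pi_A(b)})$, with the analogous reductions for $R_{2b}^{(1)},R_{2b}^{(2)}$ obtained by swapping $(X,U)$ for $(Y,V)$; these are precisely the nested-linear-code rates produced by Lemmas~\ref{lemma:goodsrccodelemma} and \ref{lemma:goodchcodelemma} at $r=1$.

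Next, fix $(P_{U|X},P_{V|Y})$ with $U-X-Y-V$ and the distortion constraint met; if $U$ (resp.\ $V$) is almost surely constant the terms involving it vanish and there is nothing to prove, so we may assume non-redundancy. Relabel the elements of $\mc U$ and $\mc V$ as subsets of $\ring{\alpha}$ and $\ring{\beta}$ for primes $\alpha\ge|\mc U|$ and $\beta\ge|\mc V|$ (the family $\mc P$ of Theorem~\ref{thm:mainthm} allows arbitrary finite auxiliary alphabets). Take $A=\ring{\alpha}\oplus\ring{\beta}$ with $S_U^{(A)}(u)=(u,0)$ and $S_V^{(A)}(v)=(0,v)$, so that $\bar U+_A\bar V=(U,V)$; composing with the optimal reconstruction map then exhibits $G(U,V)\subset A$, and $|\mc G|\le|A|=\alpha\beta$, so $A\in\mc T$. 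This $A$ has two digit planes: digit $1$ over $\ring{\alpha}$ with $\tilde U_1=U$, $\tilde V_1\equiv 0$, $\tilde Z_1=U$, and digit $2$ over $\ring{\beta}$ with $\tilde U_2\equiv 0$, $\tilde V_2=V$, $\tilde Z_2=V$ (a digit plane that is identically constant is ``redundant'' in the sense of Definition~\ref{defi:nrdist}, carries zero rate, and is simply not transmitted). For the permutation $\pi_A=\mathrm{id}$ (decode $U$, then $V$), digit $1$ contributes $I(X;U)$ to $R_1$ (both options $R_{11}^{(1)},R_{11}^{(2)}$ give $H(U)-H(U\mid X)$) and nothing to $R_2$, while digit $2$ contributes nothing to $R_1$ and, using $\tilde Z_1=U$ together with $H(V\mid Y,U)=H(V\mid Y)$, contributes $H(V\mid U)-H(V\mid Y)=I(Y;V\mid U)$ to $R_2$; this realizes the point $\big(I(X;U),\,I(Y;V\mid U)\big)$. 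The reversed order realizes, by the symmetric computation using $H(U\mid X,V)=H(U\mid X)$, the point $\big(I(X;U\mid V),\,I(Y;V)\big)$.

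Finally, the identities $I(XY;UV)=I(X;U\mid V)+I(Y;V\mid U)+I(U;V)=I(X;U)+I(Y;V)-I(U;V)$ show that these two achievable points are exactly the two vertices of the dominant face of the Berger--Tung pentagon; since $\mc{RD}_{in}$ is described by lower bounds on $R_1$ and $R_2$ (so it is upward closed) and Theorem~\ref{thm:mainthm} passes to the convex closure, the whole pentagon, and hence --- taking the union over $(P_{U|X},P_{V|Y})$ --- the entire region (\ref{eq:BTrateregion}), is contained in $\mc{RD}_{in}^{*}$. The distortion constraint carries over verbatim because the function $G$ of Theorem~\ref{thm:mainthm} is the optimal reconstruction $\hat Z(U,V)$. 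I expect no coding-theoretic obstacle beyond Theorem~\ref{thm:mainthm} itself; the delicate parts are purely bookkeeping --- the $r=1$ collapse of the rate formulas, the treatment of the identically constant padding digit, and the mutual-information identities that pin the two decoding orders to the two pentagon vertices --- and the one piece of real design freedom being exploited is the choice of a disjoint-block embedding, which forces $\tilde Z_b$ to equal one of $\tilde U_b,\tilde V_b$ in each digit plane.
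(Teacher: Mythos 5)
Your proposal is correct and follows essentially the same route as the paper's own proof of Corollary~\ref{cor:BTcor}: embed $G(U,V)=(U,V)$ in a direct sum of two prime-order cyclic groups via the disjoint-block maps $S_U(u)=(u,0)$, $S_V(v)=(0,v)$, evaluate the $r_b=1$ collapse of the rate formulas for the identity permutation to obtain the corner point $\big(I(X;U),\,I(Y;V\mid U)\big)$, get the other corner from the reversed order, and invoke the convex closure in Theorem~\ref{thm:mainthm}. The only difference is cosmetic: the paper assumes $\mc U=\mc V=\ring q$ for a single prime $q$ ``for ease of exposition,'' whereas you allow two possibly distinct primes $\alpha,\beta$ and note explicitly that the identically-constant padding digits are redundant and carry zero rate --- a point the paper handles implicitly by just reading off $0$ from the formula.
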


\begin{proof}[\textbf{Proof}:]
We proceed by first reconstructing the function $G(U,V) = (U,V)$ at the decoder and then computing the function $\hat{Z}(U,V)$. For ease of exposition, assume that $\mc{U} = \mc{V} = \ring{q}$ for some prime $q$. If they are not, a decomposition based approach can be used and the proof is similar to the one presented below. Clearly, $G(U,V)$ can be embedded in the abelian group $A \triangleq \ring{q} \oplus \ring{q}$. The associated mappings are given by $\tilde{U} = (U,0)$ and $\tilde{V} = (0,V)$ where $0$ is the identity element in $\ring{q}$. Thus, $\tilde{Z}_1 = U + 0 = U$ and $\tilde{Z}_2 = 0+V = V$. Encoding is done in two stages. Let the permutation $\pi_A(\cdot)$ be the identity permutation. Substituting this into equations (\ref{eq:R1b1}) and (\ref{eq:R1b2}) gives us
\begin{align}
\nonumber R_{11} &\geq \min \{ H(\tilde{Z}_1), H(\tilde{U}_1) \} - H(\tilde{U}_1 | X) = I(X;\tilde{U}_1) = I(X;U), \\
\nonumber R_{21} &\geq \min \{ H(\tilde{Z}_1), H(\tilde{V}_1) \} - H(\tilde{V}_1 | Y) = 0, \\
\nonumber R_{12} &\geq \min \{ H(\tilde{Z}_2 \mid \tilde{Z}_1), H(\tilde{U}_2 \mid \tilde{U}_1) \} - H( \tilde{U}_2 \mid X, \tilde{U}_1) = 0, \\
\nonumber R_{22} &\geq \min \{ H(\tilde{Z}_2 \mid \tilde{Z}_1), H(\tilde{V}_2 \mid \tilde{V}_1) \} - H( \tilde{V}_2 \mid Y, \tilde{V}_1) \\
\nonumber &= H(\tilde{Z}_2 \mid \tilde{Z}_1) - H(\tilde{V}_2 \mid Y, \tilde{V}_1) = H(V|U) - H(V|Y) \\
&= I(Y;V|U)
\end{align}
This is one of the corner points of the rate region given in equation (\ref{eq:BTrateregion}). Choosing the permutation $\pi_A(\cdot)$ to be the derangement gives us the other corner point and time sharing between the two points yields the entire rate region of equation (\ref{eq:BTrateregion}). The rate needed to reconstruct $U,V$ at the decoder coincides with the Berger-Tung rate region \cite{berger-tung,berger77}.
\end{proof}
We note that this implies that our theorem recovers the rate regions of the problems considered by Wyner and Ziv \cite{wyner-ziv}, Ahlswede-Korner-Wyner \cite{ahlswede-korner,wyner75}, Berger and Yeung \cite{yeung-berger} and Slepian and Wolf \cite{slepian-wolf} since the Berger-Tung problem encompasses all these problems as special cases.

\subsection{Lossless Reconstruction of Modulo-$2$ Sum of Binary Sources}
In this section, we show that Theorem \ref{thm:mainthm} recovers the rate region derived by Korner and Marton \cite{korner-marton} for the reconstruction of the modulo-$2$ sum of two binary sources. Let $X,Y$ be correlated binary sources. Let the decoder be interested in reconstructing the function $F(X,Y) = X \oplus_2 Y$ losslessly. In this case, the auxiliary random variables can be chosen as $U = X, V = Y$. Clearly, this function can be embedded in the groups $\ring{2}, \ring{3}, \ring{4}$ and $\ring{2} \oplus \ring{2}$. For embedding in $\ring{2}$, the rate region of Theorem \ref{thm:mainthm} reduces to
\begin{equation}
R_1 \geq \min(H(X),H(X \oplus_2 Y)), \quad R_2 \geq \min(H(Y),H(X \oplus_2 Y))
\end{equation}
It can be verified that embedding in $\ring{3}$ or $\ring{4}$ always gives a worse rate than embedding in $\ring{2}$. Embedding in $\ring{2} \oplus \ring{2}$ results in the Slepian-Wolf rate region. Combining these rate regions, we see that a sum rate of $R_1 + R_2 = \min ( 2H(X \oplus_2 Y), H(X,Y))$ is achievable using our coding scheme. This recovers the Korner-Marton rate region for this problem \cite{korner-marton,csiszarbook}.  Moreover, one can also show that this approach can recover the Ahlswede-Han rate region \cite{ahlswede-han} for this problem, which is an improvement over the Korner-Marton region.

\section{Examples} \label{sec:examples}

In this section, we consider examples of the coding theorem (Theorem \ref{thm:mainthm}). First we consider the problem of losslessly reconstructing a function of correlated quaternary sources. We then derive an achievable rate region for the case when the decoder is interested in the modulo-$2$ sum of two binary sources to within a Hamming distortion of $D$.

\subsection{Lossless Encoding of a Quaternary Function}

Consider the following distributed source coding problem. Let $(X,Y)$ be correlated random variables both taking values in $\ring{4}$. Let $X,Z$ be independent random variables taking values in $\ring{4}$ according to the distributions $P_X$ and $P_Z$ respectively. Define $p_i \triangleq P_X(i), q_i \triangleq P_Z(i)$ for $i = 0,\dots,3$. Assume further that the random variable $Z$ is non-redundant, i.e., $q_1 + q_3 > 0$. Define the random variable $Y$ as $Y = X \oplus_4 Z$. Suppose $X$ and $Y$ are observed by two separate encoders which communicate their quantized observations to a central decoder. The decoder is interested in reconstructing the function $Z = (X-Y) \mod 4$ losslessly.

Since we are interested in lossless reconstruction, we can choose the auxiliary random variables $U,V$ to be $U = X, V = Y$. The function $G(U,V)$ then reduces to $F(X,Y) \triangleq (X-Y) \mod 4$. This function can be embedded in several groups with order less than or equal to $16$.  We claim that this function $F(X,Y)$ can be embedded in the groups $\ring{4}, \ring{7}, \ring{2} \oplus \ring{2} \oplus \ring{2}$ and $\ring{4} \oplus \ring{4}$ with nontrivial performance.  For each of these groups, we compute the achievable rate as given by Theorem \ref{thm:mainthm} below. For simplicity, we restrict ourselves to the rate regions given by equation (\ref{eq:R1b1}) alone.

Lets consider the group $\ring{4}$ first. Define the mappings $\tilde{x} \triangleq S_X^{(\ring{4})}(x) = x \mbox{ for all } x \in \ring{4}, \tilde{y} \triangleq S_Y^{(\ring{4})}(y) = -y \mbox{ for all } y \in \ring{4}$ and $S_F^{(\ring{4})}(z) = z \mbox{ for all } z \in \ring{4}$. With these mappings, it follows from Definition \ref{defi:embeddingdefi} that $F(X,Y)$ is embeddable in $\ring{4}$ with respect to the distribution $P_{XY}$. From Theorem \ref{thm:mainthm}, it follows that an achievable rate region using this embedding is given by
\begin{align}
\nonumber R_1 &\geq \max \{ H(Z), 2 (H(Z) - H([Z]_1)) \} \\
&= \max \{ h(q_0,q_1,q_2,q_3), 2 (h(q_0,q_1,q_2,q_3) - h(q_0 + q_2, q_1 + q_3)) \} \label{eq:R1Z4} \\
\nonumber R_2  &\geq \max \{ H(Z), 2 (H(Z) - H([Z]_1)) \} \\
&= \max \{ h(q_0,q_1,q_2,q_3), 2 (h(q_0,q_1,q_2,q_3) - h(q_0 + q_2, q_1 + q_3)) \} \label{eq:R2Z4}
\end{align}
giving a sum rate of
\begin{equation} \label{eq:Rsumz4}
R_{\ring{4}} \triangleq R_1 + R_2 \geq 2 \max \{ h(q_0,q_1,q_2,q_3), 2 (h(q_0,q_1,q_2,q_3) - h(q_0 + q_2, q_1 + q_3)) \}
\end{equation}

It can be verified that $F(X,Y)$ can't be embedded in $\ring{5}$ or $\ring{6}$. It can be embedded in $\ring{7}$ with the following mappings. Define $\tilde{x} \triangleq S_X^{(\ring{7})}(x) = x$ for all $x \in \ring{4}$, $\tilde{y} \triangleq S_Y^{(\ring{7})}(y) = -y$ for all $y \in \ring{4}$ where $-y$ is the additive inverse of $y$ in $\ring{7}$ and $S_F^{(\ring{7})}(0) = 0, S_F^{(\ring{7})}(1) = S_F^{(\ring{7})}(4) = 1, S_F^{(\ring{7})}(2) = S_F^{(\ring{7})}(5) = 2, S_F^{(\ring{7})}(3) = S_F^{(\ring{7})}(6) = 3$. Let $Z = \tilde{X} \oplus_7 \tilde{Y}$. From Theorem \ref{thm:mainthm}, it follows that an achievable rate region using this embedding is given by
\begin{align}
R_1 &\geq H(Z) = h(q_0, (1-p_0)q_3, (1-p_0-p_1)q_2, p_3 q_1, p_0 q_3, (p_0+p_1)q_2, (1-p_3)q_1) \label{eq:R1Z7} \\
R_2 &\geq H(Z) = h(q_0, (1-p_0)q_3, (1-p_0-p_1)q_2, p_3 q_1, p_0 q_3, (p_0+p_1)q_2, (1-p_3)q_1) \label{eq:R2Z7}
\end{align}
giving a sum rate of
\begin{equation} \label{eq:Rsumz7}
R_{\ring{7}} \triangleq R_1 + R_2 \geq 2 h(q_0, (1-p_0)q_3, (1-p_0-p_1)q_2, p_3 q_1, p_0 q_3, (p_0+p_1)q_2, (1-p_3)q_1)
\end{equation}

Of the three abelian groups of order $8$, it can be verified that embedding $F(X,Y)$ in $\ring{8}$ results in the same rate region as given by equations (\ref{eq:R1Z7}) and (\ref{eq:R2Z7}) and embedding $F(X,Y)$ in $\ring{2} \oplus \ring{4}$ results in the same rate region as given by equations (\ref{eq:R1Z4}) and (\ref{eq:R2Z4}). So, we consider embedding $F(X,Y)$ in $\ring{2} \oplus \ring{2} \oplus \ring{2}$. Recall that elements of the abelian group $\ring{2} \oplus \ring{2} \oplus \ring{2}$ can be treated as $3$ bit vectors over $\ring{2}$. The mappings $S_X^{(\ring{2} \oplus \ring{2} \oplus
  \ring{2})}(\cdot), S_Y^{(\ring{2} \oplus \ring{2} \oplus
  \ring{2})}(\cdot)$ and $S_F^{(\ring{2} \oplus \ring{2} \oplus
  \ring{2})}(\cdot)$ are as given in Table
\ref{table:embedz2z2z2table}.

\begin{table}
\centering \subtable[$S_X(\cdot)$]{
    \begin{tabular} {|c||c|}
    \hline
    $X$ & $\tilde{X}$ \\
    \hline & \\[-1.35em] \hline
    0 & 000\\
    \hline
    1 & 001\\
    \hline
    2 & 100\\
    \hline
    3 & 101\\
    \hline
    \end{tabular}
    \label{table:xmapping}
} \qquad \qquad \subtable[$S_Y(\cdot)$]{
    \begin{tabular} {|c||c|}
    \hline
    $Y$ & $\tilde{Y}$ \\
    \hline & \\[-1.35em] \hline
    0 & 000\\
    \hline
    1 & 010\\
    \hline
    2 & 100\\
    \hline
    3 & 110\\
    \hline
    \end{tabular}
    \label{table:ymapping}
} \qquad \qquad \subtable[$S_F(\cdot)$]{
    \begin{tabular} {|c||c|}
    \hline
    $z$ & $S_F(z)$ \\
    \hline & \\[-1.35em] \hline
    000,011 & 0\\
    \hline
    001,110 & 1\\
    \hline
    100,111 & 2\\
    \hline
    010,101 & 3\\
    \hline
    \end{tabular}
    \label{table:funcmapping}
} \caption{Mappings for embedding $F(X,Y)$ in $\ring{2} \oplus \ring{2} \oplus \ring{2}$} \label{table:embedz2z2z2table}
\end{table}

Define the random variable $\tilde{Z} = \tilde{U} \oplus \tilde{V}$ where $\oplus$ is addition in $\ring{2} \oplus \ring{2} \oplus \ring{2}$. With these mappings, an achievable rate region can be derived using Theorem \ref{thm:mainthm} as below. Choose the permutation $\pi_{\ring{2} \oplus \ring{2} \oplus \ring{2}} (\cdot) $ as $\pi(1) = 2, \pi(2) = 3, \pi(3) = 1$. Encoding is carried out in $3$ stages with the corresponding rates being
\begin{align}
R_{11} &= 0 , R_{21} = H(\tilde{Z}_2) \\
R_{12} &= H(\tilde{Z}_3 \mid \tilde{Z}_2) , R_{22} = 0 \\
R_{13} &= H(\tilde{Z}_1 \mid \tilde{Z}_2, \tilde{Z}_3) , R_{23} = H(\tilde{Z}_1 \mid \tilde{Z}_2, \tilde{Z}_3).
\end{align}

Summing over the $3$ stages of encoding, we get an achievable sum rate of $R_1 + R_2 \geq H(Z) + H(\tilde{Z}_1 \mid \tilde{Z}_2, \tilde{Z}_3) = 2H(Z) - H(\tilde{Z}_2, \tilde{Z}_3)$. In terms of $p_i, q_i$, this sum rate can be expressed as
\begin{equation} \label{eq:Rsumz2z2z2}
R_{\ring{2} \oplus \ring{2} \oplus \ring{2}} \triangleq R_1 + R_2 \geq 2h( p_{02}q_0, p_{13}q_3, p_{02}q_1, p_{13}q_0, p_{02}q_2, p_{13}q_1, p_{02}q_3, p_{13}q_2) - h(p_{02}q_{02}, p_{13}q_{13}, p_{02}q_{13}, p_{13}q_{02})
\end{equation}
where $p_{02} \triangleq p_0 + p_2, p_{13} \triangleq p_1 + p_3, q_{02} \triangleq q_0 + q_2$ and $q_{13} \triangleq q_1 + q_3$.

Embedding $F(X,Y)$ in groups of order $9$ to $15$ result in rate regions which are worse than the ones already derived. We next present an achievable rate region when $F(X,Y)$ is embedded in $\ring{4} \oplus \ring{4}$. We use the mappings $S_X^{(\ring{4} \oplus
  \ring{4})}(x) = x0$ for all $x \in \ring{4}$, $S_Y^{(\ring{4} \oplus
  \ring{4})}(y) = 0y$ for all $y \in \ring{4}$ and $S_F^{(\ring{4}
  \oplus \ring{4})}(xy) = (x,y)$ for all $(x,y) \in \ring{4}^2$. This
embedding corresponds to reconstructing the sources $X$ and $Y$ losslessly and the rate region coincides with the Slepian-Wolf rate region.
\begin{equation} \label{eq:Rsumz4z4}
R_{\ring{4} \oplus \ring{4}} \triangleq R_1 + R_2 \geq H(X,Y) = H(X) + H(Z) = h(p_0,p_1,p_2,p_3) + h(q_0,q_1,q_2,q_3)
\end{equation}

Combining equations (\ref{eq:Rsumz4}), (\ref{eq:Rsumz7}), (\ref{eq:Rsumz2z2z2}) and (\ref{eq:Rsumz4z4}) gives us an achievable rate region for this problem. Each of these achievable rate regions outperform the others for certain values of $P_X$ and $P_Z$. This is illustrated in Table \ref{table:bestdistribs}.


\begin{table} \begin{center}
\begin{tabular}{|c|c|c|c|c|c|} \hline
$P_X$ & $P_Z$ & $R_{\ring{4}}$ & $R_{\ring{7}}$ & $R_{\ring{2} \oplus \ring{2} \oplus \ring{2}}$ & $R_{\ring{4} \oplus \ring{4}}$ \\
\hline \hline $[\frac{1}{4} \, \frac{1}{4} \, \frac{1}{4} \, \frac{1}{4}]$ & $[\frac{1}{2} \, 0 \, \frac{1}{4} \, \frac{1}{4}]$ & $3$ & $3.9056$ & $3.1887$ & $3.5$
\\
\hline $[\frac{3}{10} \, \frac{6}{10} \, \frac{1}{10} \, 0]$ & $[0 \, \frac{4}{5} \, \frac{1}{20} \, \frac{3}{20}]$ & $2.3911$ & $2.0797$ & $2.4529$ &
$2.1796$ \\
\hline $[\frac{1}{3} \, \frac{1}{10} \, \frac{1}{2} \, \frac{1}{15}]$ & $[\frac{3}{7} \, \frac{1}{7} \, \frac{1}{7} \, \frac{2}{7}]$ & $3.6847$ & $4.5925$ &
$3.3495$ & $3.4633$ \\
\hline $[\frac{9}{10} \, \frac{1}{30} \, \frac{1}{30} \, \frac{1}{30}]$ & $[\frac{3}{20} \, \frac{3}{4} \, \frac{1}{20} \, \frac{1}{20}]$ & $2.308$ &
$2.7065$ & $1.9395$ & $1.7815$ \\
\hline
\end{tabular} \end{center}
\caption{Example distributions for which embedding in a given group gives the lowest sum rate.} \label{table:bestdistribs}
\end{table}

\subsection{Lossy Reconstruction of the Modulo-$2$ Sum of Binary Sources} \label{subsec:lossyxorex}

This example concerns the reconstruction of the binary XOR function with the Hamming distortion criterion. The rate region of Theorem \ref{thm:mainthm} is very cumbersome to calculate analytically in the general case. So, we restrict our attention to the case of symmetric source distribution and additive test channels in the derivation below where the intention is to demonstrate the analytical evaluation of the rate region of Theorem \ref{thm:mainthm}. We then present plots where the entire sum rate-distortion region is computed without any restrictive assumptions.

Consider a binary correlated source $(X,Y)$ with symmetric joint distribution $P_{XY}(0,0) = P_{XY}(1,1) = q/2$ and $P_{XY}(1,0) = P_{XY}(0,1) = p/2$. Suppose we are interested in reconstructing $F(X,Y) = X \oplus_2 Y$ within Hamming distortion $D$. We present an achievable rate pair for this problem based on Theorem \ref{thm:mainthm} and compare it to the achievable rate region presented in Lemma \ref{lemma:BTschemelemma}. It was shown in \cite{gu-jana-effros} that it suffices to restrict the cardinalities of the auxiliary random variables $U$ and $V$ to the cardinalities of their respective source alphabets in order to compute the Berger-Tung rate region. Since the scheme presented in Lemma \ref{lemma:BTschemelemma} is based on the Berger-Tung coding scheme, the rate region $\mc{RD}_{BT}$ for this problem can be computed by using binary auxiliary random variables.

Let us now evaluate the rate region provided by Theorem \ref{thm:mainthm} for this problem. The auxiliary random variables $U$ and $V$ are binary and suppose the test channel $P_{XY} P_{U|X} P_{V|Y}$ is fixed. The function $G(U,V)$ which is the optimal reconstruction of $X \oplus_2 Y$ given $U$ and $V$ can then be computed. In general, this function can take any of the $16$ possible values depending upon the test channel $P_{XY} P_{U|X} P_{V|Y}$.

Let us choose the auxiliary random variables $U$ and $V$ to be binary and for ease of exposition, let them be defined as $U = X \oplus_2 Q_1$ and $V = Y \oplus_2 Q_2$. Here $Q_1,Q_2$ are independent binary random variables with $P(Q_i = 0) = q_i, i = 1,2$. Let $p_i = 1 - q_i, i = 1,2$. Define $\alpha = q_1 q_2 + p_1 p_2$ and $\beta = 1-\alpha$. Once the test channel $P_{XY} P_{U|X} P_{V|Y}$ is thus fixed, the optimal reconstruction function $G(U,V)$ that minimizes the probability $P(F(X,Y) \neq G(U,V))$ can be computed. It can be showed that
\begin{align}
G(U,V) &= \left\{ \begin{array}{cc} 0 & \alpha > p, \alpha < q \\
U \oplus_2 V & \alpha > p, \alpha > q \\
\overline{U \oplus_2 V} & \alpha < p, \alpha < q \\
1 & \alpha < p, \alpha > q \end{array} \right.
\end{align}
where $\overline{a}$ denotes the complement of the bit $a$. The corresponding distortion for these reconstructions can be calculated as
\begin{align} \label{eq:lossyXORDeq}
D(\alpha) &= \left\{ \begin{array}{cc} p & \alpha > p, \alpha < q \\
\beta & \alpha > p, \alpha > q \\
\alpha & \alpha < p, \alpha < q \\
q & \alpha < p, \alpha > q \end{array} \right.
\end{align}

Clearly, no rate need be expended if the function to be reconstructed is $G(U,V) = 0$ or $G(U,V) = 1$. It is also easy to see that the rates needed would be the same for both $G(U,V) = U \oplus_2 V$ and $G(U,V) = \overline{U \oplus_2 V}$. Let us therefore consider only reconstructing $G(U,V) = U \oplus_2 V$. It can be shown that this function is embeddable in the groups $\ring{2}, \ring{3}, \ring{4}$ and $\ring{2} \oplus \ring{2}$. Let us consider the group $A \triangleq \ring{2}$. The associated mappings $S_U^{(A)}(\cdot), S_V^{(A)}(\cdot)$ and $S_G^{(A)}(\cdot)$ are all identity mappings. In this case, we have only one digit to encode. Further, note that $P(Z_1 = 0) = P(U_1 \oplus_2 V_1 = 0) = q\alpha + p \beta$.

The rates of the encoders are given by equations (\ref{eq:R1b1}) and (\ref{eq:R1b2}) to be
\begin{align}
\nonumber R_{11} &= \min \{ H(U_1), H(Z_1) \} - H(U_1 \mid X)
\\ \nonumber &= \min \{1, h(q\alpha+p\beta)\} -h(q_1)
\\ &= h(q\alpha+p\beta)-h(q_1) \\
\nonumber R_{21} &= \min \{ H(V_1), H(Z_1) \} - H(V_1 \mid Y)
\\ \nonumber &= \min \{1, h(q\alpha+p\beta)\}-h(q_2)
\\ &= h(q\alpha+p\beta)-h(q_2)
\end{align}
where $h(\cdot)$ is the binary entropy function. Thus, an achievable rate region for this problem is
\begin{align}
\mc{R} &= \bigcup_{0 \leq q_1,q_2 \leq 1} \left\{ (R_1,R_2,D) \colon R_1 \geq h(q\alpha + p \beta) - h(q_1), R_2 \geq h(q\alpha + p\beta) - h(q_2), D \geq D(\alpha) \right\}
\end{align}
where $D(\alpha)$ is given in equation (\ref{eq:lossyXORDeq}). Rate points achieved by embedding the function in the abelian groups $\ring{3}, \ring{4}$ are strictly worse than that achieved by embedding the function in $\ring{2}$ while embedding in $\ring{2} \oplus \ring{2}$ gives the Slepian-Wolf rate region for the lossless reconstruction of $(X,Y)$.

We now plot the entire sum rate-distortion region for the case of a general source distribution and general test channels $P_{U|X}, P_{V|Y}$ and compare it with the Berger-Tung rate region $\mc{RD}_{BT}$ of Fact \ref{lemma:BTschemelemma}.

\begin{figure}[htp]
\centering
\epsfig{figure=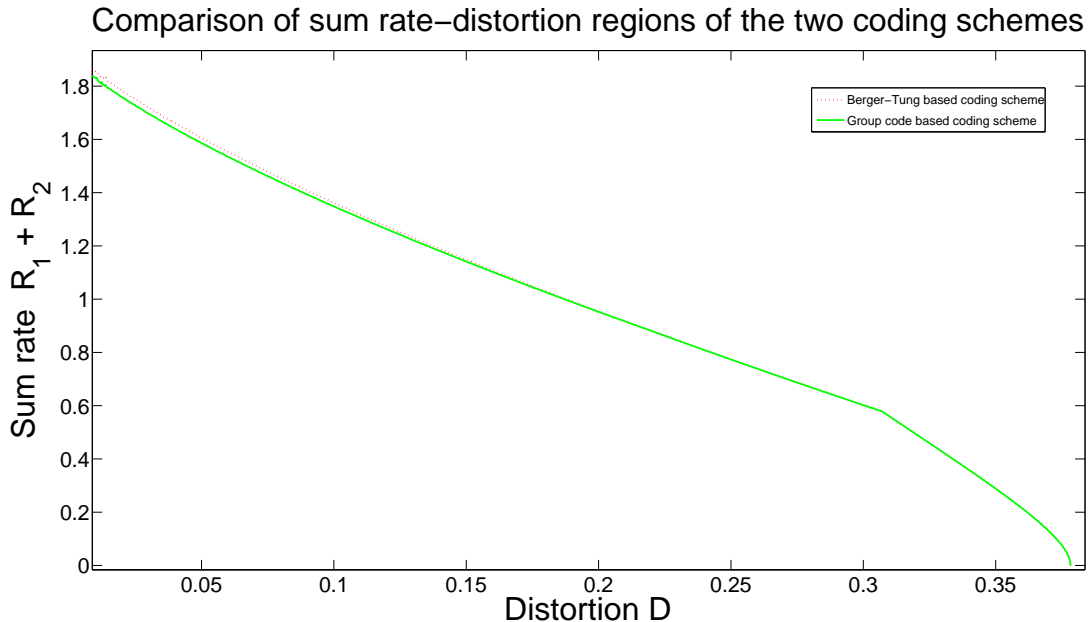, width = \textwidth}
\caption{Sum rate-distortion region for the distribution given in Table \ref{table:BT_example1}}
\label {fig:sumrate_noconv}
\end{figure}

\begin{figure}[htp]
\centering
\epsfig{figure = 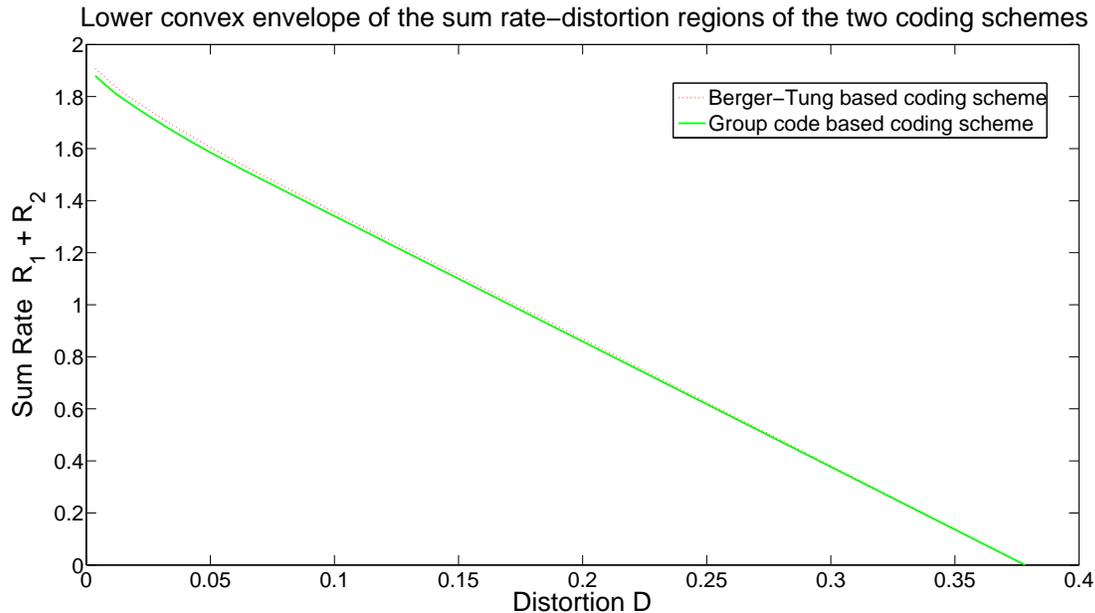, width = \textwidth}
\caption{Lower Convex envelope of the sum rate-distortion region}
\label{fig:sumrate_convhull}
\end{figure}

Figures \ref{fig:sumrate_noconv} and \ref{fig:sumrate_convhull} demonstrate that the sum rate-distortion regions of Theorem \ref{thm:mainthm} and Fact \ref{lemma:BTschemelemma}. Theorem \ref{thm:mainthm} offers improvements over the rate region of Fact \ref{lemma:BTschemelemma} for low distortions as shown more clearly in Figure \ref{fig:zoomedfigs}. The joint distribution of the sources used in this example is given in Table \ref{table:BT_example1}.

\begin{figure}[htp]
\centering
\subfigure{
\epsfig{figure=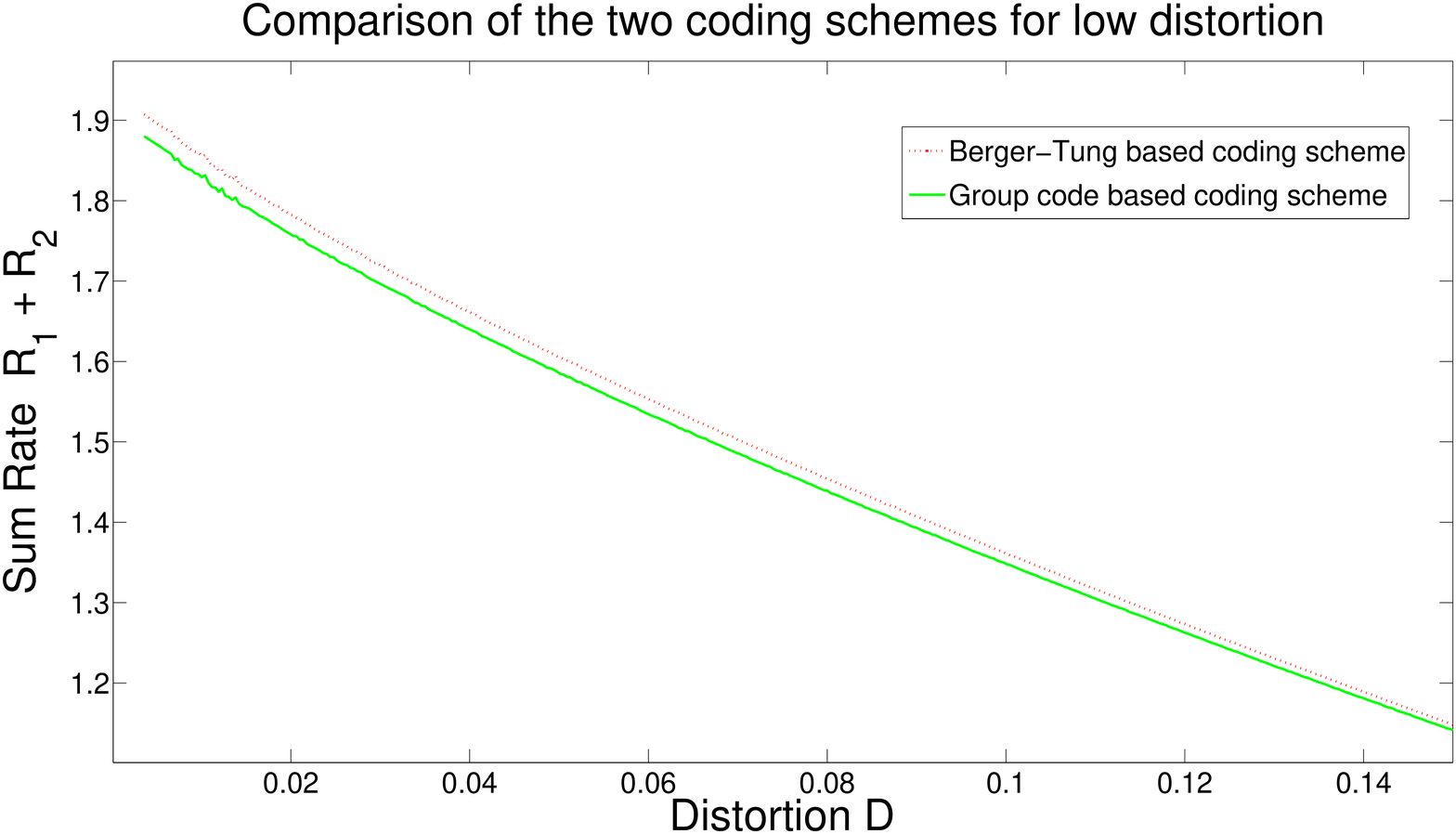, height = 2.5in, width = 0.49 \textwidth}\label{fig:noconv_zoom}
}
\hspace{-0.45 in}
\subfigure{
\epsfig{figure=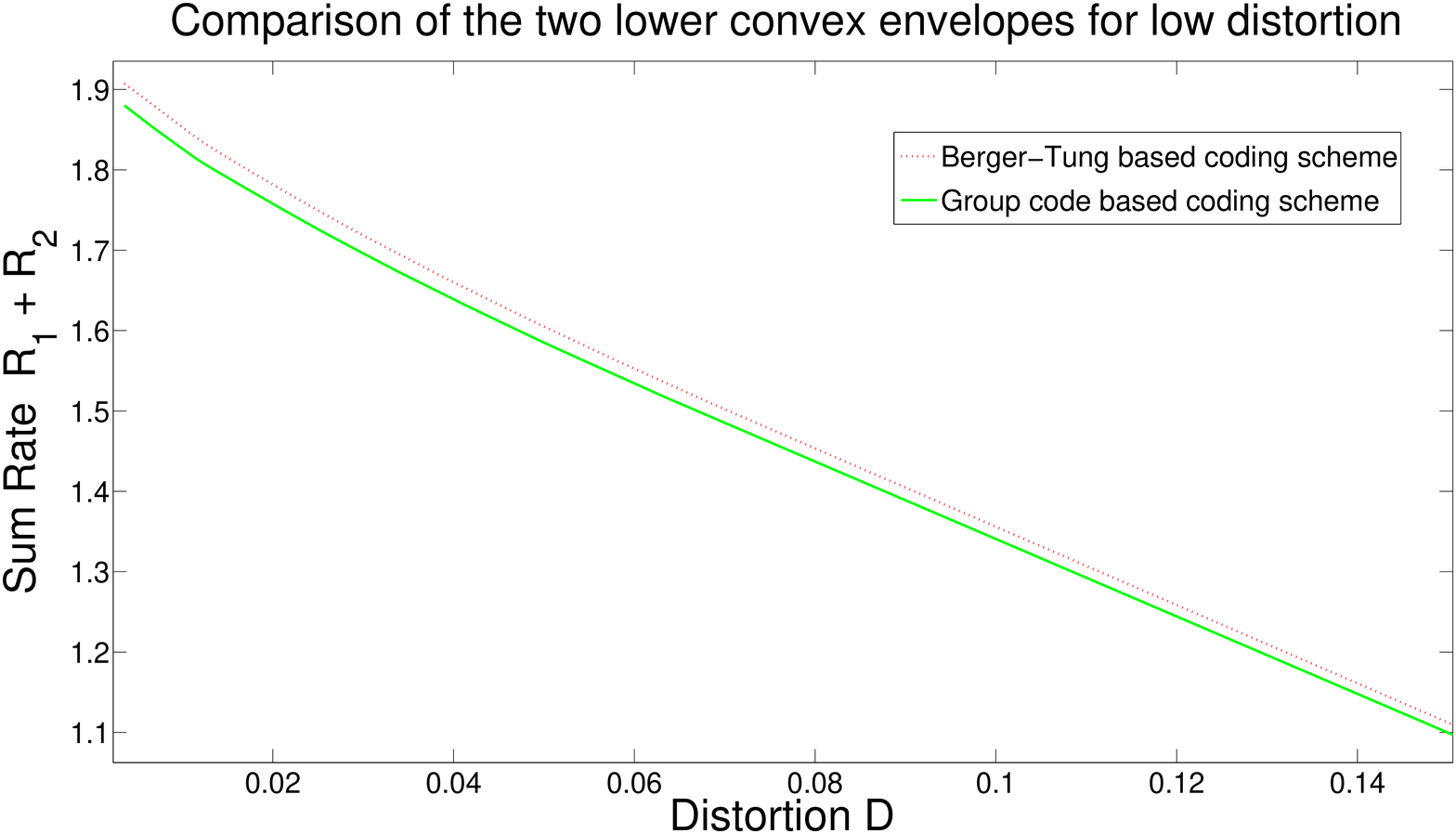, height = 2.5in, width = 0.49 \textwidth} \label{fig:conv_zoom}
}
\caption{Zoomed versions of Figures \ref{fig:sumrate_noconv} and \ref{fig:sumrate_convhull}}
\label{fig:zoomedfigs}
\end{figure}

Motivation of choosing this example is as follows. Evaluation of the Berger-tung rate region is a computationally intensive operation since it involves solving a nonconvex optimization problem. The only procedure that we are aware of for this is  using  linear programming followed by quantizing the probability space and searching for optimum values \cite{gu-jana-effros}. The computational complexity increases dramatically as the size of the alphabet of the sources goes beyond two. Hence we chose the simplest nontrivial lossy example to make the point. But then we have to deal with the fact that there are only 3 abelian groups of order less than or equal to 4. One of the three groups corresponds to the Berger-Tung bound.  We would like to remark that even for this simple example, the Berger-Tung bound is not tight. We expect the gains afforded by Theorem \ref{thm:mainthm} over the rate region of Lemma \ref{lemma:BTschemelemma} would increase as we increase the cardinality of the source alphabets.

\begin{table} \begin{center}
\begin{tabular}{|c||c|c|}
\hline
$P_{XY}$ & $0$ & $1$ \\
\hline \hline
$0$ & $0.3381$ & $0.1494$ \\
$1$ & $0.2291$ & $0.2834$ \\
\hline
\end{tabular}
\end{center}
\caption{Joint distribution used for example in Figures \ref{fig:sumrate_noconv},\ref{fig:sumrate_convhull}} \label{table:BT_example1}
\end{table}

\section{Conclusion} \label{sec:conclusions}
We have introduced structured codes built over arbitrary abelian groups for lossless and lossy source coding and derived their performance limits. We also derived a coding theorem based on nested group codes for reconstructing an arbitrary function of the sources based on a fidelity criterion. The encoding proceeds sequentially in stages based on the primary cyclic decomposition of the underlying abelian group. This coding scheme recovers the known rate regions of many distributed source coding problems while presenting new rate regions to others. The usefulness of the scheme is demonstrated with both lossless and lossy examples.

\section*{Acknowledgements} The authors would like to thank
Professor Hans-Andrea Loeliger of ETH, Zurich and Dr.~Soumya Jana of University of Illinois, Urbana-Champaign for helpful discussions.

\section*{Appendix}
\appendix
\section{Good Group Channel Codes} \label{sec:goodchproof}
We prove the existence of channel codes built over the space $\ringr{n}$ which are good for the triple $(\mc{Z}, \mc{S}, P_{ZS})$ according to Definition \ref{defi:goodchcodedefi}. Recall that the group $\ring{p^r}$ has $(r-1)$ non-trivial subgroups, namely $p^i \ring{p^r}, 1 \leq i \leq r-1$. Let the random variable $Z$ take values from the group $\ringr{}$, i.e., $\mc{Z} = \ringr{}$ and further let it be non-redundant. Let Hom$(\ring{p^r}^n, \ring{p^r}^k)$ be the set of all homomorphisms from $\ring{p^r}^n$ to $\ring{p^r}^k$. Let $\phi(\cdot)$ be a homomorphism picked at random with uniform probability from Hom$(\ringr{n},\ringr{k})$.

We start by proving a couple of lemmas.
\begin{lemma} \label{lemma:uniflem}
For a homomorphism $\phi(\cdot)$ randomly chosen from Hom$(\ringr{n},\ringr{k})$, the probability that a given sequence $z^n$ belongs to $\ker(\phi)$ in $\ring{p^r}^n$ depends on which subgroup of $\ringr{}$ the sequence $z^n$ belongs to. Specifically
\begin{equation}
P(\phi(z^n) = 0^k) = \left\{ \begin{array}{cl} p^{-(r-i)k} & \mbox{if } z^n \in p^i \ring{p^r}^n \backslash \, p^{i+1} \ring{p^r}^n, 0 \leq i < r \\
1 & \mbox{if }z^n \in p^r \ring{p^r}^n \end{array} \right.
\end{equation}
\end{lemma}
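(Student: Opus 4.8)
The plan is to reduce the statement to a short computation about uniformly random matrices over $\ring{p^r}$, using the bijection between $\mathrm{Hom}(\ringr{n},\ringr{k})$ and $M(k,n,\ringr{})$ recorded above. Under that bijection a homomorphism $\phi$ chosen uniformly at random corresponds to a $k\times n$ matrix $\Phi$ whose $kn$ entries are i.i.d.\ uniform on $\ring{p^r}$, and $\phi(z^n)=\Phi z^n$ with all arithmetic mod $p^r$. Since the $k$ rows $\Phi_1,\dots,\Phi_k$ of $\Phi$ are independent, we get $P(\phi(z^n)=0^k)=\prod_{j=1}^{k}P(\Phi_j\cdot z^n\equiv 0 \bmod p^r)=\big(P(a\cdot z^n\equiv 0 \bmod p^r)\big)^{k}$, where $a=(a_1,\dots,a_n)$ is uniform on $\ringr{n}$. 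So it suffices to show that this single-row probability equals $p^{-(r-i)}$ when $z^n\in p^i\ring{p^r}^n\setminus p^{i+1}\ring{p^r}^n$, and equals $1$ when $z^n\in p^r\ring{p^r}^n$.

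The second case is immediate since $p^r\ring{p^r}^n=\{0^n\}$ and $\phi(0^n)=0^k$ for every $\phi$. For the first case, I would write $z_l=p^i w_l$ for each coordinate $l$, where at least one $w_j$ is a unit of $\ring{p^r}$; this is exactly what it means for $z^n$ to lie in $p^i\ring{p^r}^n$ but not $p^{i+1}\ring{p^r}^n$. Then $a\cdot z^n\equiv p^i\,(a\cdot w^n)\bmod p^r$, so $a\cdot z^n\equiv 0\bmod p^r$ iff $a\cdot w^n\equiv 0\bmod p^{r-i}$. Conditioning on all coordinates $a_l$ with $l\neq j$ and letting $a_j$ range uniformly over $\ring{p^r}$, the term $a_j w_j$ reduced mod $p^{r-i}$ is uniformly distributed on $\ring{p^{r-i}}$, because $w_j$ is invertible in $\ring{p^r}$ (hence also mod $p^{r-i}$) and the reduction map $\ring{p^r}\to\ring{p^{r-i}}$ is a surjective ring homomorphism hitting each residue exactly $p^{i}$ times. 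Hence $a\cdot w^n$ is uniform on $\ring{p^{r-i}}$ regardless of the other coordinates, so the probability it is $0$ is $p^{-(r-i)}$; raising to the $k$-th power yields $P(\phi(z^n)=0^k)=p^{-(r-i)k}$, as claimed.

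There is no real obstacle here; the one point that deserves care is the uniformity of $a\cdot w^n$ modulo $p^{r-i}$, which genuinely requires the existence of a unit coordinate $w_j$ — i.e.\ the hypothesis $z^n\notin p^{i+1}\ring{p^r}^n$ — and is precisely where the cyclic structure of $\ring{p^r}$, as opposed to that of a general abelian group, is being used. Everything else is routine bookkeeping with the matrix representation and the independence of the rows.
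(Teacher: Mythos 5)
Your proof is correct and follows essentially the same route as the paper's: both work row by row (using the independence of the $k$ rows of $\Phi$) and both rest on the same key ingredient, namely that a sequence in $p^i\ringr{n}\setminus p^{i+1}\ringr{n}$ has at least one coordinate that is $p^i$ times a unit of $\ring{p^r}$. The only real difference is bookkeeping: the paper first computes the single-row probability for the fully non-redundant case $i=0$ by counting solutions for the distinguished matrix entry, and then reduces the case $z^n=p^i\tilde z^n$ to it by summing over the preimage $p^{r-i}\ringr{k}$, whereas you collapse these two steps by passing directly to arithmetic modulo $p^{r-i}$ and arguing conditional uniformity given the remaining coordinates.
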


\begin{proof}[\textbf{Proof}:]
Clearly, $z^n \in p^r \ring{p^r}^n$ implies $z^n = 0^n$ \footnote{If we consider homomorphisms from $\ringr{n}$ to $\ring{m}^k$ for an arbitrary integer $m$, all such homomorphisms have $d\ringr{n}$ as their kernel where $d = (p^r,m)$ is the greatest common divisor of $p^r$ and $m$. Unless $d = p^r$, there would be exponentially many $z^n$ for which $P(\phi(z^n) = 0) = 1$ for all $\phi \in \mbox{Hom}(\ring{p^r}^n,\ring{m}^k)$ and this results in bad channel codes (see equation (\ref{eq:footnoteeq})). Thus, $p^r$ has to divide $m$ and all such $m$ give identical performances as $m=p^r$.}.  In this case, the probability of the event $\{\phi(z^n) = 0\}$ is $1$.

Let the $k \times n$ matrix $\Phi$ be the matrix representation of the homomorphism $\phi(\cdot)$. Let the first row of $\Phi$ be $(\alpha_1, \dots, \alpha_n)$. Consider $\phi_1 \colon \ringr{n} \rightarrow \ringr{}$, the homomorphism corresponding to the first row of $\Phi$. The total number of possibilities for $\phi_1(\cdot)$ is $(p^r)^n$.

Let us consider the case where $z^n \in \ring{p^r}^n \backslash p \ring{p^r}^n$. In this case, $z^n$ contains at least one element, say $z_i$ which is invertible in $\ring{p^r}$. Let us count the number of homomorphisms $\phi(\cdot)$ that map such a sequence $z^n$ to a given $c \in \ringr{k}$. We need to choose the $k$ homomorphisms $\phi_i(\cdot), 1 \leq i \leq k$ such that $\phi_i(z^n) = c_i$ for $1 \leq i \leq k$. Let us count the number of homomorphisms $\phi_1(\cdot)$ that map $z^n$ to $c_1$. In this case, we can choose $\alpha_j, j \neq i$ to be arbitrary and fix $\alpha_i$ as
\begin{equation}
\alpha_i = z_i^{-1}\left( c_1 - \sum_{\stackrel{j=1}{j \neq i}}^n \alpha_j z_j \right)
\end{equation}
Thus the number of favorable homomorphisms $\phi_1(\cdot)$ is $(p^r)^{(n-1)}$. Thus the probability that a randomly chosen homomorphism $\phi_1(\cdot)$ maps $z^n$ to $c_1$ is $p^{-r}$. Since each of the $k$ homomorphisms $\phi_i$ can be chosen independently, we have
\begin{equation} \label{eq:hm0groupeq}
P( \phi(z^n) = c) = p^{-rk} \quad \mbox{if } \, z^n \in \ring{p^r}^n \backslash p \ring{p^r}^n
\end{equation}
Putting $c = 0^k$ in equation (\ref{eq:hm0groupeq}), we see that the claim in Lemma \ref{lemma:uniflem} is valid for $z^n \in \ringr{n} \backslash p \ringr{n}$. Now, consider $z^n \in p^i \ringr{n} \backslash p^{i+1} \ringr{n}$ for a general $0 < i < r$. Any such $z^n$ can be written as $p^i \tilde{z}^n$ for $\tilde{z}^n \in \ringr{n} \backslash p \ringr{n}$. Thus, the event $\{ \phi(z^n) = 0 \}$ will be true if and only if $\{\phi(\tilde{z}^n) = t\}$ for some $t \in p^{r-i} \ringr{k}$. Hence,
\begin{align}
P(\phi(z^n) = 0) &= P \left( \bigcup_{t \in p^{r-i} \ringr{k}} (\phi(\tilde{z}^n) = t) \right) \\
&= \sum_{t \in p^{r-i} \ringr{k}} P( \phi(\tilde{z}^n) = t) \\
&= | p^{r-i} \ringr{k} | p^{-rk} \\
&= p^{-(r-i)k} \label{eq:hmigroupeq}
\end{align}
This proves the claim of Lemma \ref{lemma:uniflem}.
\end{proof}

We now estimate the size of the intersection of the conditionally typical set $\typset{}{s^n}$ with cosets of $p^i \ring{p^r}^n$ in $\ring{p^r}^n$.
\begin{lemma} \label{lemma:sizelem}
For a given $z^n \in \typset{}{s^n}$, consider $(z^n + p^i \ring{p^r}^n)$, the coset of $p^i \ring{p^r}^n$ in $\ring{p^r}^n$. Define the set $S_{i,\epsilon}(z^n,s^n)$ as $S_{i,\epsilon}(z^n,s^n) \triangleq (z^n + p^i \ring{p^r}^n) \cap \typset{}{s^n}$. A uniform bound on the cardinality of this set is given by
\begin{equation}
\frac{1}{n} \log |S_{i,\epsilon}| \leq H(Z|S) - H([Z]_i|S) + \delta(\epsilon) \qquad \mbox{for }0 \leq i \leq r
\end{equation}
where $\delta(\epsilon) \rightarrow 0$ as $\epsilon \rightarrow 0$. The random variable $[Z]_i$ is defined in the following manner: It takes values from the set of all distinct cosets of $p^i \ring{p^r}$ in $\ring{p^r}$. The probability that $[Z]_i$ takes a particular coset as its value is equal to the sum of the probabilities of the elements forming that coset.
\begin{equation}
P([Z]_i = a + p^i \ring{p^r} \mid S = s) = \sum_{z \in a + p^i \ring{p^r}} P_{Z|S}(z \mid s) \quad \forall s \in \mc{S}.
\end{equation}
We have the nesting relation $S_{i+1,\epsilon}(z^n,s^n) \subset S_{i,\epsilon}(z^n,s^n)$ for $0 \leq i \leq r-1$. However, each nested set is exponentially smaller in size since $H([Z]_i)$ increases monotonically with $i$. Thus, with the same definitions as above, we also have that
\begin{equation}
\frac{1}{n} \log \left( |S_{i,\epsilon}(z^n,s^n)| - |S_{i+1,\epsilon}(z^n,s^n)| \right) \leq H(Z|S) - H([Z]_i|S) + \delta_1 (\epsilon) \qquad \mbox{for }0 \leq i \leq r-1
\end{equation}
where $\delta_1(\epsilon) \rightarrow 0$ as $\epsilon \rightarrow 0$.
\end{lemma}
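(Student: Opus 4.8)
The plan is to realise the coset $z^n + p^i\ring{p^r}^n$ as the set of all sequences carrying a prescribed $[Z]_i$-projection, and then to count how many completions of that projection can remain jointly typical with $s^n$; this count is governed by the conditional entropy $H(Z\mid[Z]_i,S)$, which equals $H(Z\mid S)-H([Z]_i\mid S)$ because $[Z]_i$ is a deterministic function of $Z$. The second inequality will then fall out of the first essentially for free.

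First I would record the combinatorial description of the coset. Applying the quotient map $[\cdot]_i\colon\ring{p^r}\to\ring{p^r}/p^i\ring{p^r}$ coordinatewise, one has $\tilde z^n\in z^n+p^i\ring{p^r}^n$ if and only if $[\tilde z^n]_i=[z^n]_i=:w^n$. Hence $S_{i,\epsilon}(z^n,s^n)\subseteq\{\tilde z^n\colon [\tilde z^n]_i=w^n,\ (\tilde z^n,s^n)\in\typset{}{Z,S}\}$. Writing $m_{a,s}$ for the number of indices $j$ with $(w_j,s_j)=(a,s)$, the identity $m_{a,s}=\sum_{z\colon [z]_i=a}N(z,s\mid z^n,s^n)$ together with summation of the strong-typicality inequalities for $(z^n,s^n)$ shows that $(w^n,s^n)\in\typset{}{[Z]_i,S}$, so $m_{a,s}$ lies within $n\delta(\epsilon)$ of $nP_{[Z]_iS}(a,s)$, uniformly over $z^n\in\typset{}{s^n}$.

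The counting step is the standard conditional-type estimate. Group the $n$ coordinates by the value $(a,s)$ of $(w_j,s_j)$. On the block of size $m_{a,s}$ a candidate $\tilde z^n$ takes values inside the coset $a$ (which has $p^{r-i}$ elements), and joint typicality of $(\tilde z^n,s^n)$ with $P_{ZS}$ forces the empirical law of that subsequence to be within $\delta(\epsilon)$ of $P_{Z\mid [Z]_i=a,S=s}$; the number of such subsequences is at most $2^{m_{a,s}(H(Z\mid[Z]_i=a,S=s)+\delta(\epsilon))}$. Multiplying over all $(a,s)$, using $\sum_{a,s}m_{a,s}=n$ and $m_{a,s}\le n(P_{[Z]_iS}(a,s)+\delta(\epsilon))$, yields
\begin{equation}
\frac1n\log|S_{i,\epsilon}(z^n,s^n)|\ \le\ H(Z\mid[Z]_i,S)+\delta(\epsilon)\ =\ H(Z\mid S)-H([Z]_i\mid S)+\delta(\epsilon),
\end{equation}
the last equality being $H(Z,[Z]_i\mid S)=H(Z\mid S)$. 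This is the first claimed bound for all $0\le i\le r$; the extremes reduce to $S_{0,\epsilon}(z^n,s^n)=\typset{}{s^n}$ (since $p^0\ring{p^r}^n=\ring{p^r}^n$) and $|S_{r,\epsilon}(z^n,s^n)|\le1$ (since $p^r\ring{p^r}^n=\{0^n\}$). The second inequality is then immediate: $p^{i+1}\ring{p^r}^n\subseteq p^i\ring{p^r}^n$ gives $S_{i+1,\epsilon}(z^n,s^n)\subseteq S_{i,\epsilon}(z^n,s^n)$, so $0\le|S_{i,\epsilon}|-|S_{i+1,\epsilon}|\le|S_{i,\epsilon}|$ and one may take $\delta_1(\epsilon)=\delta(\epsilon)$.

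There is no deep obstacle here; the only delicate part is bookkeeping of the typicality slacks. One must check that strong $\epsilon$-typicality of $(z^n,s^n)$ passes to the projection $(w^n,s^n)$ and to each per-block conditional empirical distribution with an error $\delta(\epsilon)\to0$ as $\epsilon\to0$, and that every intermediate estimate is uniform over $z^n\in\typset{}{s^n}$ (and over typical $s^n$). These are routine consequences of the standard properties of strongly typical sets, but they are exactly where the $\delta(\epsilon)$ and $\delta_1(\epsilon)$ terms in the statement originate.
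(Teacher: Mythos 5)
Your proof is correct, but it follows a genuinely different route from the one in the paper, and in my view a cleaner one. The paper proceeds by introducing a difference variable $W = \tilde Z - Z$ constrained to lie in $p^i\ring{p^r}$, imposes the marginal-matching constraint $P_{(Z+W),S}=P_{ZS}$, and then reduces the cardinality estimate to the \emph{maximum} of $H(W\mid Z,S)$ over all such $P_{W\mid ZS}$. That maximization is solved as a bona-fide convex program: the Lagrangian is formed, the KKT stationarity conditions are written out, a candidate maximizer $P_{ZW\mid S}(z,w\mid s)=P_{Z\mid S}(z\mid s)P_{Z\mid S}(z+w\mid s)/P_{Z\mid S}(z+p^i\ring{p^r}\mid s)$ is exhibited and verified against all three constraints, and finally the resulting entropy is computed and simplified to $H(Z\mid S)-H([Z]_i\mid S)$. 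Your argument bypasses the optimization entirely by noticing the structural fact that the coset $z^n+p^i\ring{p^r}^n$ is precisely the fibre of the coordinatewise quotient map $[\cdot]_i$ over $[z^n]_i$; once the constraint is recognized as ``the $[Z]_i$-digit sequence is pinned to $w^n$,'' the bound is a one-line consequence of the standard conditional-typicality cardinality estimate together with the chain rule $H(Z\mid S)=H([Z]_i\mid S)+H(Z\mid[Z]_i,S)$ (valid since $[Z]_i$ is a deterministic function of $Z$). In fact, your identification explains why the paper's optimizer takes the form it does: it is exactly the distribution under which $W$ and $Z$ are conditionally independent given $([Z]_i,S)$, which is what makes $H(W\mid Z,S)$ collapse to $H(Z\mid[Z]_i,S)$. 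The trade-off is that the paper's computation actually identifies the maximizing type, which makes the bound visibly tight (matching lower bound), whereas your argument as written only delivers the stated upper bound; for the lemma as used in the appendix, the upper bound is all that is needed, so nothing is lost. Your dispatch of the second inequality by $0\le|S_{i,\epsilon}|-|S_{i+1,\epsilon}|\le|S_{i,\epsilon}|$ is correct and matches what the paper leaves implicit.
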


\begin{proof}[\textbf{Proof}:] The set $S_{i,\epsilon}(z^n,s^n)$ can   be thought of as all those sequences $\tilde{z}^n \in  \typset{}{s^n}$ such that the difference $w^n \triangleq \tilde{z}^n  - z^n \in p^i \ringr{}$. Let $W$ be a random variable taking values  in $p^i \ringr{}$ and jointly distributed with $(Z,S)$ according to  $P_{W|ZS}$. Define the random variable $\tilde{Z} \triangleq Z +  W$. Let $P_{W|ZS}$ be such that $P_{\tilde{ZS}} = P_{ZS}$. Then, for  a given distribution $P_{W|ZS}$, every sequence $\tilde{z}^n$ that  belongs to the set of conditionally typical sequences given  $(z^n,s^n)$ will belong to the set  $S_{i,\epsilon}(z^n,s^n)$. Conversely, following the type counting  lemma and the continuity of entropy as a function of probability  distributions \cite{csiszarbook},  every sequence $\tilde{z}^n  \in S_{i,\epsilon}(z^n,s^n)$ belongs to the set of conditionally  typical sequences given $(z^n,s^n)$ for some such joint distribution  $P_{W|ZS}$. Thus estimating the  size of the set $S_{i,\epsilon}(z^n,s^n)$ reduces to estimating the  maximum of $H(\tilde{Z} \mid Z,S)$, or equivalently the maximum of  $H(Z,W \mid S)$ over all joint distributions $P_{ZSW}$ such that  $P_{(Z+W),S} = P_{ZS}$.

We formulate this problem as a convex optimization problem in the following manner. Recall that the alphabet of $Z$ is the group $\ringr{}$. Hence, $H(Z,W \mid S)$ is a concave function of the $|\mc{Z}| |\mc{S}| |p^i \ringr{}|$ variables $P_{ZSW}(Z = z, S = s, W = w), z \in \mc{Z}, S \in \mc{S}, w \in p^i \ringr{}$ and maximizing this conditional entropy is a convex minimization problem. Since the distribution $P_{ZS}$ is fixed, these variables satisfy the marginal constraint
\begin{equation} \label{eq:margeconstraint}
\sum_{w \in p^i \ringr{}} P_{ZW|S}(Z = z, W = w \mid S = s) = P_{Z|S}(z \mid s) \quad \forall z \in \mc{Z}, s \in \mc{S}
\end{equation}
The other constraint to be satisfied is that the random variable $\tilde{Z} = Z + W$ is jointly distributed with $S$ in the same way as $Z$, i.e., $P_{\tilde{Z}S} = P_{ZS}$. This can be expressed as
\begin{equation} \label{eq:samedistconstraint}
\sum_{w \in p^i \ringr{}} P_{ZW|S}(Z = z-w, W = w \mid S = s) = P_{Z|S}(z \mid s) \quad \forall z \in \mc{Z}, s \in \mc{S}.
\end{equation}

Thus the convex optimization problem can be stated as
\begin{align}
\nonumber \mbox{minimize} \,\, &- H(Z, W \mid S) \\
\nonumber \mbox{subject to} & \,\, \sum_{w \in p^i \ringr{}} P_{ZW|S}(z,w \mid s) = P_{Z|S}(z \mid s) \quad \forall z \in \mc{Z}, s \in \mc{S}, \\
& \, \, \sum_{w \in p^i \ringr{}} P_{ZW|S}(z-w,w \mid s) = P_{Z|S}(z \mid s) \quad \forall z \in \mc{Z}, s \in \mc{S}. \label{eq:convexprob}
\end{align}
Note that the objective function to be minimized is convex and the constraints of equations (\ref{eq:margeconstraint}) and (\ref{eq:samedistconstraint}) on $P_{ZW|S}(Z = z, W = w \mid S = s)$ are affine. Thus, the Karush-Kuhn-Tucker (KKT) conditions \cite{boyd} are necessary and sufficient for the points to be primal and dual optimal. We now derive the KKT conditions for this problem. We formulate the dual problem as
\begin{align}
\nonumber D(P_{ZW|S}) &= \, - \sum_{s \in \mc{S}} P_S(s) \left( \sum_{z \in \mc{Z}, w \in p^i \ringr{}} P_{ZW|S}(z,w \mid s) \log \frac{1}{P_{ZW|S}(z,w \mid s)} \right) \\ \nonumber &+ \sum_{z \in \mc{Z}, s \in \mc{S}} \lambda_{z,s} \left( \sum_{w \in p^i \ringr{}} P_{ZW|S}(z-w,w \mid s) - P_{Z|S}(z \mid s) \right)  \\ &+ \sum_{z \in \mc{Z}, s \in \mc{S}} \gamma_{z,s} \left( \sum_{w \in p^i \ringr{}} P_{ZW|S}(z,w \mid s) - P_{Z|S}(z \mid s) \right)
\end{align}
where $\{ \lambda_{z,s} \}, \{ \gamma_{z,s} \}$ are the Lagrange multipliers. Differentiating with respect to $P_{ZW|S}(Z = z, W = w \mid S = s)$ and setting the derivative to $0$, we get
\begin{align}
\frac{\partial D(P_{ZW|S})}{\partial P_{ZW|S}(z,w \mid s)} &=  P_S(s)( 1 + \log P_{ZW|S}(z,w \mid s) ) + \lambda_{(z + w),s} + \gamma_{z,s} = 0 \\
\implies \lambda_{(z+w),s} + \gamma_{z,s} &= -P_S(s)(1 + \log P_{ZW|S}(z,w \mid s)) \quad \forall z \in \mc{Z}, s \in \mc{S}, w \in p^i \ringr{}.
\end{align}
Summing over all $z \in p^i \ringr{}$ for a given $s \in \mc{S}$, we see that for all $w \in p^i \ringr{}$, the summation $\sum_{z \in p^i \ringr{}} \left( \lambda_{(z+w),s} + \gamma_{z,s} \right)$ is the same. This implies that
\begin{equation} \label{eq:kkteqs}
\prod_{z \in p^i \ringr{}} P_{ZW|S}(z,w \mid s) = \mbox{ constant} \quad \forall w \in p^i \ringr{}, \, \forall s \in \mc{S}.
\end{equation}
These $ |\mc{S}| p^{r-i}$ equations form the KKT equations and any solution that satisfies equations (\ref{eq:margeconstraint}), (\ref{eq:samedistconstraint}) and (\ref{eq:kkteqs}) is the optimal solution to the optimization problem (\ref{eq:convexprob}). We claim that the solution to this system of equations is given by
\begin{equation} \label{eq:pxwchoice}
P_{ZW|S}(z,w \mid s) = \frac{P_{Z|S}(z \mid s) P_{Z|S}(z+w \mid s)}{P_{Z | S}(z + p^i \ringr{} \mid s)}
\end{equation}
For this choice of $P_{ZW|S}(z,w \mid s)$, we now show that equation (\ref{eq:margeconstraint}) is satisfied.
\begin{align}
\sum_{w \in p^i \ringr{}} P_{ZW|S}(z,w \mid s) &= \sum_{w \in p^i \ringr{}} \frac{P_{Z|S}(z \mid s) P_{Z|S}(z+w \mid s)}{P_{Z|S}(z + p^i \ringr{} \mid s)} \\
&= \frac{P_{Z|S}(z \mid s)}{P_{Z|S}(z+p^i \ringr{} \mid s)} \sum_{w \in p^i \ringr{}} P_{Z|S}(z+w \mid s) \\
&= P_{Z|S}(z \mid s) \quad \forall z \in \mc{Z}, s \in \mc{S}.
\end{align}
Next, lets show that the choice of $P_{ZW|S}(z,w \mid s)$ in equation (\ref{eq:pxwchoice}) satisfies equation (\ref{eq:samedistconstraint}).
\begin{align}
\sum_{w \in p^i \ringr{}} P_{ZW|S}(Z = z-w, W = w \mid S = s) &= \sum_{w \in p^i \ringr{}} \frac{P_{Z|S}(z-w \mid s) P_{Z|S}(z \mid s)}{P_{Z|S}(z-w + p^i \ringr{} \mid s)} \\
&= P_{Z|S}(z \mid s) \sum_{w \in p^i \ringr{}} \frac{P_{Z|S}(z-w \mid s)}{P_Z(z + p^i \ringr{} \mid s)} \\ &= P_{Z|S}(z \mid s) \quad \forall z \in \mc{Z}, s \in \mc{S}.
\end{align}
Finally, we show that this choice of $P_{ZW|S}(z,w \mid s)$ satisfies the KKT conditions given by equation (\ref{eq:kkteqs}).
\begin{align}
\prod_{z \in p^i \ringr{}} P_{ZW|S}(z,w \mid s) &= \prod_{z \in p^i \ringr{}} \frac{P_{Z|S}(z \mid s) P_{Z|S}(z+w \mid s)}{P_{Z|S}(z + p^i \ringr{} \mid s)} \\
&= \left(\frac{1}{P_{Z|S}(p^i \ringr{} \mid s)}\right)^{p^{r-i}} \prod_{z \in p^i \ringr{}} P_{Z|S}^2(z \mid s)
\end{align}
which is independent of $w$ and is the same for any $w \in p^i \ringr{}$. Thus, equation (\ref{eq:pxwchoice}) indeed is the solution to the optimization problem described by equation (\ref{eq:convexprob}). Let us now compute the maximum value that the entropy $H(W \mid Z,S)$ takes for this choice of the conditional distribution $P_{ZW|S}$.
\begin{align}
H(W \mid Z,S) &= \sum_{s \in \mc{S}} \sum_{z \in \mc{Z}} P_{ZS}(z,s) \left( \sum_{w \in p^i \ringr{}} P_{W|ZS}(w \mid z,s) \log \frac{1}{P_{W|ZS}(w \mid z,s)} \right) \\
&= \sum_{s \in \mc{S}} \sum_{z \in \mc{Z}} P_{ZS}(z,s) \left( \sum_{w \in p^i \ringr{}} \frac{P_{Z|S}(z+w \mid s)}{P_{Z|S}(z + p^i \ringr{} \mid s)} \log \frac{P_{Z|S}(z + p^i \ringr{} \mid s)}{P_{Z|S}(z+w \mid s)} \right) \label{eq:tempevaleq}
\end{align}
Let $\mc{DC}$ be the set of all distinct cosets of $p^i \ringr{}$ in $\ringr{}$ and let $\mc{DC}(z)$ be the unique set in $\mc{DC}$ that contains $z$.
Let us evaluate the summation in the brackets of equation (\ref{eq:tempevaleq}) first.
\begin{align}
\sum_{w \in p^i \ringr{}} \frac{P_{Z|S}(z+w \mid s)}{P_{Z|S}(z + p^i \ringr{} \mid s)} \log \frac{P_{Z|S}(z + p^i \ringr{} \mid s)}{P_{Z|S}(z+w \mid s)} &= \sum_{w \in p^i \ringr{}} \frac{P_{Z|S}(z+w \mid s)}{P_{Z|S}( \mc{DC}(z) \mid s)} \log \frac{P_{Z|S}(\mc{DC}(z) \mid s)}{P_{Z|S}(z+w \mid s)} \\ &= \log P_{Z|S}(\mc{DC}(z) \mid s) + \frac{\sum_{z^{'} \in \mc{DC}(z)} P_{Z|S}(z^{'} \mid s) \log \frac{1}{P_{Z|S}(z^{'} \mid s)}}{P_{Z|S}(\mc{DC}(z) \mid s)}
\end{align}
This sum is dependent on $z$ only through the coset $\mc{DC}(z)$ to which $z$ belongs. Thus, the sum is the same for any two $z$ that belong to the same coset of $p^i \ringr{}$ in $\ringr{}$. Thus, we have
\begin{align}
H(W \mid Z,S) &= \sum_{s \in \mc{S}} P_S(s) \sum_{T \in \mc{DC}} \sum_{z \in T} P_{Z|S}(z \mid s) \left( \log P_{Z|S}(T \mid s) + \frac{\sum_{z^{'} \in T} P_{Z|S}(z^{'} \mid s) \log \frac{1}{P_{Z|S}(z^{'} \mid s)}}{P_{Z|S}(T \mid s)} \right) \\
&= \sum_{s \in \mc{S}} \sum_{T \in \mc{DC}} P_{Z|S}(T \mid s) \left( \log P_{Z|S}(T \mid s) + \frac{\sum_{z^{'} \in T} P_{Z|S}(z^{'} \mid s) \log \frac{1}{P_{Z|S}(z^{'} \mid s)}}{P_{Z|S}(T \mid s)} \right) \\
&= \sum_{s \in S} \sum_{z^{'} \in \mc{Z}} P_{Z|S}(z^{'} \mid s) \log \frac{1}{P_{Z|S}(z^{'} \mid s)} + \sum_{s \in S} P_{Z|S}(T \mid s) \log \frac{1}{P_{Z|S}(T \mid s)} \\
&= H(Z \mid S) - H([Z]_i \mid S)
\end{align}
where $[Z]_i$ is as defined in Lemma \ref{lemma:sizelem}.
\end{proof}

We are now ready to prove the existence of good group channel codes. Let $Z$ take values in the group $\ringr{}$ and further be non-redundant. Coding is done in blocks of length $n$. We show the existence of a good channel code by averaging the probability of a decoding error over all possible choices of the homomorphism $\phi(\cdot)$ from the family Hom$(\ringr{n},\ringr{k})$. Let $H$ be the parity check matrix and $\mc{C}$ be the kernel of a randomly chosen homomorphism $\phi(\cdot)$.

The probability of the set $B_{\epsilon}(\mc{C})$  can be written as
\begin{align}
P_{ZS}(B_{\epsilon}(\mc{C})) &= \sum_{(z^n,s^n)} P_{ZS}(z^n,s^n)  I\left( \bigcup_{\stackrel{(\tilde{z}^n,s^n) \in \typset{}{Z,S}}{\tilde{z}^n \neq z^n}} (\phi(\tilde{z}^n) = \phi(z^n)) \right)
\end{align}
where $I(E)$ is the indicator of the event $E$. Taking the expectation of this probability, we get
\begin{align}
\Bbb{E}(P_{ZS}(B_{\epsilon}(\mc{C}))) &= \sum_{(z^n,s^n)} P_{ZS}(z^n,s^n) P \left( \bigcup_{\stackrel{(\tilde{z}^n,s^n) \in \typset{}{Z,S}}{\tilde{z}^n \neq z^n}} (\phi(\tilde{z}^n) = \phi(z^n)) \right) \\
&\leq \sum_{(z^n,s^n) \in \typset{}{Z,S}} P_{ZS}(z^n,s^n) P \left( \bigcup_{\stackrel{(\tilde{z}^n,s^n) \in \typset{}{Z,S}}{\tilde{z}^n \neq z^n}} (\phi(\tilde{z}^n) = \phi(z^n)) \right) + \sum_{(z^n,s^n) \notin \typset{}{Z,S}} P_{ZS}(z^n,s^n) \\
&\leq \sum_{(z^n,s^n) \in \typset{}{Z,S}} P_{ZS}(z^n,s^n) P \left( \bigcup_{\stackrel{(\tilde{z}^n,s^n) \in \typset{}{Z,S}}{\tilde{z}^n \neq z^n}} (\phi(\tilde{z}^n - z^n) = 0^k) \right) \,+ \, \delta_1 \label{eq:Bepsprob}
\end{align}
where $\delta_1 \rightarrow 0$ as $n \rightarrow \infty$.

We now derive a uniform bound for the probability that for a given $(z^n,s^n) \in \typset{}{Z,S}$, a randomly chosen homomorphism maps $\tilde{z}^n$ to the same syndrome as $z^n$ for some $\tilde{z}^n$ such that $(\tilde{z}^n,s^n) \in \typset{}{Z,S}$. From Lemma \ref{lemma:uniflem} and \ref{lemma:sizelem}, we see that this probability depends on which of the sets $S_{i,\epsilon}(z^n,s^n), 0 \leq i < r$ the sequence $\tilde{z}^n$ belongs to.
\begin{align}
P \left( \bigcup_{\stackrel{(\tilde{z}^n,s^n) \in \typset{}{Z,S}}{\tilde{z}^n \neq z^n}} \phi(\tilde{z}^n - z^n) = 0^k\right) &\leq \sum_{\stackrel{(\tilde{z}^n,s^n) \in \typset{}{Z,S}}{\tilde{z}^n \neq z^n}} P(\phi(\tilde{z}^n - z^n) = 0^k) \label{eq:footnoteeq} \\
&= \sum_{i=0}^{r-1} \sum_{\tilde{z}^n \in S_{i,\epsilon}(z^n,s^n) \backslash S_{i+1,\epsilon}(z^n,s^n)} P(\phi(\tilde{z}^n - z^n) = 0^k) \\
&\stackrel{(a)}{=} \sum_{i=0}^{r-1} \frac{|S_{i,\epsilon}(z^n,s^n)| - |S_{i+1,\epsilon}(z^n,s^n)|}{p^{(r-i)k}} \\
&\stackrel{(b)}{\leq} \sum_{i=0}^{r-1} \exp_2 \left( n \left[ H(Z|S) - H([Z]_i|S) - \frac{k}{n} (r-i) \log p + \delta_2(\epsilon) \right] \right) \label{eq:channelerroreq}
\end{align}
where $(a)$ follows from Lemma \ref{lemma:uniflem} and $(b)$ follows from Lemma \ref{lemma:sizelem}. If this summation were to go to zero with block length, it would follow from equation (\ref{eq:Bepsprob}) that the expected probability of the set $B_{\epsilon}(\mc{C})$ also goes to zero. This implies the existence of at least one homomorphism $\phi(\cdot)$ such that the associated codebook $\mc{C}$ satisfies for a given $\epsilon > 0$, $P_{ZS}(B_{\epsilon}(\mc{C})) \leq \epsilon$ for sufficiently large block length.

The summation in equation (\ref{eq:channelerroreq}) goes to zero if each of the terms goes to zero. This happens if
\begin{equation}
\frac{k}{n} \frac{r-i}{r} \log p^r \geq H(Z|S) - H([Z]_i|S) + \delta_2(\epsilon) \quad \mbox{for } i = 0,\dots, r-1
\end{equation}
or equivalently
\begin{equation}
\frac{k(n)}{n} \log p^r > \max_{0 \leq i < r} \left( \frac{r}{r-i} \right) (H(Z|S) - H([Z]_i|S)) + \delta_2(\epsilon)
\end{equation}
It is clear that in the limit as $n \rightarrow \infty$, good group channel codes exist such that the dimensions of the associated parity check matrices satisfy equation (\ref{eq:goodlinchannellimiteq}). When $\mc{C}$ is a good channel code, define the decoding function $\psi \colon \ringr{k} \times \mc{S}^n \rightarrow \ringr{n}$ for a given $(z^n,s^n)$ as the unique member of the set $\{\hat{z}^n \colon H\hat{z}^n = Hz^n, (\hat{z}^n,s^n) \in \typset{}{Z,S} \}$.

\section{Good Group Source Codes} \label{sec:goodsrcproof}
We prove the existence of source codes built over the space $\ringr{n}$ which are good for the triple $(\mc{X},\mc{U},P_{XU})$ according to Definition \ref{defi:goodsrccodedefi}. Let the random variable $U$ take values from the group $\ringr{}$, i.e., $\mc{U} = \ringr{}$ and let $U$ be non-redundant. Let $\phi \colon \ring{p^r}^n \rightarrow \ring{p^r}^k$ be a homomorphism for some $k$ to be fixed later. The codebook $\mc{C}$ is the kernel $\ker(\phi)$ of this homomorphism. Note that $\ker(\phi) < \ring{p^r}^n$ and hence the codebook has a group structure. We show the existence of a good code $\mc{C}$ by averaging the probability of error over all possible choices of $\phi(\cdot)$ from the family of all homomorphisms Hom$(\ring{p^r}^n, \ring{p^r}^k)$.

Recall the definition of the set $A_{\epsilon}(\mc{C})$ from equation (\ref{eq:Asetdefi}). The probability of this set can be written as
\begin{align}
P(A_{\epsilon}(\mc{C})) &= \sum_{x^n} P_X(x^n) I \left( \bigcup_{u^n \in \mc{C}} (x^n,u^n) \in \typset{}{X,U} \right)
\end{align}
The expected value of this probability is
\begin{align}
\Bbb{E}(P(A_{\epsilon}(\mc{C}))) &= \sum_{x^n} P_X(x^n) P \left( \bigcup_{u^n \in \mc{C}} (x^n,u^n) \in \typset{}{X,U} \right) \\
&\geq \sum_{x^n \in \typset{}{X}} P_X(x^n) P \left( \bigcup_{u^n \in \mc{C}} (x^n,u^n) \in \typset{}{X,U} \right) \label{eq:Aepsprob}
\end{align}

For a typical $x^n$, let us compute the probability that there exists no $u^n \in \mc{C}$ jointly typical with the source sequence $x^n$. Define the random variable $\Theta(x^n)$ as
\begin{equation} \label{eq:Thetadef}
\Theta(x^n) = \sum_{u^n \in \typset{}{x^n}} 1_{\{u^n \in \mc{C} \}}.
\end{equation}
$\Theta(x^n)$ counts the number of $u^n$ sequences in the codebook $\mc{C}$ that are jointly typical with $x^n$. The error event $E$ given that the source sequence is $x^n$ is equivalent to the event $\{\Theta(x^n) = 0\}$. Thus, we need to evaluate the probability of this event. Note that $\Theta(x^n)$ is a sum of indicator random variables some of which might be dependent. This dependence arises from the structural constraint on the codebook $\mc{C}$. For example, $u_1^n \in \mc{C}$ implies that $k u_1^n \in \mc{C}$ as well for any $k \in \ringr{}$. We use Suen's inequality \cite{janson} to bound this probability.

In order to use Suen's inequality, we need to form the dependency graph between these indicator random variables. We do this in a series of lemmas. We first evaluate the probability that a given typical sequence belongs to the kernel of a randomly chosen homomorphism. Since $U$ is assumed to be non-redundant, by Lemma \ref{lemma:uniflem}, we have
\begin{equation} \label{eq:Iiprob}
P(u^n \in \mc{C}) = p^{-rk}
\end{equation}

We now turn our attention to pairwise relations between the indicator random variables. For two $n$-length sequences $u_1^n,u_2^n$, define the matrices $M_{k,l}(u_1^n,u_2^n), 1 \leq k,l \leq n$ and $k \neq l$ as
\begin{equation}
M_{k,l}(u_1^n,u_2^n) = \left[ \begin{array}{cc} u_{1k} & u_{1l} \\ u_{2k} & u_{2l} \end{array} \right]
\end{equation}
Let $m_{k,l}(u_1^n,u_2^n)$ be the determinant of the matrix $M_{k,l}(u_1^n,u_2^n)$. Define the set
\begin{equation}
M(u_1^n,u_2^n) \triangleq \{ m_{k,l}(u_1^n,u_2^n) \colon u_{1k}^{-1} \mbox{ exists} \}
\end{equation}
Note that the set $M(u_1^n,u_2^n)$ is non-empty since $u_1^n$ is assumed to be a non-redundant sequence. Let $D(u_1^n,u_2^n)$ be the smallest subgroup of $\ringr{}$ that contains the set $M(u_1^n,u_2^n)$. As will be shown, the probability that both $u_1^n$ and $u_2^n$ belong to the kernel of a randomly chosen homomorphism depends on $D(u_1^n,u_2^n)$. For ease of notation, we suppress the dependence of the various quantities on the sequences $u_1^n,u_2^n$ in what follows.

\begin{lemma} \label{lemma:jointhomoprob}
For two non-redundant sequences $u_1^n, u_2^n$, the probability that a random homomorphism $\phi \colon \ringr{n} \rightarrow \ringr{k}$ maps the sequences to $0^k$ is
\begin{equation}
P(\phi(u_1^n) = \phi(u_2^n) = 0^k) = p^{-(2r-i)k} \quad \mbox{if }\, D(u_1^n,u_2^n) = p^i \ringr{}, 0 \leq i \leq r
\end{equation}
\end{lemma}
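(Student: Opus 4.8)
The plan is to reduce the statement to a count of solutions of a pair of modular linear equations, using the matrix picture for homomorphisms. Write $\Phi$ for the $k \times n$ matrix representing $\phi$; choosing $\phi$ uniformly from $\mbox{Hom}(\ringr{n},\ringr{k})$ is the same as drawing the $k$ rows of $\Phi$ independently and uniformly from $\ringr{n}$. Since $\phi(u_1^n) = \Phi u_1^n$ and $\phi(u_2^n) = \Phi u_2^n$ act row-wise, the event $\{\phi(u_1^n) = \phi(u_2^n) = 0^k\}$ is the intersection of $k$ independent copies of the event that a single uniform row $\rho \in \ringr{n}$ satisfies $\rho \cdot u_1^n = 0$ and $\rho \cdot u_2^n = 0$. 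Hence it suffices to show that, when $D(u_1^n,u_2^n) = p^i \ringr{}$,
\[
P\bigl(\rho \cdot u_1^n = 0,\ \rho \cdot u_2^n = 0\bigr) \;=\; p^{-(2r-i)},
\]
and raise this to the $k$-th power.

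To evaluate this single-row probability I would exploit non-redundancy of $u_1^n$: after a harmless relabeling of coordinates (which only permutes and possibly negates the minors $m_{k,l}$, leaving $D(u_1^n,u_2^n)$ unchanged) we may assume $u_{11}$ is invertible in $\ringr{}$. Then $\rho \cdot u_1^n = 0$ determines $\rho_1$ uniquely from $(\rho_2,\dots,\rho_n)$, namely $\rho_1 = -u_{11}^{-1}\sum_{j\ge 2}\rho_j u_{1j}$; substituting this into $\rho \cdot u_2^n = 0$ and multiplying through by the unit $u_{11}$ turns the second equation into the single linear constraint $\sum_{j=2}^{n}\rho_j\, m_{1,j}(u_1^n,u_2^n) = 0$ in the free variables $(\rho_2,\dots,\rho_n)\in\ringr{n-1}$, where $m_{1,j} = u_{11}u_{2j} - u_{1j}u_{21}$. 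The number of $\rho \in \ringr{n}$ satisfying both equations equals the cardinality of the kernel of the group homomorphism $(\rho_2,\dots,\rho_n)\mapsto\sum_{j\ge 2}\rho_j m_{1,j}$ from $\ringr{n-1}$ to $\ringr{}$. Its image is the subgroup of $\ringr{}$ generated by $\{m_{1,j}: 2\le j\le n\}$; if this subgroup is $p^{i}\ringr{}$ then the image has $p^{r-i}$ elements and the kernel has $(p^{r})^{n-1}/p^{r-i} = (p^{r})^{n-2}p^{i}$ elements, so the probability is $(p^{r})^{n-2}p^{i}/(p^{r})^{n} = p^{-(2r-i)}$, as required.

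The remaining point is that the subgroup generated by $\{m_{1,j}: 2\le j\le n\}$ is exactly $D(u_1^n,u_2^n)$. One inclusion is immediate: $u_{11}$ being a unit, each $m_{1,j}$ belongs to $M(u_1^n,u_2^n)$, so the generated subgroup lies in $D(u_1^n,u_2^n)$. For the reverse inclusion I would use the Pl\"ucker-type identity $u_{11}\, m_{k,l} = u_{1k}\, m_{1,l} - u_{1l}\, m_{1,k}$ (a direct expansion of the $2\times2$ determinants), which exhibits every minor $m_{k,l}$ — in particular every generator of $D(u_1^n,u_2^n)$ — as an $\ringr{}$-combination of the $m_{1,j}$'s after multiplying by $u_{11}^{-1}$; hence $D(u_1^n,u_2^n)$ lies in the subgroup generated by the $m_{1,j}$. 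Together with the hypothesis $D(u_1^n,u_2^n) = p^{i}\ringr{}$ this closes the argument. The one case to treat by hand is $i = r$, i.e.\ $D(u_1^n,u_2^n) = \{0\}$: then all $m_{1,j}$ vanish, which forces $u_2^n = (u_{11}^{-1}u_{21})\,u_1^n$, so $\phi(u_1^n) = 0$ already implies $\phi(u_2^n) = 0$ and the probability is $p^{-rk}$ by Lemma \ref{lemma:uniflem}, in agreement with the formula. The main obstacle is getting the two bookkeeping facts exactly right — the count of solutions of one modular linear equation in terms of the subgroup generated by its coefficients, and the minor identity that ties those coefficients back to $D(u_1^n,u_2^n)$.
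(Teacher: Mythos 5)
Your proof is correct, and the basic skeleton matches the paper's: decompose $\phi$ into its $k$ i.i.d.\ rows, use invertibility of some $u_{1k}$ to solve the first constraint $\rho\cdot u_1^n=0$ for one variable, substitute into $\rho\cdot u_2^n=0$, and reduce to counting solutions of a single linear equation $\sum_{j\neq k}\rho_j\, m_{k,j}=0$ in the remaining $n-1$ free variables. The step where you diverge is the counting of that last equation's solutions. The paper does \emph{not} fix $k=1$; it chooses the index pair $(k,j^*)$ so that $m_{k,j^*}$ has $p$-adic valuation exactly $i$ (such a pair exists because $D(u_1^n,u_2^n)=p^i\mathbb{Z}_{p^r}$ is generated by the minors), uses the fact that all other $m_{k,j}\in M(u_1^n,u_2^n)\subset p^i\mathbb{Z}_{p^r}$, and then invokes Lemma~\ref{lemma:numsolns} to get exactly $p^i$ solutions for $\alpha_{j^*}$. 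You instead fix $k=1$ by relabeling and count via the first isomorphism theorem: the map $(\rho_2,\dots,\rho_n)\mapsto\sum\rho_j m_{1,j}$ has image the subgroup generated by $\{m_{1,j}\}$, so the kernel size follows immediately. The price you pay for fixing $k=1$ is that you must check that $\langle m_{1,j}:j\ge 2\rangle$ really equals $D(u_1^n,u_2^n)$ (which is \emph{a priori} generated by minors over \emph{all} invertible pivot rows), and your Pl\"ucker-type identity $u_{11}m_{k,l}=u_{1k}m_{1,l}-u_{1l}m_{1,k}$ cleanly closes that gap — using the fact that in $\mathbb{Z}_{p^r}$ every subgroup is an ideal, so an $\mathbb{Z}_{p^r}$-linear combination of generators stays inside the subgroup. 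This is a genuine, if small, improvement in transparency over the paper's argument: you replace the ad hoc appeal to Lemma~\ref{lemma:numsolns} with standard kernel/image counting, at the cost of one extra determinant identity.

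Two tiny clarifications worth noting. First, permuting coordinates does not \emph{negate} minors: it relabels them (possibly transposing index pairs, with $m_{l,k}=-m_{k,l}$), which in any case leaves the generated subgroup unchanged — so your ``harmless relabeling'' claim is fine, just slightly overstated. Second, your separate treatment of $i=r$ is not actually necessary: when $\langle m_{1,j}\rangle=\{0\}$ the image is trivial, the kernel is all of $\mathbb{Z}_{p^r}^{n-1}$, and the general formula gives $p^{-(2r-r)}=p^{-r}$ directly; but handling it explicitly does no harm.
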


\begin{proof}[\textbf{Proof}:]
Let the homomorphism $\phi(\cdot)$ be decomposed as $\phi_i \colon \ringr{n} \rightarrow \ringr{}, 1 \leq i \leq k$. We first count the number of homomorphisms $\phi_1(\cdot)$ that map both $u_1^n$ and $u_2^n$ to $0$. Recall that $\phi_1(u_1^n)$ can be expressed as the linear combination $\phi_1(u_1^n) = \sum_{j=1}^n \alpha_j u_{1j}$ for $\alpha_j \in \ringr{}, 1 \leq j \leq n$. Thus, we need to find the number of solutions $\{ \alpha_j\}_{j=1}^{n}$ that simultaneously satisfy the equations
\begin{align}
\sum_{j=1}^{n} \alpha_j u_{1j} &= 0 \label{eq:alpeq1}\\
\sum_{j=1}^{n} \alpha_j u_{2j} &= 0 \label{eq:alpeq2}
\end{align}

If $D(u_1^n,u_2^n) = p^i \ringr{}$, then there exists some $1 \leq k \leq n$ such that $u_{1k}^{-1}$ exists and $m_{k,j^*} \in p^i \ringr{} \backslash p^{i+1} \ringr{}$ for some $1 \leq j^* \leq n, j^* \neq k$. Fix such a $k$. Then, any solution to the equation (\ref{eq:alpeq1}) must be of the form $\alpha_j, j \neq k \mbox{ arbitrary}, \alpha_k = -u_{1k}^{-1} \sum_{j \neq k} \alpha_j u_{1j}$ for some $k$ such that $u_{1k}^{-1}$ exists. Thus, the total number of solutions to equation (\ref{eq:alpeq1}) is $p^{r(n-1)}$. Substituting one such solution into equation (\ref{eq:alpeq2}), we get
\begin{align}
\sum_{j=1}^{n} \alpha_j u_{2j} &= \sum_{j \neq k} \alpha_j u_{2j} - u_{1k}^{-1} u_{2k} \left( \sum_{j \neq k} \alpha_j u_{1j} \right) \\
&= u_{1k}^{-1} \left( \sum_{j \neq k} \alpha_j (u_{1k} u_{2j} - u_{2k} u_{1j}) \right) \\
&= u_{1k}^{-1} \sum_{j \neq k} \alpha_j m_{k,j}
\end{align}

Of the $p^{r(n-1)}$ choices for $\{\alpha_i \}_{i=1}^n$, we need to find those that satisfy $\sum_{j \neq k} \alpha_j m_{k,j} = 0$. We allow $\alpha_j$ to be arbitrary for $j \neq k, j^*$ and solve the equation $\alpha_{j^*} m_{k,j*} = - \sum_{j \neq k,j^*} \alpha_j m_{k,j}$. It is clear that the summation in the right hand side yields a sum that belongs to $p^i \ringr{}$. Since $k,j^*$ are chosen such that $m_{k,j^*} \in p^i \ringr{} \backslash p^{i+1} \ringr{}$, it follows from Lemma \ref{lemma:numsolns} in Appendix \ref{sec:numsolnssec} that this equation has $p^i$ solutions for $\alpha_{j^*}$ for each of the $p^{r(n-2)}$ choices of $\alpha_j, j \neq k,j^*$. Once $\alpha_j, j \neq k$ is fixed, $\alpha_k$ is automatically fixed at $\alpha_k = -u_{1k}^{-1} \sum_{j \neq k} \alpha_j u_{1j}$. Thus, the total number of solutions that simultaneously satisfy equations (\ref{eq:alpeq1}) and (\ref{eq:alpeq2}) is $p^i p^{r(n-2)}$.

It follows that the probability of a randomly chosen homomorphism $\phi_1(\cdot)$ mapping both $u_1^n,u_2^n$ to $0$ is given by $p^i / p^{2r}$. Since each of the $k$ homomorphisms $\phi_i, 1 \leq i \leq k$ can be chosen independently, we have
\begin{equation}
P(\phi(u_1^n) = \phi(u_2^n) = 0) = p^{-(2r-i)k}
\end{equation}
when $D(u_1^n,u_2^n) = p^i \ringr{}$ for some $0 \leq i \leq r$. This proves the claim of Lemma \ref{lemma:jointhomoprob}.
\end{proof}

Suppose $u_1^n$ and $u_2^n$ are non-redundant sequences. It follows from Lemmas \ref{lemma:uniflem} and \ref{lemma:jointhomoprob} that the events $1_{\{u_1^n \in \mc{C} \}}$ and $1_{\{u_2^n \in \mc{C} \}}$ are independent when $D(u_1^n, u_2^n) = \ringr{}$. In order to infer the dependency graph of the indicator random variables in equation (\ref{eq:Thetadef}), we need to count the number of sequences $u_2^n$ for a given $u_1^n$ such that $D(u_1^n,u_2^n) = p^i \ringr{}$ for a given $1 \leq i \leq r$. This is the content of the next lemma.

\begin{lemma} \label{lemma:numu2givenu1}
Let $u_1^n$ be a non-redundant sequence. Let $D_i(u_1^n), 0 \leq i \leq r$ be the set of all $u_2^n$ sequences such that $D(u_1^n,u_2^n) = p^i \ringr{}$. The size of the set $D_i(u_1^n)$ is given by
\begin{equation}
|D_i(u_1^n)|= \left\{ \begin{array}{cc} p^r \left( p^{(r-i)(n-1)} - p^{(r-i-1)(n-1)} \right) & 0 \leq i < r \\ p^r-1 & i = r \end{array} \right.
\end{equation}
\end{lemma}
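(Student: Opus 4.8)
The plan is to first show that the subgroup $D(u_1^n,u_2^n)$ — defined through the whole family of $2\times2$ minors $m_{k,l}$ with $u_{1k}$ invertible — is already generated by the $n-1$ minors formed against a single fixed invertible pivot coordinate of $u_1^n$, after which the count reduces to an elementary count of tuples in $\ringr{n-1}$. Since $u_1^n$ is non-redundant it has a coordinate in $\ringr{}\setminus p\ringr{}$, i.e.\ an invertible entry; relabelling the coordinates — which only permutes the set of minors up to sign, hence leaves $D(\cdot,\cdot)$ and each cardinality $|D_i(u_1^n)|$ unchanged — I may assume $u_{11}$ is invertible. I would then verify by direct expansion the identity
\[
u_{11}\,m_{k,l}(u_1^n,u_2^n)=u_{1k}\,m_{1,l}(u_1^n,u_2^n)-u_{1l}\,m_{1,k}(u_1^n,u_2^n),\qquad k\neq l,
\]
(with the convention $m_{1,1}:=0$, so that the case $l=1$ is just the antisymmetry $m_{k,1}=-m_{1,k}$). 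Multiplying by $u_{11}^{-1}$ shows every member of $M(u_1^n,u_2^n)$ lies in $\langle m_{1,2},\dots,m_{1,n}\rangle$; conversely those $n-1$ minors themselves belong to $M(u_1^n,u_2^n)$ because $u_{11}$ is a unit. Hence $D(u_1^n,u_2^n)=\langle m_{1,2},\dots,m_{1,n}\rangle$, which, since the subgroups of $\ringr{}$ are exactly the $p^i\ringr{}$, equals $p^i\ringr{}$ for the appropriate $i$.

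Next I would set up the counting. Fixing $u_{21}\in\ringr{}$, the map $(u_{22},\dots,u_{2n})\mapsto(m_{1,2},\dots,m_{1,n})$ with $m_{1,l}=u_{11}u_{2l}-u_{1l}u_{21}$ is a bijection of $\ringr{n-1}$ because $u_{11}$ is invertible. So counting the $u_2^n$ with $D(u_1^n,u_2^n)=p^i\ringr{}$ reduces to: pick $u_{21}$ ($p^r$ choices) and count tuples $(m_2,\dots,m_n)\in\ringr{n-1}$ that generate exactly $p^i\ringr{}$. The number of tuples with all coordinates in $p^i\ringr{}$ is $(p^{r-i})^{n-1}=p^{(r-i)(n-1)}$, so for $0\le i<r$ the number generating exactly $p^i\ringr{}$ is $p^{(r-i)(n-1)}-p^{(r-i-1)(n-1)}$, while for $i=r$ it is $1$ (only the zero tuple). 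Multiplying by $p^r$ gives $p^r\left(p^{(r-i)(n-1)}-p^{(r-i-1)(n-1)}\right)$ for $0\le i<r$ and $p^r$ for $i=r$; this last value enumerates the sequences $u_2^n=c\,u_1^n$, $c\in\ringr{}$, among which the single sequence $u_2^n=u_1^n$ is not counted (it is not a distinct vertex of the dependency graph being built), leaving $p^r-1$. This matches the claimed formula, and as a consistency check the totals sum to $p^{rn}-1$, the number of $u_2^n\neq u_1^n$.

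\textbf{Main difficulty.} The one genuinely nontrivial step is the reduction above: recognizing that the $O(n^2)$-element set $M(u_1^n,u_2^n)$ and the $(n-1)$-element set of minors coming from a single invertible pivot generate the same subgroup of $\ringr{}$. The determinant identity is the crux; once it is in hand the remaining arguments are elementary bookkeeping, and one only needs to take minor care that the initial coordinate relabelling is harmless and that the $i=r$ case is separated from the $i<r$ formula.
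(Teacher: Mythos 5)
Your proof is correct and follows the same counting strategy as the paper's: fix one coordinate of $u_2^n$ freely ($p^r$ choices), observe that the remaining $n-1$ entries map bijectively to the vector of minors against a fixed invertible pivot, count the minor vectors that generate exactly $p^i\ringr{}$, and separate off $u_2^n = u_1^n$ in the $i=r$ case. The one substantive addition you make is the determinant identity $u_{11}m_{k,l} = u_{1k}m_{1,l} - u_{1l}m_{1,k}$, which rigorously justifies that the subgroup $D(u_1^n,u_2^n)$ — defined in terms of the full family of minors $m_{k,l}$ over all invertible pivots $u_{1k}$ — is already generated by the $n-1$ minors from a single fixed invertible coordinate. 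The paper's proof uses this fact implicitly (it counts the $u_2^n$ achieving prescribed values of $\{m_{k,l}\}_{l\neq k}$ for one fixed $k$ and treats that as determining $D(\cdot,\cdot)$) but never proves it; your identity closes that small gap. You are also more explicit than the paper in acknowledging that the $i=r$ count $p^r-1$ comes from excluding $u_2^n=u_1^n$, which is suppressed in the lemma's statement but required by its use in building the dependency graph for Suen's inequality.
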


\begin{proof}[\textbf{Proof}:]
We start by estimating the size of $D_r(u_1^n)$, i.e., the set of $u_2^n$ sequences such that $D(u_1^n,u_2^n) = 0$. Since $D(u_1^n,u_2^n) = 0$, $u_2^n$ must be such that there exists $1 \leq k \leq n$ such that $u_{1k}^{-1}$ exists and $m_{k,j} = 0$ for all $j \neq k$. This implies that $u_{1k} u_{2j} = u_{2k} u_{1j}$ for all $j \neq k$. Define $\eta = u_{1k}^{-1} u_{2k}$. It then follows that $u_{2j} = \eta u_{1j}$ for all $1 \leq j \leq n$ which implies that $u_2^n = \eta u_1^n$ for some $\eta \in \ringr{}$. Since it is assumed that $u_2^n \neq u_1^n$, there are $p^r - 1$ distinct values of $\eta$. Since the sequence $u_1^n$ is non-redundant, it follows that each value of $\eta$ results in a distinct value of $u_2^n$. Thus, $|D_r(u_1^n)| = p^r-1$ as claimed in the Lemma.

Consider the case when $D(u_1^n,u_2^n) = p^i \ringr{}$ for some $0 \leq i < r$. We count the number of $u_2^n$ for a given $u_1^n$ such that $p^i \ringr{}$ is the smallest subgroup containing all the set $M(u_1^n,u_2^n)$. Since $D(u_1^n,u_2^n) = p^i \ringr{}$, $u_2^n$ must be such that there exists $1 \leq k \leq n$ such that $u_{1k}^{-1}$ exists and $m_{k,j^*} \in p^i \ringr{} \backslash p^{i+1} \ringr{}$ for some $1 \leq j^* \leq n, j^* \neq k$. Consider the matrices $M_{k,l}(u_1^n,u_2^n), 1 \leq l \leq n, l \neq k$. Let $\Delta_{k,l} \in p^i \ringr{}, 1 \leq l \leq n, l \neq k$. Fixing the values of the determinants $m_{k,l}(u_1^n,u_2^n)$ to be $\Delta_{k,l}$, we can solve for the entire sequence $u_2^n$. Thus, $D_i(u_1^n)$ contains the union over all permissible values of $\{\Delta_{k,l}\}_{l \neq k}$ of those sequences $u_2^n$ such that $m_{k,l}(u_1^n,u_2^n) = \Delta_{k,l}$ for all $1 \leq l \leq n, l \neq k$.

For a given $\{\Delta_{k,l}\}_{l \neq k}$, let us investigate the number of $u_2^n$ sequences such that $m_{k,l}(u_1^n,u_2^n) = \Delta_{k,l}$ for all $1 \leq l \leq n, l \neq k$. Consider first the equation $m_{k,l^*} = \Delta_{k,l^*}$ for some $l^* \neq k$. Since $u_{1k}$ is invertible, there are $p^r$ possible solutions in $(u_{2k}, u_{2l^*})$ for this equation. Now consider the equations $m_{k,l} , 1 \leq l \leq n, l \neq k,l^*$. Since $u_{2k}$ is already fixed and $u_{1k}$ is invertible, there is precisely one solution to $u_{2l}$ in these equations. Solving these $(n-1)$ equations fixes the sequence $u_2^n$. Thus, the number of solutions to $u_2^n$ for a given $u_1^n$ and $\{\Delta_{k,l}\}_{l \neq k}$ is $p^r$. The number of $\Delta_{k,l}$ such that $\{\Delta_{k,l}\}_{l \neq k} \in p^i \ringr{n-1}$ is clearly $p^{(r-i)(n-1)}$. For $D(u_1^n,u_2^n) = p^i \ringr{}$, there must exist at least one $\Delta_{k,l} \in p^i \ringr{} \backslash p^{i+1} \ringr{}$. The total number of such $\{\Delta_{k,l}\}_{l \neq k}$ is clearly $p^{(r-i)(n-1)} - p^{(r-i-1)(n-1)}$. Putting these arguments together, we get that the size of $D_i(u_1^n)$ is $p^r (p^{(r-i)(n-1)} - p^{(r-i-1)(n-1)})$. This proves the claim of Lemma \ref{lemma:numu2givenu1}.
\end{proof}

We are now ready to infer the dependency graph of the indicator random variables in equation (\ref{eq:Thetadef}). The number of nodes in the dependency graph is $|\typset{}{x^n}|$. Let $I_i$ be the indicator of the event $\{u_i^n \in \mc{C}\}$ and let $I_i$ correspond to the $i$th vertex of the graph. From Lemma \ref{lemma:jointhomoprob}, it follows that vertices $i$ and $j$ are connected (denoted by $i \sim j$) if $D(u_i^n,u_j^n) \neq \ringr{}$. Using Lemma \ref{lemma:numu2givenu1}, the degree of the $i$th vertex can be bounded by $p^{rn} - |D_0(u_1^n)| - 1 = p^{r+(r-1)(n-1)}-1$. Note that this is an upper bound since not all $u_2^n$ sequences counted in Lemma \ref{lemma:numu2givenu1} need belong to $\typset{}{x^n}$.

One version of Suen's inequality can be stated as follows. Let $I_i \in \mbox{Be}(p_i), i \in \mc{I}$ be a family of Bernoulli random variables having a dependency graph L with vertex set $\mc{I}$ and edge set $E(L)$. Let $X = \sum_i I_i$ and $\lambda = \Bbb{E}(X) = \sum_i p_i$. Write $i \sim j$ if $(i,j) \in E(L)$ and let $\Delta = \frac{1}{2} \sum_i \sum_{j \sim i} \Bbb{E}(I_i I_j)$ and $\delta = \max_i \sum_{k \sim i} p_k$. Then
\begin{equation} \label{eq:suenineq}
P(X = 0) \leq \exp \left\{ - \min \left( \frac{\lambda^2}{8 \Delta}, \frac{\lambda}{2}, \frac{\lambda}{6 \delta} \right) \right\}
\end{equation}

Let us estimate the quantities $\lambda, \Delta$ and $\delta$ for our problem. It follows from equation (\ref{eq:Iiprob}) that $\lambda = \Bbb{E}(\Theta(x^n)) = |\typset{}{x^n}| p^{-rk}$. Uniform upper and lower bounds \cite{csiszarbook} exist for the size of the set $\typset{}{x^n}$. An upper bound to $\Delta$ can be established via Lemmas \ref{lemma:jointhomoprob} and \ref{lemma:numu2givenu1} as below.
\begin{align}
\Delta &= \frac{1}{2} \sum_i \sum_{j \sim i} \Bbb{E}(I_i I_j) \\
&= \frac{1}{2} \sum_i \sum_{j \sim i} P(\phi(u_i^n) = \phi(u_j^n) = 0) \\
&= \frac{1}{2} \sum_{u_i^n \in \typset{}{x^n}} \, \sum_{m=1}^r \, \sum_{u_j \in \typset{}{x^n} \cap D_m(u_1^n)} P(\phi(u_i^n) = \phi(u_j^n) = 0) \\
&\stackrel{(a)}{=} \frac{1}{2} \sum_{u_i^n \in \typset{}{x^n}} \, \sum_{m=1}^{r} |\typset{}{x^n} \cap D_m(u_1^n)| \left( \frac{p^m}{p^{2r}} \right)^k \\
&\stackrel{(b)}{\leq} \frac{1}{2} \sum_{u_i^n \in \typset{}{x^n}} \left( (p^r-1)\left(\frac{1}{p^r} \right)^k + \sum_{m=1}^{r-1} p^r \left( p^{(r-m)(n-1)} - p^{(r-m-1)(n-1)} \right) \left( \frac{p^m}{p^{2r}} \right)^k  \right)\\
& = \frac{1}{2} |\typset{}{x^n}| \left( (p^r-1)\left(\frac{1}{p^r} \right)^k + \sum_{m=1}^{r-1} p^r \left( p^{(r-m)(n-1)} - p^{(r-m-1)(n-1)} \right) \left( \frac{p^m}{p^{2r}} \right)^k  \right) \label{eq:Deltabound1}
\end{align}
where $(a)$ follows from Lemma \ref{lemma:jointhomoprob} and $(b)$ follows from Lemma \ref{lemma:numu2givenu1}. This expression can be further simplified by noting that $f(m) \triangleq |D_m(u_1^n)| p^{-k(2r-m)}$ is a decreasing function of $m$. Thus, the summation in the parentheses of equation (\ref{eq:Deltabound1}) can be upper bounded by $(r-1)f(1) = (r-1)|D_1(u_1^n)| p^{-k(2r-1)}$. Thus,
\begin{align}
\Delta &\leq \frac{1}{2} |\typset{}{x^n}| \left( p^{r-rk} + (r-1)p^{nr + k + 1 - n - 2rk}\left(1-\frac{1}{p^{n-1}}\right) \right) \\
&\leq \frac{1}{2} |\typset{}{x^n}| p^{r-rk} \left( 1 + (r-1)p^{(r-1)(n-k-1)} \right) \label{eq:Deltabound}
\end{align}

We now bound the quantity $\delta$.
\begin{align}
\delta &= \max_i \sum_{j \sim i} \Bbb{E}(I_j) \\
&= \max_{u_i^n \in \typset{}{x^n}} \sum_{m=1}^r \sum_{u_j^n \in D_m(u_i^n)} P(\phi(u_j^n) = 0) \\
&\stackrel{(a)}{\leq} \max_{u_i^n \in \typset{}{x^n}} \left(p^{r+(r-1)(n-1)}-1 \right) p^{-rk} \\
&\leq  p^{r(n-k) - (n-1)} \label{eq:deltabound}
\end{align}
where $(a)$ follows from equation (\ref{eq:Iiprob}) and the fact the $P_{U|X}$ is a non-redundant distribution. Using these bounds, we can bound the terms involved in equation (\ref{eq:suenineq}).
\begin{align}
\frac{\lambda^2}{8 \Delta} &\geq \frac{|\typset{}{x^n}|^2 p^{-2rk}}{4 |\typset{}{x^n}| p^{r-rk} \left(1 + (r-1)p^{(r-1)(n-k-1)} \right)} \\
&\geq \frac{|\typset{}{x^n}| p^{-r(k+1)}}{4(1+r p^{(r-1)(n-k-1)})} \\
&\geq \frac{|\typset{}{x^n}|}{8r} p^{-(n(r-1)+k+1)} \label{eq:suenterm2}
\end{align}
where the last inequality holds for sufficiently large $n$. The third term in the exponent in equation (\ref{eq:suenineq}) can be bounded as
\begin{equation} \label{eq:suenterm3}
\frac{\lambda}{6 \delta} \geq \frac{|\typset{}{x^n}|}{6} p^{-(n(r-1)+1)}
\end{equation}

Combining equations (\ref{eq:suenterm2}) and (\ref{eq:suenterm3}), we get a bound on the probability of the event $\{\Theta(x^n) = 0\}$ as
\begin{align} \label{eq:suentheta1}
P(\Theta(x^n) = 0) &\leq \exp \left\{ -\min \left( \frac{|\typset{}{x^n}|}{2}p^{-rk}, \frac{|\typset{}{x^n}|}{8r} p^{-(n(r-1)+k+1)}, \frac{|\typset{}{x^n}|}{6} p^{-(n(r-1)+1)} \right) \right\}
\end{align}
As long as each of the terms in the minimizations goes to $\infty$ as $n \rightarrow \infty$, the probability of not finding a jointly typical sequence with $x^n$ in the codebook $\mc{C}$ goes to $0$. Let $x^n \in \typset{}{X}$ be a typical sequence. It is well known \cite{csiszarbook} that for sufficiently large $n$, the size of the set $\typset{}{x^n}$ is lower bounded as
\begin{equation}
|\typset{}{x^n}| \geq 2^{n(H(U|X) - \epsilon_1(\epsilon))}
\end{equation}
where $\epsilon_1(\epsilon) \rightarrow 0$ as $\epsilon \rightarrow 0$. Therefore,
\begin{align}
\frac{|\typset{}{x^n}|}{2}p^{-rk} &\geq \frac{1}{2} \exp_2 \left(n \left[ H(U|X) - \frac{rk}{n} \log p - \epsilon_1 \right] \right) \label{eq:expbound1}\\
\frac{|\typset{}{x^n}|}{8r} p^{-(n(r-1)+k+1)} &\geq \frac{1}{8r} \exp_2 \left( n \left[ H(U|X) - \left( (r-1) + \frac{k+1}{n} \right) \log p - \epsilon_1 \right]  \right) \label{eq:expbound2}\\
\frac{|\typset{}{x^n}|}{6} p^{-(n(r-1)+1)} &\geq \frac{1}{6} \exp_2 \left( n \left[ H(U|X) - \left( (r-1) + \frac{1}{n} \right) \log p - \epsilon_1 \right] \right) \label{eq:expbound3}
\end{align}
For the probability in equation (\ref{eq:suentheta1}) to decay to $0$, we need the exponents of these three terms to be positive. Equation (\ref{eq:expbound1}) gives us the condition
\begin{equation}
\frac{k}{n} \log p^r < H(U|X)
\end{equation}
while equations (\ref{eq:expbound2}) and (\ref{eq:expbound3}) together give us the condition
\begin{equation}
0 < \frac{k}{n} \log p^r < r(H(U|X) - \log p^{r-1})
\end{equation}
Thus, the dimensionality of the parity check matrix satisfies the asymptotic condition
\begin{equation} \label{eq:kcondnslossy}
\lim_{n \rightarrow \infty} \frac{k(n)}{n} \log p^r = \min( H(U|X), r|H(U|X) - \log p^{r-1}|^{+})
\end{equation}
where $|x|^{+} = \max(x,0)$. Combining these results, we see that provided equation (\ref{eq:kcondnslossy}) is satisfied, $P(\Theta(x^n) = 0)$ goes to $0$ double exponentially. We now show that there exists at least one codebook $\mc{C}$ such that the set $A_{\epsilon}(\mc{C})$ has high probability. We do this by calculating the ensemble average of $P(A_{\epsilon}(\mc{C}))$ over all codebooks $\mc{C}$. It follows from equation (\ref{eq:Aepsprob}) that
\begin{align}
\Bbb{E}(P_X(A_{\epsilon}(\mc{C}))) &\geq \sum_{x^n \in \typset{}{X}} P_X(x^n) P(\Theta(x^n \neq 0))\\
&\geq (1-\epsilon_2) \left( 1 - \exp \left\{ - \min \left( \frac{|\typset{}{x^n}|}{2}p^{-rk}, \frac{|\typset{}{x^n}|}{8r} p^{-(n(r-1)+k+1)}, \frac{|\typset{}{x^n}|}{6} p^{-r(n-1)} \right) \right\} \right)
\end{align}
where $\epsilon_2 \rightarrow 0$ as $n \rightarrow \infty$. Thus, as long as equation (\ref{eq:kcondnslossy}) is satisfied, the expected value of $P_X^n(A_{\epsilon}(\mc{C}))$ can be made arbitrarily close to $1$. This implies that there exists at least one homomorphism such that its kernel is a good source code for the triple $(\mc{X},\mc{U},P_{XU})$.

\section{Good Nested Group Codes} \label{sec:goodsrcchproof}
We now show the existence of good nested group codes satisfying Lemma \ref{lemma:goodnestedcodelemma}.  As was remarked in Definition \ref{defi:nestedgroupcodesdefi}, one way to construct a nested group code is to add rows to the parity check matrix of the fine code to get the parity check matrix of the coarse code. Let the random variables $X,Y,U,V,S$ be as given in Lemma \ref{lemma:goodnestedcodelemma}. Let the parity check matrices of the codes $\mc{C}_{11},\mc{C}_{12}$ and $\mc{C}_2$ be $H_{11}, H_{12}$ and $H_2$ respectively. Let their corresponding dimensions be $k_{11} \times n, k_{12} \times n$ and $k_2 \times n$ respectively. In order to ensure nesting, impose the following structural constraints on these matrices.
\begin{equation}
H_{12} = \left[ \begin{array}{c} H_{11} \\ \Delta H_1 \end{array} \right], \quad H_{2} = \left[ \begin{array}{c} H_{12} \\ \Delta H_2 \end{array} \right]
\end{equation}
Let the dimensions $k_{11}, k_{12}$ and $k_2$ satisfy equations (\ref{eq:nestedeq1}) - (\ref{eq:nestedeq3}).

Generate random $H_2, H_{12}$ matrices by constructing the matrices $H_{11}, \Delta H_1, \Delta H_2$ independently by picking entries uniformly and independently from the group $\ringr{}$. From the proofs in Appendices \ref{sec:goodchproof} and \ref{sec:goodsrcproof}, it follows that the codes $\mc{C}_{11}, \mc{C}_{12}$ and $\mc{C}_2$ are with high probability good source and channel codes respectively for the appropriate triples. By union bound, it follows then that there exists a choice of $H_{11}, \Delta H_1$ and $\Delta H_2$ such that the codebook $\mc{C}_2$ is a good channel code and the nested codes $\mc{C}_{11}$ and $\mc{C}_{12}$ are simultaneously good source codes for their respective triples. This proves the existence of good nested group codes as claimed in Lemma \ref{lemma:goodnestedcodelemma}.

\section{Linear Equations in Groups} \label{sec:numsolnssec}

We now present a lemma on the number of solutions over the group $\ringr{}$ for a linear equation in one variable.
\begin{lemma} \label{lemma:numsolns}
Let $a \in p^i \ringr{} \backslash p^{i+1} \ringr{}$ for some $0 \leq i < r$. Then, the linear equation $ax = b$ has a solution in $x$ if and only if $b \in p^i \ringr{}$. In that case, there are $p^i$ distinct solutions for $x$ over the group $\ringr{}$.
\end{lemma}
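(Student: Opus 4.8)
The plan is to exploit the elementary structure of the ring $\mathbb{Z}_{p^r}$, in particular the fact that every element factors as $p^j \cdot u$ with $u$ a unit. Write $a = p^i v$ where $v$ is invertible in $\mathbb{Z}_{p^r}$ (this is exactly what $a \in p^i\mathbb{Z}_{p^r} \setminus p^{i+1}\mathbb{Z}_{p^r}$ means). First I would handle the existence direction: since the image of the map $x \mapsto ax$ is the principal ideal $a\mathbb{Z}_{p^r} = p^i v \mathbb{Z}_{p^r} = p^i \mathbb{Z}_{p^r}$ (using that $v$ is a unit, so $v\mathbb{Z}_{p^r} = \mathbb{Z}_{p^r}$), the equation $ax = b$ has a solution if and only if $b \in p^i\mathbb{Z}_{p^r}$. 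This is the ``only if'' and ``if'' of the first claim simultaneously.

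Next I would count solutions, assuming $b = p^i c$ for some $c \in \mathbb{Z}_{p^r}$. The equation $p^i v x = p^i c$ in $\mathbb{Z}_{p^r}$ is equivalent, lifting to the integers, to $p^i(vx - c) \equiv 0 \pmod{p^r}$, i.e. $vx \equiv c \pmod{p^{r-i}}$. Since $v$ is a unit modulo $p^r$, it is in particular a unit modulo $p^{r-i}$, so this congruence has the unique solution $x \equiv v^{-1}c \pmod{p^{r-i}}$. The number of residues $x$ in $\{0,1,\dots,p^r-1\}$ lying in a fixed class modulo $p^{r-i}$ is exactly $p^r / p^{r-i} = p^i$. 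Hence there are $p^i$ distinct solutions over $\mathbb{Z}_{p^r}$, as claimed. An equivalent bookkeeping is: the solution set is a coset of the kernel $\{x : ax = 0\} = \{x : p^i v x \equiv 0 \pmod{p^r}\} = \{x : x \equiv 0 \pmod{p^{r-i}}\} = p^{r-i}\mathbb{Z}_{p^r}$, which has exactly $p^i$ elements.

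I do not anticipate a genuine obstacle here; the only mild care needed is the passage between arithmetic modulo $p^r$ and modulo $p^{r-i}$ when cancelling the common factor $p^i$ — one must remember that cancelling $p^i$ reduces the modulus rather than leaving it unchanged, which is precisely the source of the $p^i$-fold multiplicity. I would also note the degenerate endpoints for completeness: when $i = 0$, $a$ is a unit and there is a unique solution for every $b$, consistent with $p^0 = 1$; the statement restricts to $i < r$ so the case $a = 0$ does not arise. This lemma is exactly what is invoked in the proof of Lemma \ref{lemma:jointhomoprob} to count, for fixed determinants $m_{k,j^*} \in p^i\mathbb{Z}_{p^r}\setminus p^{i+1}\mathbb{Z}_{p^r}$, the number of admissible $\alpha_{j^*}$, so the $p^i$ count is the operative quantity.
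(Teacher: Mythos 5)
Your proof is correct, and it is somewhat cleaner than the paper's. Both arguments hinge on the same two facts — that the kernel of $x \mapsto ax$ is $p^{r-i}\ringr{}$ (giving the $p^i$-fold multiplicity) and that the image is exactly $p^i\ringr{}$ (giving the solvability condition) — but you establish them directly from the unit factorization $a = p^i v$, whereas the paper never writes $a$ this way. The paper instead shows the solution set through a given $x_1$ is a coset of $p^{r-i}\ringr{}$ (asserting, without explicitly invoking cancellation of the unit, that any two solutions differ by an element of $p^{r-i}\ringr{}$), and then determines the image by a counting argument: since every fiber has size exactly $p^i$ and the domain has $p^r$ elements, the image has $p^{r-i}$ elements, which forces it to be all of $p^i\ringr{}$ since it is contained there. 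Your route of writing $a\ringr{} = p^i v\ringr{} = p^i\ringr{}$ and passing to the congruence $vx \equiv c \pmod{p^{r-i}}$ makes the cancellation-of-$p^i$ step explicit and avoids the indirection of the counting argument; the paper's route has the minor virtue of not needing the unit decomposition as a named ingredient, but arguably leaves the ``conversely, $x_1 - x_2 \in p^{r-i}\ringr{}$'' step less transparent. Either argument supports the invocation in Lemma \ref{lemma:jointhomoprob}.
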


\begin{proof}[\textbf{Proof}:] It is clear that the equation $ax = b$ cannot have a solution if $ b \notin p^i \ringr{}$. The rest of the proof proceeds in two stages. We first show that if there exists at least one solution to the equation $ax = b$, then there exists $p^i$ distinct solutions. We then show that at least one solution exists for every $b \in p^i \ringr{}$. Together, these imply Lemma \ref{lemma:numsolns}.

Suppose there exists at least one solution $x_1$ to the equation $ax =
b$. Then, for any $t \in p^{r-i} \ringr{}$, $x_1 + t$ is also a
solution and all such solutions are distinct. Conversely, if $x_1,
x_2$ are both solutions, then $x_1 - x_2 \in p^{r-i} \ringr{}$. Thus,
existence of at least one solution implies the existence of exactly
$p^i$ solutions. Now consider the number of distinct values of the set
$\{ax \colon x \in \ringr{}\}$. Since every distinct value repeats
itself exactly $p^i$ times and there are $p^r$ elements in this set,
it follows that the number of distinct values is $p^{r-i}$. This is
exactly the size of the subgroup $p^i \ringr{}$ which implies that $ax
= b$ has exactly $p^i$ solutions for every element $b \in p^i
\ringr{}$.
\end{proof}

\section{$\mc{T}$ is non-empty} \label{sec:Tnonempty}

Recall the definition of $\mc{T}$ from Section \ref{sec:codingthm} as $\mc{T} = \{ A \colon A \mbox{ is abelian}, |\mc{G}| \leq |A| \leq \alpha \beta, \, G(U,V) \subset A \mbox{ with respect to } P_{UV}\}$. Let $|\mc{U}| = \alpha, |\mc{V}| = \beta$. We now show that the function $G(U,V)$ can always be embedded in some abelian group belonging to $\mc{T}$. Consider the function $G_1(U,V) = (U,V)$. Clearly, $G_1(U,V) \subset \ring{\alpha} \oplus \ring{\beta}$ with respect to $P_{UV}$ for any distribution $P_{UV}$. Since there is an obvious surjective mapping between the functions $G_1(U,V)$ and $G(U,V)$, it follows from Definition \ref{defi:embeddingdefi} that $G(U,V) \subset \ring{\alpha} \oplus \ring{\beta}$ with respect to $P_{UV}$. Since $|\ring{\alpha} \oplus \ring{\beta}| = \alpha \beta$, it follows that this group belongs to the set $\mc{T}$ and hence $\mc{T}$ is always non-empty.

\end{document}